\documentclass{lmcs}
\usepackage{amsmath,amssymb}
\usepackage{enumitem}
\usepackage{booktabs,array,longtable}
\usepackage{hyperref}

\newcommand{\N}{\mathbb{N}}
\newcommand{\PC}{\mathcal{PC}}
\newcommand{\Pt}{\widetilde{P}}
\newcommand{\Str}{\mathrm{Str}}
\newcommand{\Exp}{\mathrm{Exp}}
\newcommand{\Lam}{\Lambda}
\newcommand{\emptyfn}{\varnothing}
\newcommand{\Sig}[1]{\Sigma^0_{#1}}
\newcommand{\Pii}[1]{\Pi^0_{#1}}
\newcommand{\Dl}[1]{\Delta^0_{#1}}
\newcommand{\restr}{\!\restriction\!}
\newcommand{\mred}{\le_m}
\newcommand{\Disc}{\mathrm{Disc}}
\newcommand{\Rinst}{\mathcal{R}_{\mathrm{inst}}}
\newcommand{\TOT}{\mathrm{TOT}}
\newcommand{\U}{\mathcal{U}}

\begin{document}
\title[Rice's theorem under self-modification]{Rice's Theorem under Self-Modification: Elevation Operators and a Normal Form}
\author[J.~P.~Gumbau Mezquita]{Jose Pascual Gumbau Mezquita\lmcsorcid{0009-0005-5823-324X}}
\address{Universitat Jaume I, Castell\'o de la Plana, Spain}
\email{gumbau@uji.es}
\subjclass{Theory of computation $\to$ Computability; Theory of computation $\to$ Recursive functions; Theory of computation $\to$ Logic and verification}
\keywords{Rice's theorem, Rice--Shapiro theorem, self-modifying programs, index sets, arithmetical hierarchy, safety properties, runtime monitoring, normal form}

\begin{abstract}
We ask whether it can be certified algorithmically that a self-modifying program keeps a behavioural property, a safety property in the motivating case, after its next rewrite (preservation) and along its whole evolution (persistence). When the rewrite depends only on behaviour, preservation is a behavioural property and Rice's theorem applies. When the rewrite reads the code, preservation is no longer behavioural; yet, under a uniform disruption condition, the s-m-n reduction that proves Rice's theorem works inside a single class of behaviourally identical programs, and preservation inherits the degree of the halting problem. One step never exceeds the degree of the property, while persistence can climb one level of the arithmetical hierarchy. We then isolate the mechanism shared by rewriting, supervision and system comparison, the elevation operator, and prove a normal form: the preserving set is determined by a single finite trigger and a polarity, and the Rice-Shapiro theorem restricts the polarity to the arithmetical class of the property. Runtime monitors, consistency supervision, conformance to a reference and observational equivalence are instances, and no sound theory covers the preserving systems.
\end{abstract}
\maketitle

\section{Introduction}\label{sec:intro}

\subsection{The question}
Verifying the safety of a computational system is, formally, deciding whether the system satisfies a semantic property of its behaviour. For fixed systems the limits are classical: Rice's theorem states that every non-trivial semantic property is undecidable. But the systems that motivate this work are not fixed ---they are updated, retrained, rewritten by themselves--- and for them the relevant question is not static but dynamic: not ``is this system safe?'', but ``will this system remain safe once it modifies itself?''. Recursive self-modification, in which a system inspects and rewrites its own code, makes the question unavoidable. The systems that raise it most acutely ---autonomous agents, self-improvement loops, models that retrain themselves--- are instances of the framework studied here; the consequences for their safety verification are developed in a companion applied paper \cite{Gum26}, from which came the intuition that this article makes precise: that the dynamic question is ``Rice one level up'', and that the level up is intensional.

\paragraph{Terminology.} The motivation is the verification of safety, and the word is kept where it is technical: a \emph{safety property} in the sense of Alpern and Schneider, one whose violation is witnessed by a finite event and cannot be undone (Section~\ref{sec:topo}). Everywhere else the results hold for an arbitrary behavioural property $P$, including search-type properties that are not safety properties, and we speak neutrally: $P$ \emph{holds}, is \emph{preserved} across one step, and \emph{persists} along the whole trajectory. Safety is the motivating instance, not the scope.

\subsection{What is proved}\label{sec:proved}
We formalise one step of self-modification as a total computable transformation $\Phi$ on indices, and the dynamic question as the \emph{elevated property} $\Lam_\Phi(P)=\{x\in P:\Phi(x)\in P\}$: ``$P$ holds now and after the step''. The results are of four kinds.

\emph{Closure and engine} (Sections~\ref{sec:operator}--\ref{sec:ext}). When $\Phi$ is extensional ---it depends only on the computed function--- $\Lam_\Phi(P)$ is still a behavioural property and Rice governs it. When $\Phi$ is intensional ---it reads the code--- $\Lam_\Phi(P)$ is no longer behavioural and Rice no longer applies; but under an explicit condition, \emph{uniform disruption} (an inert wrapper that encodes $K$), undecidability does not vanish: the same s-m-n reduction that proves Rice works inside a single behavioural fibre, where Rice cannot look, and $\Lam_\Phi(P)$ inherits the halting degree. We exhibit an effective class of rewrites that do this (syntactic instrumentation), and place the others in an illustrative taxonomy with an upper bound: one elevation step never exceeds the degree of $P$. The two regimes, extensional and intensional, turn out to be the two cases of a single \emph{pullback} of Rice along $\Phi$ (Section~\ref{sec:principle}): Rice always applies to the target, and Kleene's s-m-n theorem is the vehicle that carries the undecidability down to the letter of the original code. One consequence is worth stating from the outset: since $\Phi$ acts at compile time, the dynamic question of one step is a \emph{static} question about the compiled program $\Phi(x)$, and the whole article is also a study of step $0$: of static intensional properties of the form $P\cap\Phi^{-1}(P)$.

\emph{Deepening} (Section~\ref{sec:closure}). Unbounded iteration of the step ---``$P$ holds forever''--- does not relax the barrier but moves it: there are transformations for which $\omega$-persistence is $\Pii2$-complete, and the phenomenon does not need intensionality (the extensional shift $\varphi_x(z+1)$ already reaches it). One step is a single finite event; the limit is infinitely many; and no finite number of iterations leaves the class of $P$.

\emph{Normal form within a class} (Section~\ref{sec:normal}). We isolate the mechanism common to rewriting, supervision and equivalence between systems: the \emph{semantic elevation operator}, a computable mechanism that wraps a base system and reacts to a single finite event ---a $\Sig1$ trigger--- anchored to $K$, with one of two \emph{polarities}: entering the property when the trigger fires (search) or leaving it (safety). For this class we prove a Representation Theorem: the elevated property is $P\cap S_a$ or $P\setminus S_a$, determined by the trigger and the polarity and by nothing else; it inherits $K$ or $\overline K$; the untriggered region is not recursively enumerable. The polarity is not a design choice: the Rice--Shapiro theorem restricts it according to the arithmetical class of $P$ (one can only enter a $\Sig1$ property and only leave a $\Pii1$ one; outside these classes one or both may be possible), and a two-line degree argument confirms both obstructions. The form has a topological explanation ---the trigger is an open set of the trace space, in the sense of Alpern--Schneider--- and the monitor that simulates and fires on violation semidecides it. The class is, moreover, complete: up to the elevated property, every mechanism with an anchored normal form is one of its operators, and the polarity restriction holds for all such mechanisms (Proposition~\ref{prop:complete}). Four structurally different axes ---functional, deductive, conformance to a reference, and monitoring of a forbidden action--- are verified instances, and equivalence between two arbitrary systems ---an operator with its own base--- instantiates the same scheme and recovers the normal form when the trigger is confined; runtime-verification monitors, the G\"odel Machine and Levin search are illustrations.

\emph{Regress} (Section~\ref{sec:sup}). Delegating verification to supervisors does not lower the problem: undecidability belongs to the set, not to the architecture, and a sufficiently general supervisor reproduces it at its own level. A supervisor with bounded resources decides a recursive set and errs on a set of systems that is itself undecidable.

\subsection{What is not claimed}
This article does not claim that \emph{every} computable governance mechanism has the normal form: there are total computable operators outside the class, reacting to infinitely many events, and their study is an open problem. It does not claim that undecidability is always of degree $K$: the anchor requires it by definition, and that every natural mechanism satisfies it is a conjecture. It does not claim that classical results ---program equivalence, incompleteness, Diophantine problems--- are corollaries of the framework: they are at best instances or triggers, and their undecidability is imported. It does not claim that diagonalisation is avoided: it is amortised, done once for the whole class in the anchor axiom. And it does not claim that the class has an intrinsic characterisation: we exhibit $K$-hard intensional operators outside it ---naturally, unbounded iteration and comparison between two arbitrary systems, which keeps the hardness and, when confined, the form; artificially, passive hardness (without an anchor) and mixed polarity (with one)--- and we state the problem of characterising the class fibre by fibre (Section~\ref{sec:open}).

\paragraph{The organising pattern.} All the results instantiate a single pattern, stated in Section~\ref{sec:principle} as a remark and not as a theorem: every computable governance of a process reacts to a finite event; what it can certify is an open set; when the event is anchored to the system's own dynamics, what lies outside the open set is $K$; and, iterated, the pattern climbs the hierarchy. Its value is one of localisation: it says exactly where the total algorithmic guarantee breaks.

\subsection{Structure}
Section~\ref{sec:prelim} fixes the framework, including Rice--Shapiro. Section~\ref{sec:operator} defines the operator, uniform disruption, the instrumentation class and the taxonomy of non-disruptive transformations. Section~\ref{sec:closure} proves the closure theorem and $\Pii2$-completeness. Section~\ref{sec:ext} treats the extensional case. Section~\ref{sec:sup}, the supervision regress. Section~\ref{sec:normal} establishes the general definition, the Representation Theorem, realisability via Rice--Shapiro, the topological reading, the four axes and the illustrations. Section~\ref{sec:principle} proves the pullback lemma, states the organising pattern and proves that no sound theory covers the preserving systems. Section~\ref{sec:open} collects the open problems. Appendix~A places some mechanisms from the literature with respect to the definition.

\section{Preliminaries}\label{sec:prelim}

Fix an acceptable G\"odel numbering $\{\varphi_e\}$ of the partial computable functions \cite{Rog67,Soa87}. We write $\varphi_e(x)\!\downarrow$ if the computation halts and $\varphi_e(x)\!\uparrow$ otherwise, and $\emptyfn$ for the nowhere-defined function. Let $K=\{x:\varphi_x(x)\!\downarrow\}$ be the self-halting set, which is $\Sig1$-complete. We write $T(e,x,s)$ for Kleene's predicate \cite{Kle43} (``$s$ codes a halting computation of $\varphi_e$ on $x$''), so that $\varphi_e(x)\!\downarrow\iff\exists s\,T(e,x,s)$, and $\tau_e(x):=\mu s\,T(e,x,s)$ for the \emph{clock}: the least code of a halting computation. Since $s$ codes the whole computation, input included, $\tau_e(x)\ge x$ whenever $\tau_e(x)<\infty$; hence $\{z:\tau_e(z)<\sigma\}\subseteq\{z<\sigma\}$ is finite for every $\sigma<\infty$. This is a property of the standard coding (the code determines the input), not of the machine model. When we speak of ``steps'' in the sense of resources (Section~\ref{sec:sup}) we say so explicitly; ``time'' always means the clock $\tau$, never physical time.

A set $\Pt$ of partial computable functions is \emph{extensional} (or \emph{behavioural}) if it is closed under equality of functions: $\varphi_a=\varphi_b\Rightarrow(\varphi_a\in\Pt\Leftrightarrow\varphi_b\in\Pt)$. Its \emph{index set} is $P=\{e:\varphi_e\in\Pt\}$. We write $\PC$ for the set of all partial computable functions, and say that $\Pt$ is \emph{non-trivial} if $\emptyset\subsetneq\Pt\subsetneq\PC$.

\begin{thmC}[Rice, \protect\cite{Ric53}]\label{thm:rice}
If $\Pt$ is a non-trivial behavioural property, then $P$ is undecidable.
\end{thmC}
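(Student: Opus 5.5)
The plan is the classical s-m-n reduction from $K$ to $P$ or to its complement, carried out with the notation fixed above. First I normalise: replacing $\Pt$ by its complement $\PC\setminus\Pt$ if needed, I may assume $\emptyfn\notin\Pt$. This costs nothing, because the complement of a non-trivial behavioural property is again non-trivial and behavioural, and its index set $\overline P$ is decidable iff $P$ is. Non-triviality then gives some $\psi=\varphi_a\in\Pt$, and $\psi\neq\emptyfn$ necessarily.

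Next I build the reduction. Consider the partial computable function of two arguments
\[
g(x,z)=\begin{cases}\varphi_a(z) & \text{if } \varphi_x(x)\!\downarrow,\\ \uparrow & \text{otherwise,}\end{cases}
\]
which is computed by first running $\varphi_x(x)$ (searching for $s$ with $T(x,x,s)$) and then, if that halts, running $\varphi_a(z)$. By the s-m-n theorem there is a total computable $h$ with $\varphi_{h(x)}(z)=g(x,z)$. If $x\in K$, then $\varphi_{h(x)}=\varphi_a\in\Pt$, so $h(x)\in P$. If $x\notin K$, then $\varphi_{h(x)}=\emptyfn\notin\Pt$, so $h(x)\notin P$. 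Here extensionality is exactly what lets me pass from equality of the computed functions to membership in $P$. Hence $K\mred P$ via $h$.

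To conclude, $K$ is undecidable by the standard diagonal argument (it is $\Sig1$-complete, as recalled above). So $P$ cannot be decidable, and by the normalisation step neither can the original index set.

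There is no real obstacle here. The only point requiring care is the case split on whether $\emptyfn\in\Pt$, which is handled by passing to complements. The remaining detail is that $g$ is partial computable, and this is immediate from the universal function together with Kleene's $T$ predicate.
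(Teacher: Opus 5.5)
Your proof is correct, and it is the standard s-m-n reduction from $K$ that the paper has in mind: the paper cites Rice's theorem without proof, but notes that s-m-n is the engine of its standard proof. Your $h$ is exactly the conditional synthesiser $\mathrm{inj}(t,\cdot)$ of Definition~\ref{def:rinst}, and your passage to the complement when $\emptyfn\in\Pt$ corresponds to the paper's use of the negative polarity, $\overline K\mred P$.
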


\begin{thmC}[s-m-n, \protect\cite{Kle38}]\label{thm:smn}
There is a total computable function $s$ such that $\varphi_{s(a,x)}(z)=\varphi_a(\langle x,z\rangle)$ for all $a,x,z$.
\end{thmC}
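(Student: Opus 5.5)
The statement to prove is the s-m-n theorem in its simplest form. I will construct $s$ explicitly from the universal machine, and then use acceptability to return to the fixed numbering.

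\emph{Step 1: the standard numbering.} Work first in a standard numbering, for instance Turing machines coded as finite strings, with $\psi_e$ the function computed by machine $e$. Given a code $a$ and a number $x$, the function $s(a,x)$ outputs the code of a new machine $M_{a,x}$. On input $z$, $M_{a,x}$ does the following:
\begin{enumerate}[label=(\roman*)]
\item it writes the constant $x$ on the tape, using a hard-wired block of $O(x)$ states;
\item it computes the pairing $\langle x,z\rangle$ with a fixed machine for the primitive recursive pairing function;
\item it hands control to the states of machine $a$.
\end{enumerate}
The composition is a concatenation of codes. The prefix for $x$ is obtained by iterating a fixed ``write a digit'' gadget. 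Hence $(a,x)\mapsto$ code of $M_{a,x}$ is primitive recursive, in particular total computable. By construction $\psi_{s(a,x)}(z)=\psi_a(\langle x,z\rangle)$, with both sides undefined together, because the prefix always halts.

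\emph{Step 2: transfer to $\{\varphi_e\}$.} Acceptability of $\{\varphi_e\}$ gives total computable translations $f,g$ with $\varphi_e=\psi_{f(e)}$ and $\psi_e=\varphi_{g(e)}$. Set
\[
s'(a,x)=g\bigl(s_\psi(f(a),x)\bigr).
\]
Then
\[
\varphi_{s'(a,x)}(z)=\psi_{s_\psi(f(a),x)}(z)=\psi_{f(a)}(\langle x,z\rangle)=\varphi_a(\langle x,z\rangle),
\]
and $s'$ is total computable as a composition of total computable maps. If one takes ``acceptable'' to mean directly ``has a universal function and satisfies s-m-n'', this step is vacuous. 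Since the paper fixes an acceptable numbering through \cite{Rog67,Soa87}, I will state which convention is used.

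\emph{Main obstacle.} The only delicate point is Step 1. There I must check that the code transformation is genuinely effective and uniform in $x$; in particular the hard-coding of $x$ must not depend on non-effective choices. I also need the prefix to be total, so that partiality is preserved exactly. I would verify this by exhibiting the code of $M_{a,x}$ as an explicit primitive recursive term in $a$ and $x$.
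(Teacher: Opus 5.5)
Your proof is correct. It supplies something the paper does not: the paper gives no proof and simply quotes s-m-n from Kleene as background. Under the usual characterisation of acceptable numberings (a computable universal function plus s-m-n, which is equivalent to Rogers's two-way translation definition), the statement is essentially part of the standing hypothesis ``fix an acceptable numbering''. Under that convention it is the whole statement that is vacuous, not just your Step~2.

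Your route starts instead from the translation definition. You build a primitive recursive $s_\psi$ in the standard machine numbering and conjugate it by the translations, $s'(a,x)=g(s_\psi(f(a),x))$. The chain $\varphi_{s'(a,x)}(z)=\psi_{s_\psi(f(a),x)}(z)=\psi_{f(a)}(\langle x,z\rangle)=\varphi_a(\langle x,z\rangle)$ is right. What remains in Step~1 is routine bookkeeping: relabel the states of machine $a$, and leave the tape in $a$'s standard input format after computing the pair. The citation buys the paper brevity; your argument buys an explicit derivation, with $s$ primitive recursive in the standard numbering.

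One caveat, because the paper later needs more than this statement. Lemma~\ref{lem:inert} uses an s-m-n function that is injective with decidable range. Your Step~1 gives that in the standard numbering, since the code of $M_{a,x}$ lets you read back $a$ and $x$. Step~2 does not transport it: $g$ need not be injective, and even an injective $g$ need not send decidable sets to decidable sets. For that you would need padding inside $\varphi$, or Rogers's isomorphism theorem to take $g$ a computable permutation.
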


The s-m-n theorem is the tool of every reduction in this article, as it is of the standard proof of Rice's theorem: what changes with respect to Rice is not the engine but \emph{where} the reduction lands (inside a behavioural fibre, Section~\ref{sec:operator}). Kleene's recursion theorem \cite{Kle38} ---for every total computable $f$ there is $e$ with $\varphi_e=\varphi_{f(e)}$--- is not used in any proof; we mention it because it is the point of contact with the Lawvere reading of Section~\ref{sec:cat}.

\begin{thmC}[Rice--Shapiro, \protect\cite{Rog67,Soa87}]\label{thm:RS}
Let $\Pt$ be behavioural with recursively enumerable index set $P$. Then, for every $g\in\PC$, $g\in\Pt$ if and only if there is a finite function $\theta\subseteq g$ with $\theta\in\Pt$.
\end{thmC}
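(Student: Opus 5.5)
The plan is the classical two-direction argument, with the s-m-n theorem (Theorem~\ref{thm:smn}) as the engine in both directions. In each direction we assume the conclusion fails and build a uniform family of indices $e(x)$ such that $x\in\overline{K}$ if and only if $e(x)\in P$. Since $P$ is recursively enumerable, this would make $\overline{K}$ recursively enumerable, which is impossible because $K$ is not recursive.

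\emph{Step 1: a total $g\in\Pt$ has a finite subfunction in $\Pt$.} Let $g\in\Pt$ and suppose no finite $\theta\subseteq g$ lies in $\Pt$. By s-m-n, take a total computable $f$ with
\[
\varphi_{f(x)}(z)=\begin{cases} g(z) & \text{if } \neg T(x,x,s) \text{ for all } s\le z,\\ \uparrow & \text{otherwise.}\end{cases}
\]
If $x\notin K$, the first case always holds, so $\varphi_{f(x)}=g$ and $f(x)\in P$. If $x\in K$, then $\varphi_{f(x)}$ is $g$ restricted to $\{z<\tau_x(x)\}$. This is a finite subfunction of $g$, so by hypothesis $f(x)\notin P$. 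Hence $f$ reduces $\overline K$ to $P$, which is the required contradiction. The computation $g(z)$ is run only after the finite check of $T$ on $s\le z$, so the definition is computable whether or not $g$ is total.

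\emph{Step 2: a finite $\theta\subseteq g$ with $\theta\in\Pt$ forces $g\in\Pt$.} Suppose $\theta\in\Pt$, $\theta\subseteq g$, but $g\notin\Pt$. Let $h(x,z)$ do the following. If $z\in\mathrm{dom}\,\theta$, return $\theta(z)$. Otherwise, search for $s$ with $T(x,x,s)$, and once one is found, return $g(z)$. Let $f(x)=s(a,x)$, where $a$ is an index for $h$. If $x\in K$, then $\varphi_{f(x)}=g$, because $\theta\subseteq g$; hence $f(x)\notin P$. If $x\notin K$, then $\varphi_{f(x)}=\theta$, so $f(x)\in P$. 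Again $\overline K\mred P$, a contradiction. The inclusion $\theta\subseteq g$ is exactly what makes the two branches agree on $\mathrm{dom}\,\theta$.

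\emph{Main obstacle.} The delicate point is Step 1 in the case $x\in K$. We must be sure that the truncated function really is a \emph{finite} subfunction of $g$, so that the hypothesis applies to it. For this we use the clock convention from the preliminaries: $\tau_x(x)$ is a definite finite number, and the truncation keeps only the values at $z<\tau_x(x)$, a finite set. A second issue is that $g$ may be partial. This is harmless: the truncated function is still contained in $g$ and is still finite, because its domain lies inside the finite set $\{z<\tau_x(x)\}$.

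Extensionality of $\Pt$ is used throughout. It is what lets us pass from ``$\varphi_{f(x)}=g$'' (or $=\theta$) to membership of the index $f(x)$ in $P$.
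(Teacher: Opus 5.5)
Your proof is correct. It is the standard textbook argument: two s-m-n reductions $\overline K\mred P$, one truncating $g$ at the clock $\tau_x(x)$ and the other switching from $\theta$ to $g$ when $\varphi_x(x)$ halts. The paper does not reprove Rice--Shapiro; it only cites Rogers and Soare, where essentially this argument appears. The one blemish is the word ``total'' in the heading of Step~1: your construction, and your later remark on partial $g$, already handle arbitrary $g\in\Pt$, which is what the theorem requires.
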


We write the arithmetical hierarchy as $\Sig n,\Pii n$. We use that $\TOT=\{x:\varphi_x\text{ is total}\}$ is $\Pii2$-complete, and Rice--Shapiro to bound from above the complexity of index sets and to restrict the form of mechanisms that react to finite evidence (Section~\ref{sec:normal}).

\paragraph{Program transformations.} A \emph{transformation} is a total computable function $\Phi:\N\to\N$, read as one step of self-modification: $\Phi(x)$ is the index of the program obtained from $x$ by one rewrite. A transformation is \emph{extensional} (semantically well defined) if $\varphi_x=\varphi_y\Rightarrow\varphi_{\Phi(x)}=\varphi_{\Phi(y)}$; otherwise it is \emph{intensional}. This article takes the intensional case as its starting point: $\Phi$ may depend on the source code $x$, not only on the function $\varphi_x$ it computes. This is the situation that models real self-modification, where rewriting operates on program syntax. Unlike the classical undecidability of properties such as confluence or termination in term rewriting systems \cite{Ter03}, which operate over restricted alphabets and rules, here the rewriting is intensional over Turing-complete machines \cite{Tur36}: it transforms the source code of a universal system, not terms of a fixed signature. The extensional case (Section~\ref{sec:ext}) and the intensional one (Sections~\ref{sec:operator}--\ref{sec:closure}) turn out to be the two cases of a single transport of the property along $\Phi$ (pullback lemma, Section~\ref{sec:principle}).

Total computable program rewriters have been classified before from the side of enforcement: Hamlen, Morrisett and Schneider \cite{HMS06} call a policy \emph{RW-enforceable} if some trusted rewriter makes every rewritten program satisfy it while leaving, up to an equivalence, the programs that already do; they show, among other things, that program rewriting adds nothing to static analysis when program equivalence is decidable, and that the RW-enforceable policies coincide with no level of the arithmetical hierarchy. Our question is the dual one: not which properties a trusted rewriter can impose, but whether it can be decided that a given, untrusted rewriter ---the self-modification itself--- preserves a property.

\subsection{Decidability of persistence}
This article works with the classical notion of decidability: a property, identified with its index set, is decidable if that set is recursive. We introduce no notion of ``verifiability'' of our own; the results are statements of (un)decidability, and ``verifiable'' is used only as an informal gloss for ``decidable'', in the sense of a total algorithmic decider ---not in the formal-methods sense of ``provable with human assistance''. The object of the article is not the decidability of a static property but that of its persistence under transformation.

\subsection{The class $\U$}\label{sec:U}
The class $\U$ studied in this article is the decidability face of non-verifiability: properties whose index set is not recursive. The resource face (intractability) follows the same unfolding-plus-trigger architecture in a bounded regime, and an instance is given in companion work \cite{Gum26}.

So that the closure theorem is not empty ---every undecidable property is undecidable--- $\U$ is defined by its provenance: $\U$ is the family of non-trivial index sets of behavioural properties, together with their one-step elevations $\Lam_\Phi(P)$ (Section~\ref{sec:operator}) under \emph{uniformly disruptive} transformations $\Phi$ (Definition~\ref{def:unif}, in either polarity). It is not a closure in the strict sense: Definition~\ref{def:unif} is formulated for behavioural $P$, and $\Lam_\Phi(P)$ is not behavioural, so the elevation does not iterate inside $\U$ (iteration is the object of $\Lam^\omega_\Phi$, Section~\ref{sec:closure}). The restriction to disruptive $\Phi$ is necessary: without it a constant $\Phi$ with value outside $P$ would give $\Lam_\Phi(P)=\emptyset$, which is decidable. With it, membership $\Lam_\Phi(P)\in\U$ is immediate by construction, and the substantive content of the closure theorem (Theorem~\ref{thm:closure}) is another: \emph{every member of $\U$ inherits the degree of $K$ or of $\overline K$}, by the same s-m-n reduction as Rice, but inside a single behavioural fibre, where Rice cannot look. The weight of the definition thus shifts to the non-vacuity and effective characterisation of the class of uniformly disruptive $\Phi$ ($\Rinst$, Theorem~\ref{thm:rinst}), and Section~\ref{sec:normal} shows that, for the mechanisms of a precisely defined class (one finite event, one polarity, a fixed witness), the members of $\U$ have a form fixed by the trigger and the polarity.

\paragraph{Dynamics as step $0$.} It is worth fixing from now a reading that Section~\ref{sec:principle} develops. $\Phi$ is total computable and acts at compile time: it reads $x$ and outputs $\Phi(x)$. Hence $\Lam_\Phi(P)=P\cap\Phi^{-1}(P)$ is a \emph{static} set of indices: the question ``will $P$ still hold after the step?'' is the question ``does the program $\Phi(x)$ satisfy $P$?'', a step-$0$ property of the compiled code which, read on $x$, is intensional. The dynamics of one step adds no temporal dimension; it absorbs it into a single finite event in the execution of $\varphi_{\Phi(x)}$. The only dynamics that is not absorbed is unbounded iteration (Section~\ref{sec:closure}). Thus everything this article says about self-modification is also a statement about static intensional properties ---the class $\U$ is a class of non-behavioural index sets--- and the results hold for any question of the form $P\cap\Phi^{-1}(P)$, whether $\Phi$ is read as a rewrite, a compilation, an instrumentation or a translation.

\subsection{Terminology and notation}\label{sec:gloss}

The article uses many terms with a fixed meaning. We collect them, with a one-line definition and the place where they are introduced, so that none has to be inferred from context.

{\footnotesize
\begin{longtable}{@{}>{\raggedright\arraybackslash}p{3.0cm}>{\raggedright\arraybackslash}p{2.3cm}>{\raggedright\arraybackslash}p{6.6cm}>{\raggedright\arraybackslash}p{2.0cm}@{}}
\caption{Terminology and notation.}\label{tab:gloss}\\
\toprule
Term & Symbol & One-line definition & Where \\
\midrule
\endfirsthead
\multicolumn{4}{@{}l}{\emph{Table~\ref{tab:gloss} (continued)}}\\
\toprule
Term & Symbol & One-line definition & Where \\
\midrule
\endhead
\bottomrule
\endlastfoot
behavioural property & $\Pt$, $P$ & set of functions closed under equality; $P$ its index set. ``Semantic'' is an informal synonym & Section~\ref{sec:prelim} \\
syntactic / intensional & --- & depends on the code $x$, not only on $\varphi_x$. ``Extensional'' = depends only on $\varphi_x$ & Section~\ref{sec:prelim} \\
transformation & $\Phi$ & total computable function on indices: \emph{one step} of rewriting. ``Rewrite'' is the informal reading of $\Phi$ & Section~\ref{sec:prelim} \\
one-step elevation & $\Lam_\Phi(P)$ & $\{x\in P:\Phi(x)\in P\}$: ``$P$ holds now and after one step'' & Def.~\ref{def:one} \\
limit elevation ($\omega$-persistence) & $\Lam^\omega_\Phi(P)$ & $P$ holds along the whole trajectory & Def.~\ref{def:omega} \\
fibre & $\{x:\varphi_x=\varphi_t\}$ & class of all indices computing the same function & Section~\ref{sec:operator} \\
(inert) wrapper & $w(e)$ & program that computes $\varphi_t$ and carries $e$ as data in its code; ``inert'' = $\varphi_{w(e)}=\varphi_t$ & Lemma~\ref{lem:inert} \\
extractor & $\delta$ & reads $e$ from the code: $\delta(w(e))=e$ & Def.~\ref{def:rinst} \\
uniform disruption & --- & $\Phi$ with an inert wrapper and $\Phi(w(e))\in P\iff e\in K$ (positive) or $\iff e\notin K$ (negative) & Def.~\ref{def:unif} \\
instrumentation class & $\Rinst(P)$ & transformations extractor + wrapper + synthesiser & Def.~\ref{def:rinst} \\
trigger & $\Str_a(x)$, $S_a$, $\sigma_a(x)$ & $\varphi_a(x)\!\downarrow$; its set (r.e.); the clock at which it fires. The trigger \emph{fires}; a monitor \emph{raises an alarm} & Section~\ref{sec:def} \\
semantic elevation operator & $E$ & total computable function with trigger, polarity, base and anchor & Def.~\ref{def:oes} \\
polarity A (search) & --- & the output \emph{enters} $P$ when the trigger fires & Def.~\ref{def:oes} \\
polarity B (safety) & --- & the output \emph{leaves} $P$ when the trigger fires & Def.~\ref{def:oes} \\
base & $g$ (A), $f$ (B) & function ``before'' (A, $\varphi_g\subseteq\varphi_t$) or ``after'' (B) the trigger & Def.~\ref{def:oes} \\
mix & $\mathrm{mix}_\sigma(t,f)$ & $\varphi_t$ on the inputs finished before $\sigma$, $\varphi_f$ on the rest & Def.~\ref{def:oes} \\
anchor & (W1), (W2) & inert wrapper with $\Str_a(w(e))\iff e\in K$ & Def.~\ref{def:oes} \\
untriggered region & $\Exp(E)$ & $\{x:\neg\Str_a(x)\}$; complement of $S_a$ & Theorem~\ref{thm:main} \\
preserving set & $\Lam_E(P)$ & under polarity B, $P\setminus S_a$, i.e.\ $P\cap\Exp(E)$: the systems that satisfy $P$ and keep it under $E$ & Cor.~\ref{cor:exp} \\
finite event / finite evidence & $[\eta]$ & a finite trace prefix $\eta$; the cylinder containing it; equivalent to $\Sig1$ & Section~\ref{sec:topo} \\
monitor & $M$ & procedure that reads the trace incrementally and may raise an alarm & Prop.~\ref{prop:mon} \\
axis / instance & ob; $(\mathrm{ob},Q,t,a,$ $\pi,\text{base})$ & observation of behaviour that defines $\Pt$; tuple satisfying the definition & Def.~\ref{def:eix} \\
\midrule
compile time & --- & when $\Phi$ or $E$ compute the output index; always halts & Section~\ref{sec:nonvac}, Section~\ref{sec:def} \\
run time & --- & the execution of $\varphi_{E(x)}(z)$; the unbounded search lives here & Section~\ref{sec:def} \\
execute & --- & make a program compute (on the universal machine, or as a subcomputation of another); never ``run'' in any other sense & Section~\ref{sec:prelim} \\
simulate & --- & synonym of execute when the executor is another program or a monitor; used for readability, no content of its own & Section~\ref{sec:nonvac} \\
clock & $\tau_e(x)$ & least code of a halting computation ($\ge x$); ``time'' always means this & Section~\ref{sec:prelim} \\
step (of rewriting) & $\Phi$, $\Phi^k$ & one application of $\Phi$; ``$k$ steps'' = $k$ applications & Def.~\ref{def:one} \\
steps (of machine) & --- & computational resources, only in Section~\ref{sec:sup} (bounded deciders) & Section~\ref{sec:sup} \\
\end{longtable}
}

\section{The Semantic Elevation Operator Iterates Functionally}\label{sec:operator}

The operator $\Lam_\Phi$ of this section is the semantic elevation operator on its functional axis: it elevates the property $P$ to the property ``$P$ is preserved under one step of $\Phi$'', and iterates by composition. Section~\ref{sec:normal} gives the general definition, of which the instrumentation class $\Rinst$ (Section~\ref{sec:rinst}) is the first literal instance; every uniformly disruptive $\Phi$ coincides with an instance on its wrappers.

\begin{defi}[one-step elevation]\label{def:one}
For a property $P$ (an index set) and a transformation $\Phi$, the \emph{one-step elevation} of $P$ under $\Phi$ is
\[
\Lam_\Phi(P):=\{x\in\N : x\in P\wedge\Phi(x)\in P\}.
\]
\end{defi}
In words: $x$ satisfies the elevated property if $P$ holds now ($x\in P$) and still holds after the next rewrite ($\Phi(x)\in P$). We adopt the conjunction ``now and after'' because it composes cleanly under iteration.

\begin{defi}[limit elevation, $\omega$-persistence]\label{def:omega}
The \emph{limit elevation} of $P$ under $\Phi$ is
\[
\Lam^\omega_\Phi(P):=\{x\in\N : \forall k\ge0,\ \Phi^k(x)\in P\},
\]
where $\Phi^k$ is the $k$-th iterate of $\Phi$ ($\Phi^0=\mathrm{id}$).
\end{defi}
In words: no future rewriting trajectory, of any length, breaks $P$. This is the formalisation of persistent alignment.

\begin{prop}[coinductive characterisation]\label{prop:coind}
Let $F_P(S)=P\cap\Phi^{-1}(S)=\{x\in P:\Phi(x)\in S\}$, and define $\Lam^{[k]}_\Phi(P):=F_P^k(\N)$ (the descending chain from $\top=\N$). Then $\Lam^{[k]}_\Phi(P)=\{x:\forall j<k,\ \Phi^j(x)\in P\}$; $F_P$ is co-continuous, hence $\Lam^\omega_\Phi(P)=\bigcap_k\Lam^{[k]}_\Phi(P)=\mathrm{gfp}(F_P)$, the greatest fixed point of $F_P$.
\end{prop}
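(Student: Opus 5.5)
The plan is to prove the three claims in order: the closed form of the finite approximants, co-continuity, and the identification of the limit with the greatest fixed point.

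\emph{Closed form.} I will show $\Lam^{[k]}_\Phi(P)=\{x:\forall j<k,\ \Phi^j(x)\in P\}$ by induction on $k$. For $k=0$ both sides equal $\N$, since the quantifier is vacuous and $F_P^0(\N)=\N$. For the inductive step, write
\[
F_P^{k+1}(\N)=F_P\bigl(F_P^k(\N)\bigr)=\{x\in P:\Phi(x)\in F_P^k(\N)\}.
\]
By the induction hypothesis, $\Phi(x)\in F_P^k(\N)$ means $\Phi^{j}(\Phi(x))=\Phi^{j+1}(x)\in P$ for all $j<k$. Combined with $x=\Phi^0(x)\in P$, this is exactly $\forall j<k+1,\ \Phi^j(x)\in P$. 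The only fact about $\Phi$ used here is that it is a total function, so that $\Phi^{j}\circ\Phi=\Phi^{j+1}$ and preimages are well defined. Computability plays no role.

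\emph{Monotonicity and co-continuity.} $F_P$ is monotone because preimage and intersection with the fixed set $P$ are both monotone. For co-continuity, preimage under a total function commutes with arbitrary intersections, $\Phi^{-1}(\bigcap_i S_i)=\bigcap_i\Phi^{-1}(S_i)$. Intersecting with $P$ also distributes over nonempty intersections, so $F_P(\bigcap_i S_i)=\bigcap_i F_P(S_i)$ for every descending chain, and in fact for every nonempty family. Since $F_P(\N)\subseteq\N$, monotonicity gives by induction that the chain $F_P^k(\N)$ is descending.

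\emph{Limit and greatest fixed point.} From the closed form,
\[
\bigcap_k\Lam^{[k]}_\Phi(P)=\{x:\forall k\,\forall j<k,\ \Phi^j(x)\in P\}=\{x:\forall j,\ \Phi^j(x)\in P\}=\Lam^\omega_\Phi(P).
\]
Call this set $L$. By co-continuity, $F_P(L)=\bigcap_k F_P^{k+1}(\N)=L$, since dropping the first term $\N$ does not change the intersection; so $L$ is a fixed point. For maximality, let $S$ be any post-fixed point, $S\subseteq F_P(S)$. Induction on $k$ gives $S\subseteq F_P^k(\N)$: the base case is $S\subseteq\N$, and if $S\subseteq F_P^k(\N)$ then $S\subseteq F_P(S)\subseteq F_P^{k+1}(\N)$ by monotonicity. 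Hence $S\subseteq L$, so $L$ is the greatest post-fixed point and therefore $\mathrm{gfp}(F_P)$. This is the dual Kleene fixed-point theorem; Knaster--Tarski in the lattice $2^{\N}$ could be cited instead, but the direct argument is shorter.

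No step is a real obstacle. The one point to watch is the index shift $\Phi^j(\Phi(x))=\Phi^{j+1}(x)$ in the inductive step, together with the off-by-one convention that $\Lam^{[k]}$ checks the first $k$ iterates, $j<k$.
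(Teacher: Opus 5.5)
Your proof is correct and takes the route the paper has in mind. The paper writes out no separate proof; the remark after the statement appeals to co-continuity of $F_P$, so that the descending chain from $\N$ reaches the greatest fixed point at stage $\omega$, and to Knaster--Tarski for existence, and your induction on the closed form plus the post-fixed-point argument supply exactly those details, making the Knaster--Tarski citation unnecessary.
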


\begin{rem}[classical Tarski vs.\ effective realisability]
The Tarski--Knaster theorem guarantees that the gfp exists as a set, and co-continuity guarantees that it is reached in exactly $\omega$ steps. This is purely set-theoretic: $\Lam^\omega_\Phi(P)$ is well defined. Theorem~\ref{thm:pi2} shows that, for certain $\Phi$, this well-defined object is not computable: it exists classically but no procedure decides it. This tension has a natural categorical formulation in the effective topos, sketched in Section~\ref{sec:cat}.
\end{rem}

\begin{rem}[intensionality]
Since $\Phi$ is intensional, the function $\varphi_{\Phi(x)}$ is not determined by $\varphi_x$: two indices $x,y$ with $\varphi_x=\varphi_y$ may have $\varphi_{\Phi(x)}\ne\varphi_{\Phi(y)}$. Hence $\Lam_\Phi(P)$ is not, in general, a behavioural property, and Rice does not apply to it: its undecidability requires a proof of its own.
\end{rem}

\subsection{The disruption condition}\label{sec:disr}
For $\Lam_\Phi(P)$ to be undecidable, the disruption of $\Phi$ cannot be an accident of two isolated indices: the reduction from $K$ must build, for each instance $e$, a witness whose fate under $\Phi$ encodes whether $e\in K$.

\begin{defi}[uniform disruption via wrappers]\label{def:unif}
Let $\Pt$ be behavioural and non-trivial and $t$ an index with $\varphi_t\in\Pt$. A transformation $\Phi$ is \emph{uniformly disruptive} with respect to $\Pt$ if there is a total computable function $w:\N\to\N$ (the \emph{wrapper}) such that, for every index $e$:
\begin{enumerate}[label=(\roman*),leftmargin=2.2em]
\item $\varphi_{w(e)}=\varphi_t$ (the wrapper never alters the computed function); and
\item $\Phi(w(e))\in P\iff e\in K$ (\emph{positive} disruption), or $\Phi(w(e))\in P\iff e\notin K$ (\emph{negative} disruption).
\end{enumerate}
\end{defi}

Condition (i) makes the disruption genuinely intensional: all the $w(e)$ compute the same function $\varphi_t$, and yet $\Phi$ sends them into or out of $P$ according to a fact ---the halting of $\varphi_e(e)$--- that depends not on the computed function but on the syntax of the wrapper. So no uniformly disruptive $\Phi$ can be extensional, and the article cannot collapse into a corollary of Rice. Crucially, $w$ is total computable: it always builds the wrapper, without ever deciding whether $e\in K$.

\begin{rem}[the two polarities]
Positive disruption makes the rewrite \emph{enter} $P$ when $\varphi_e(e)$ halts; negative disruption makes it \emph{leave}. Both are needed: for $P\in\Pii1$ ---every safety property--- the positive one is impossible, because $K\mred P$ via $\Phi\circ w$ would make $K$ co-r.e.; for $P\in\Sig1$ the negative one is, by the same argument. Section~\ref{sec:normal} shows that Rice--Shapiro explains \emph{why} each one fails.
\end{rem}

\begin{lem}[uniform reduction inside a fibre]\label{lem:red}
If $\Phi$ is uniformly disruptive with respect to $\Pt$, then $K\mred\Lam_\Phi(P)$ (positive) or $\overline K\mred\Lam_\Phi(P)$ (negative), via $w$. Consequently $\Lam_\Phi(P)$ is undecidable.
\end{lem}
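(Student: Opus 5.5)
The plan is to verify directly that the wrapper $w$ of Definition~\ref{def:unif} is itself the many-one reduction, with no further s-m-n construction needed beyond whatever built $w$. The first step removes the conjunct ``$x\in P$'' on the image of $w$. By condition (i), $\varphi_{w(e)}=\varphi_t$ for every $e$. Since $\varphi_t\in\Pt$ and $P$ is the index set of a behavioural property, we get $w(e)\in P$ for all $e$. Hence, for every $e$,
\[
w(e)\in\Lam_\Phi(P)\iff \bigl(w(e)\in P\wedge \Phi(w(e))\in P\bigr)\iff \Phi(w(e))\in P .
\]
So on the fibre $\{x:\varphi_x=\varphi_t\}$, which contains the whole range of $w$, the elevated property reduces to membership of the rewritten index in $P$.

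The second step splits on the polarity given by condition (ii). In the positive case, $\Phi(w(e))\in P\iff e\in K$. Combined with the display, this gives $e\in K\iff w(e)\in\Lam_\Phi(P)$, and since $w$ is total computable, $K\mred\Lam_\Phi(P)$ via $w$. In the negative case the same chain gives $e\in\overline K\iff w(e)\in\Lam_\Phi(P)$, so $\overline K\mred\Lam_\Phi(P)$ via $w$.

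The third step is the consequence. Suppose $\Lam_\Phi(P)$ were recursive. Composing its characteristic function with $w$ would then decide $K$ in the positive case, and $\overline K$ in the negative case. Either way $K$ would be recursive, contradicting the unsolvability of the halting problem ($K$ is $\Sig1$-complete and not recursive). Hence $\Lam_\Phi(P)$ is undecidable.

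I do not expect a real obstacle; the only point needing care is the first step. Its role is to use behavioural closure of $P$ to discharge ``$x\in P$'' uniformly on the image of $w$. This is exactly where inertness (condition (i)) and $\varphi_t\in\Pt$ are used. Without them, the conjunction in Definition~\ref{def:one} could destroy the equivalence, for example if some $w(e)$ fell outside $P$. I would also note that neither step uses that $\Lam_\Phi(P)$ is behavioural: the reduction lives entirely inside one fibre, which is why Rice's theorem is not invoked.
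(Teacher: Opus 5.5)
Your proposal is correct and follows the paper's proof: condition (i) gives $w(e)\in P$ for every $e$, so $w(e)\in\Lam_\Phi(P)\iff\Phi(w(e))\in P$, and condition (ii) then makes $w$ a many-one reduction from $K$ or $\overline K$, with non-recursiveness transferred back along $w$. One small point: $w(e)\in P$ needs only the definition of an index set ($\varphi_{w(e)}=\varphi_t\in\Pt$), not behavioural closure as such.
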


\begin{proof}
By (i), $\varphi_{w(e)}=\varphi_t\in\Pt$ for every $e$, so $w(e)\in P$ unconditionally. By (ii), $\Phi(w(e))\in P$ iff $e\in K$ (resp.\ $e\notin K$). Hence $w(e)\in\Lam_\Phi(P)\iff e\in K$ (resp.\ $e\notin K$). Since $K$ and $\overline K$ are not recursive and $\mred$ preserves non-recursiveness, $\Lam_\Phi(P)$ is not recursive. The reduction is not circular: $w(e)$ never decides $e\in K$; it only builds a wrapper carrying $e$ as data, and it is the (non-)halting of the simulation of $\varphi_e(e)$ ---inside $\Phi(w(e))$, not inside the construction of $w$--- that determines membership in $\Lam_\Phi(P)$.
\end{proof}

\subsection{Non-vacuity and a clarifying contrast}\label{sec:nonvac}
Fix $\Pt=\{\mathrm{id}\}$, the property ``computes the identity function''; note that $\emptyfn\notin\Pt$. By the s-m-n theorem, define a total $w$ such that $\varphi_{w(e)}$ is the program that returns $z$ on input $z$, carrying $e$ embedded as inert syntactic data; then $\varphi_{w(e)}=\mathrm{id}\in\Pt$ for every $e$. Define $\Phi$ as a total computable syntactic rewriter: given $x$, if $x$ does not have the structure of a wrapper $w(e)$, let $\Phi(x)=x$; if $x=w(e)$, let $\Phi(w(e))$ be the index, generated by s-m-n, of a program $y$ that executes the simulation of $\varphi_e(e)$ and, if it halts, returns $z$. Then $\varphi_{\Phi(w(e))}=\mathrm{id}$ if $e\in K$, and $=\emptyfn$ if $e\notin K$: this is condition (ii) of Definition~\ref{def:unif}. $\Phi$ does not execute $\varphi_e(e)$: it is a pure rewriter that reads the syntax of $w(e)$, extracts $e$ and synthesises the code of $y$ by s-m-n ---a compile-time operation, always terminating. The simulation lives in $\varphi_{\Phi(w(e))}$, at run time.

\paragraph{A contrast that does not work.} Consider a $\Phi'$ that rewrites according to a decidable syntactic mark, e.g.\ the parity of the index. This $\Phi'$ is intensional and pointwise disruptive, but not uniformly disruptive: to use it in a reduction from $K$ one would have to produce wrappers of the parity dictated by $e\in K$ ---that is, decide $K$. Intensionality by itself is not enough; it must be able to encode an undecidable fact, and it is the embedded simulation that achieves this.

\subsection{An effective class of disruptive rewrites}\label{sec:rinst}

\begin{defi}[syntactic instrumentation class $\Rinst(P)$]\label{def:rinst}
Let $\Pt$ be non-trivial with $\emptyfn\notin\Pt$, and $t$ an index with $\varphi_t\in\Pt$. A total computable $\Phi$ belongs to $\Rinst(P)$ if it decomposes into three compatible total computable functions: an \emph{extractor} $\delta$ that reads an index $x$ and extracts a parameter $e=\delta(x)$; a \emph{wrapper generator} $w$ with $\delta(w(e))=e$ and $\varphi_{w(e)}=\varphi_t$; and a \emph{conditional synthesiser} $\mathrm{inj}$ (by s-m-n) such that $\Phi(x)=\mathrm{inj}(t,\delta(x))$, where $\varphi_{\mathrm{inj}(t,e)}(z)$ is computed by executing $\varphi_e(e)$ and, if it halts, returning $\varphi_t(z)$; that is,
\[
\varphi_{\mathrm{inj}(t,e)}=\varphi_t\ \text{if } e\in K,\qquad \varphi_{\mathrm{inj}(t,e)}=\emptyfn\ \text{if } e\notin K.
\]
\end{defi}

\begin{rem}[why $\emptyfn$ and not an arbitrary witness]
A synthesiser that produced exactly $\varphi_f$, for an arbitrary $f\notin P$, when $\varphi_e(e)\!\uparrow$ is not realisable: the program can never know that the simulation will not halt, and it cannot retract the answers it has already given. What it can produce when the search does not halt is any $\varphi_g\subseteq\varphi_t$: it answers $\mathrm{dom}\,\varphi_g$ at once and the rest only after the search. Lemma~\ref{rem:BB} gives the exact statement: a switch between two functions, in either direction, is realisable if and only if the ``before'' function is a subfunction of the ``after'' one. $\Rinst$ is defined with $g=\emptyfn$ for simplicity; Section~\ref{sec:normal} generalises it to any base $g\subseteq t$ with $\varphi_g\notin\Pt$. For properties with $\emptyfn\in\Pt$ the negative polarity is needed, which Section~\ref{sec:normal} realises.
\end{rem}

\begin{lem}[inert wrapper]\label{lem:inert}
For every index $t$ there are total computable functions $w$ and $\delta$ with $\varphi_{w(e)}=\varphi_t$ and $\delta(w(e))=e$ for every $e$.
\end{lem}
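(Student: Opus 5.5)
The plan is to build $w$ by the s-m-n theorem (Theorem~\ref{thm:smn}) applied to a program that ignores its parameter, and to obtain $\delta$ from injectivity. First fix an index $b$ of the partial computable function $\langle x,z\rangle\mapsto\varphi_t(z)$; it exists because $\varphi_t$ is partial computable and pairing is computable. Then set $w(e):=s(b,e)$. By Theorem~\ref{thm:smn}, $\varphi_{w(e)}(z)=\varphi_b(\langle e,z\rangle)=\varphi_t(z)$ for all $z$, so $\varphi_{w(e)}=\varphi_t$ and the wrapper is inert. The parameter $e$ sits in the code of $w(e)$ only as data that the program never consults.

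For the extractor, $w$ must be injective, so that $e$ can be read back from $w(e)$. The standard s-m-n function can be taken injective in its second argument (Rogers; in an acceptable numbering one may compose with a padding function). To stay self-contained, I would instead make injectivity explicit. Take a total computable padding function $\mathrm{pad}(i,n)$ that is injective in $n$, with $\varphi_{\mathrm{pad}(i,n)}=\varphi_i$ and $\mathrm{pad}(i,n)>\mathrm{pad}(i,m)$ for $n>m$. Then define $w(e):=\mathrm{pad}(s(b,e),e')$, choosing the padding so that the values are strictly increasing in $e$. A simpler route is to define $w$ recursively: let $w(e)$ be the least index $j>w(e-1)$ with $j=\mathrm{pad}(s(b,e),n)$ for some $n$. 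This is computable because padding yields infinitely many increasing indices for each function.

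Once $w$ is total computable and strictly increasing, its range is decidable. Define $\delta(x)$ as the unique $e\le x$ with $w(e)=x$ if it exists, and $0$ otherwise. This $\delta$ is total computable, since it performs a bounded search using $w(e)\ge e$, and it satisfies $\delta(w(e))=e$.

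The main obstacle is justifying injectivity or padding within an arbitrary acceptable numbering, since the statement says nothing about how $e$ is "carried as data". I would first try to invoke the standard fact that acceptable numberings admit an injective padding function (a consequence of s-m-n applied with dummy arguments, combined with a search). I would derive it briefly. For each $n$, $s(b',\langle i,n\rangle)$, where $\varphi_{b'}(\langle\langle i,n\rangle,z\rangle)=\varphi_i(z)$, yields indices of $\varphi_i$. To get infinitely many distinct ones, I would note that no finite set of indices can contain indices for all $\varphi_i$ uniformly. Alternatively, for this specific lemma, I would observe that $e\mapsto s(b,e)$ must take infinitely many values; otherwise the parameter would not be recoverable. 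I would flag this point as the place where a convention on the numbering (injective s-m-n) is assumed.
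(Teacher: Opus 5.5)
Your construction is the paper's: $w$ comes from s-m-n with padding, so that it is injective with decidable range, and $\delta$ inverts $w$ on that range and takes a fixed value elsewhere. The paper's one-line proof simply takes the injective form of s-m-n from Rogers, exactly as your first remark on injectivity does. Your explicit version is correct: a strictly increasing $w$ obtained by searching along $n\mapsto\mathrm{pad}(s(b,e),n)$, and $\delta$ by bounded search using $w(e)\ge e$. Once a padding function is available, the s-m-n step is even superfluous, since $w(e):=\mathrm{pad}(t,e)$ already works.

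What you should drop is the attempted derivation of padding in your last paragraph; both arguments there fail.
\begin{itemize}
\item The identity $\varphi_{s(b',\langle i,n\rangle)}=\varphi_i$ is compatible with $n\mapsto s(b',\langle i,n\rangle)$ being constant. If $s$ is an s-m-n function, so is the $s'$ that agrees with $s$ except that $s'(b',\langle i,n\rangle):=s(b',\langle i,0\rangle)$.
\item That no finite set of indices contains indices of all the $\varphi_i$ says nothing about how many indices this map assigns to one fixed $\varphi_i$.
\item ``Otherwise the parameter would not be recoverable'' assumes what is to be proved.
\end{itemize}

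Effective padding does hold in every acceptable numbering, but it needs a real argument, for instance Kleene's recursion theorem (with parameters). Given a finite set $F$ of indices of $\varphi_i$, for $b=0,1$ take a fixed point $e_b$ of the map that sends $e$ to an index of $\varphi_i$ if $e\notin F$, and to an index of $\varphi_i$ with its value at $0$ changed to $b$ if $e\in F$. Both $e_0,e_1$ then compute $\varphi_i$. Moreover, $e_b\in F$ forces $\varphi_i(0)=b$, so at most one of them lies in $F$.

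Since you end by flagging padding as an assumed property of the numbering, your proof stands on the same footing as the paper's. Just do not present the sketched derivation as part of it.
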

\begin{proof}
By s-m-n with padding, $w$ can be taken injective with decidable range (the s-m-n function of an acceptable numbering admits this form, \cite[\S1.8]{Rog67}): $w(e)$ is the index of the program that ignores $e$ and executes $t$, and $\delta$, total computable, reads $e$ from the code when the index is in the range of $w$ and takes a fixed value otherwise.
\end{proof}

This is the only construction the anchor axiom of Section~\ref{sec:normal} needs: given any trigger of the form $\varphi_{\delta(x)}(\delta(x))\!\downarrow$, the inertness and coupling conditions hold automatically. The work in each instance is not building the wrapper, but choosing the polarity and the base that the property admits.

\begin{thm}[every syntactic instrumentation is disruptive]\label{thm:rinst}
Every $\Phi\in\Rinst(P)$ is uniformly disruptive with respect to $\Pt$.
\end{thm}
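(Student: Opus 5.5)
The plan is to verify the two clauses of Definition~\ref{def:unif} directly, using the wrapper $w$ supplied by the decomposition of $\Phi$ in Definition~\ref{def:rinst}. The witness $t$ is the same index fixed in both definitions, with $\varphi_t\in\Pt$. The polarity will be the positive one. Uniform disruption requires a total computable $w$ with $\varphi_{w(e)}=\varphi_t$, and this is built into Definition~\ref{def:rinst}. If one prefers not to assume it, Lemma~\ref{lem:inert} produces compatible $w$ and $\delta$ for the index $t$. Clause (i) therefore holds immediately.

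For clause (ii), I compute $\Phi(w(e))$ through the decomposition:
\[
\Phi(w(e))=\mathrm{inj}(t,\delta(w(e)))=\mathrm{inj}(t,e),
\]
using the extractor identity $\delta(w(e))=e$. By the defining property of the synthesiser, $\varphi_{\mathrm{inj}(t,e)}=\varphi_t$ when $e\in K$ and $\varphi_{\mathrm{inj}(t,e)}=\emptyfn$ when $e\notin K$. In the first case $\varphi_{\Phi(w(e))}=\varphi_t\in\Pt$, so $\Phi(w(e))\in P$. Here I use that $P$ is the index set of a behavioural property, so membership depends only on the computed function. In the second case $\varphi_{\Phi(w(e))}=\emptyfn\notin\Pt$, by the standing hypothesis $\emptyfn\notin\Pt$ of Definition~\ref{def:rinst}, so $\Phi(w(e))\notin P$. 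Together these give $\Phi(w(e))\in P\iff e\in K$, which is positive disruption.

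I should also check that the synthesiser is actually realisable, so that $\Rinst(P)$ is not vacuous. By the s-m-n theorem (Theorem~\ref{thm:smn}), applied to the partial computable function $\langle\langle t,e\rangle,z\rangle\mapsto$ ``run $\varphi_e(e)$; if it halts, output $\varphi_t(z)$'', we obtain a total computable $\mathrm{inj}$ with the stated semantics. The case split is then sound: if $\varphi_e(e)$ diverges, every input diverges, giving $\emptyfn$; if it converges, the output agrees with $\varphi_t$ everywhere, including where $\varphi_t$ itself diverges.

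I do not expect a real obstacle, since the proof is an unfolding of the definitions. The only delicate point is bookkeeping. The index $t$ used by $w$ and the index $t$ used by $\mathrm{inj}$ must be the same, and $\varphi_t\in\Pt$ must hold, which is the "compatible" requirement in Definition~\ref{def:rinst}. The hypothesis $\emptyfn\notin\Pt$ is indispensable: without it the non-halting branch would also land in $P$ and the equivalence would fail. As a final step, I would note that Lemma~\ref{lem:red} then yields $K\mred\Lam_\Phi(P)$ via $w$.
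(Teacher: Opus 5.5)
Your proposal is correct and follows the paper's proof: clause (i) comes from the inertness of the wrapper supplied by Definition~\ref{def:rinst}, and clause (ii) from $\Phi(w(e))=\mathrm{inj}(t,\delta(w(e)))=\mathrm{inj}(t,e)$ together with the case split that uses $\emptyfn\notin\Pt$. One small caution about your aside: a fresh pair $(w,\delta)$ from Lemma~\ref{lem:inert} cannot replace the given one, because $\Phi$ is decomposed through its own extractor $\delta$. Your main argument already uses that compatible wrapper, so nothing is lost.
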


\begin{proof}
By the condition on $w$, $\varphi_{w(e)}=\varphi_t\in\Pt$ for every $e$ ---condition (i). By compatibility $\Phi(w(e))=\mathrm{inj}(t,e)$; if $e\in K$, $\varphi_{\Phi(w(e))}=\varphi_t\in\Pt$; if $e\notin K$, $\varphi_{\Phi(w(e))}=\emptyfn\notin\Pt$. So $\Phi(w(e))\in P\iff e\in K$, condition (ii). By Lemma~\ref{lem:red}, $K\mred\Lam_\Phi(P)$.
\end{proof}

\subsection{What non-disruptive transformations do}\label{sec:nondisr}

Uniform disruption is not the only source of undecidability of $\Lam_\Phi(P)$; it is the source of $K$-hardness through a single finite event. It is worth placing it among the others. For every total computable $\Phi$, $\Lam_\Phi(P)=P\cap\Phi^{-1}(P)$ (Section~\ref{sec:principle}, pullback lemma), so the question is always what degree $\Phi^{-1}(P)$ has.

\begin{prop}[one-step upper bound]\label{prop:upper}
For every total computable $\Phi$ and every $P$, we have $\Phi^{-1}(P)\mred P$ via $\Phi$, and hence $\Lam_\Phi(P)\le_T P$. One elevation step never rises above the degree of $P$.
\end{prop}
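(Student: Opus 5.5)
The plan is to read off the many-one reduction directly from the definition of preimage, and then obtain the Turing bound by having one oracle for $P$ answer two questions.

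First, $\Phi^{-1}(P)\mred P$ via $\Phi$ itself. The map $\Phi$ is total computable by the definition of a transformation (Section~\ref{sec:prelim}). By the definition of preimage, $x\in\Phi^{-1}(P)\iff\Phi(x)\in P$ for every $x$. That equivalence is exactly the defining condition of a many-one reduction, so nothing beyond totality and computability of $\Phi$ is used. In particular no property of $P$ (behavioural or otherwise) is needed, which matches the quantification ``every $P$'' in the statement.

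Next, $\Lam_\Phi(P)\le_T P$. By Definition~\ref{def:one}, $\Lam_\Phi(P)=P\cap\Phi^{-1}(P)$. Given $x$ and an oracle for $P$, the procedure is:
\begin{enumerate}[label=(\arabic*),leftmargin=2.2em]
\item query whether $x\in P$;
\item compute $\Phi(x)$, which always halts by totality;
\item query whether $\Phi(x)\in P$;
\item accept if and only if both answers are positive.
\end{enumerate}
This is a total oracle procedure with two queries, so $\Lam_\Phi(P)\le_T P$. It is even a truth-table reduction, though the statement only asks for Turing reducibility.

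To read the conclusion ``one elevation step never rises above the degree of $P$'': a set Turing-reducible to $P$ has degree at most $\deg_T(P)$. If $P$ is, for instance, $\Sig1$ or $\Pii1$, it also suffices to note that $\Lam_\Phi(P)$ is an intersection of two sets in that class, since the class is closed under computable preimages and finite intersections. That refinement is optional; the core claim is only the degree bound.

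I expect no real obstacle. The only point needing care is not to overclaim $\Lam_\Phi(P)\mred P$ in general. A many-one reduction of an intersection would require, for example, $P$ to be closed under a computable pairing-style conjunction, and this is false for trivial edge cases: if $P=\N$ and $\Phi$ is arbitrary, fine, but for $P$ with $P\neq\emptyset$ and $\Lam_\Phi(P)=\emptyset$ it still holds, so edge cases must be checked. For that reason I state only the Turing (or truth-table) bound for $\Lam_\Phi(P)$, and reserve the many-one reduction for $\Phi^{-1}(P)$.
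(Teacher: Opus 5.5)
Your proof is correct and is the paper's argument written out: $\Phi$ itself witnesses $\Phi^{-1}(P)\mred P$, and $\Lam_\Phi(P)=P\cap\Phi^{-1}(P)$ is then decided with two oracle queries to $P$. Two side remarks are loose, though the proof does not depend on them. First, the edge cases in your last paragraph are ones where $\Lam_\Phi(P)\mred P$ actually holds. Your caution is still right: for $P=K\oplus\overline K$ a suitable $\Phi$ makes $\Lam_\Phi(P)$ d.r.e.-complete, hence not $\mred P$. Second, closure of $\Sig1$ or $\Pii1$ under intersection bounds the class of $\Lam_\Phi(P)$, not its degree, unless $P$ is complete for that class.
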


\begin{proof}
$x\in\Phi^{-1}(P)\iff\Phi(x)\in P$, and $\Phi$ is total computable. $\Lam_\Phi(P)$ is the intersection of $P$ with a set $\mred P$.
\end{proof}

\begin{rem}[taxonomy]\label{rem:tax}
Six illustrative families of transformations ---not exhaustive, as the example at the end shows---, ordered by the relation between $\Phi^{-1}(P)$ and $P$:
\begin{enumerate}[label=(\arabic*),leftmargin=2.2em]
\item \emph{Constants} ($\Phi(x)=c$): $\Phi^{-1}(P)\in\{\N,\emptyset\}$, and $\Lam_\Phi(P)\in\{P,\emptyset\}$. No dynamic content.
\item \emph{Extensional} (Section~\ref{sec:ext}): $\Phi^{-1}(P)$ is behavioural; $\Lam_\Phi(P)$ is undecidable by Rice if non-trivial. No intensionality.
\item \emph{Decidable trigger} (Section~\ref{sec:nonvac}, parity): $\Lam_\Phi(P)=P\cap R$ with $R$ recursive. It may be undecidable or decidable depending on $P$ and on the numbering (there are acceptable numberings in which the even indices all index $\emptyfn$); in any case the inheritance is passive: it comes from $P$, not from $\Phi$.
\item \emph{Anchor at an intermediate degree} (Remark~\ref{rem:weak}): $\Phi^{-1}(P)$ non-recursive and not $K$-hard, by Friedberg--Muchnik.
\item \emph{Reaction to infinitely many events} (Remark~\ref{rem:fora}): $\Phi^{-1}(\TOT)=\{x:W_{\delta(x)}\text{ infinite}\}$ with $\delta$ a syntactic extractor, intensional and $\Pii2$-complete. Harder than $K$ in a single step ---but not harder than $\TOT$, by Proposition~\ref{prop:upper}. This family \emph{is} uniformly disruptive (with $h$ such that $W_{h(e)}=\N$ if $e\in K$ and $\emptyset$ otherwise, $w\circ h$ is an inert wrapper that encodes $K$); it lies outside $\mathcal{E}(P)$, not outside Definition~\ref{def:unif}.
\item \emph{Sanitisers} ($\Phi(x)\in P$ for every $x$, e.g.\ a rewriter that always outputs the witness $t$): $\Phi^{-1}(P)=\N$ and $\Lam_\Phi(P)=P$. This is the only family in which rewriting removes the dynamic question, at the price of not preserving the system's function.
\end{enumerate}
The families do not exhaust the transformations: a $\Phi$ that is the identity outside the range of a wrapper and of form (A) on it, with an anchor to an r.e.\ set of intermediate degree, with $P=\TOT$, has $\Phi^{-1}(P)$ $\Pii2$-complete and falls in none of the six. What Definition~\ref{def:unif} isolates is the mechanism ``one finite event anchored to the dynamics of the code itself'', which is what real systems do and what Section~\ref{sec:normal} studies. The general classification of transformations by the degree of $\Phi^{-1}(P)$ relative to $P$ remains open (Section~\ref{sec:open}). Proposition~\ref{prop:upper} closes from above what uniform disruption closes from below, and explains why the jump to $\Pii2$ in Theorem~\ref{thm:pi2} is only possible by iteration: $P_0$ is $\Sig1$, and no isolated step can leave it.
\end{rem}

\paragraph{Scope of $\Rinst$ (what we claim and what we do not).} $\Rinst(P)$ is a natural and effective class of intensional rewrites. We do not claim that every computable rewrite lies in $\Rinst$: function-preserving optimisations without instrumentation, refactorings, and transformations that change the computed function do not. $\Rinst$ is an existential witness, not a totality: it establishes that there are effective and realistic rewrites under which persistence is undecidable. What does follow from Theorem~\ref{thm:rinst} is that, \emph{given} an instrumented system $w(e)$, deciding whether the rewrite keeps it in $P$ is equivalent to deciding $K$. Whether it is decidable if an \emph{arbitrary} $\Phi$ is of the disruptive kind is a different question, which we leave open (Section~\ref{sec:open}).

\section{The Closure Theorem}\label{sec:closure}

\begin{thm}[closure under one-step elevation]\label{thm:closure}
Let $P$ be the index set of a non-trivial behavioural property, and let $\Phi$ be uniformly disruptive with respect to $\Pt$. Then $\Lam_\Phi(P)$ is undecidable and, more precisely, $K\mred\Lam_\Phi(P)$ or $\overline K\mred\Lam_\Phi(P)$. Moreover, $\Lam_\Phi(P)$ is not a behavioural property.
\end{thm}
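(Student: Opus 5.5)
The first two claims come straight from Lemma~\ref{lem:red}. Since $\Phi$ is uniformly disruptive with respect to $\Pt$, that lemma already gives $K\mred\Lam_\Phi(P)$ in the positive case and $\overline K\mred\Lam_\Phi(P)$ in the negative case, both via the wrapper $w$. Because $K$ and $\overline K$ are non-recursive and $\mred$ preserves non-recursiveness, $\Lam_\Phi(P)$ is undecidable. I would restate the reduction in one line: $w(e)\in P$ always holds by (i), so $w(e)\in\Lam_\Phi(P)\iff\Phi(w(e))\in P$, and by (ii) this holds iff $e\in K$ (respectively $e\notin K$). This settles everything except non-behaviourality.

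The substantive part is that $\Lam_\Phi(P)$ is not behavioural. The plan is to exhibit two indices in the same fibre, one inside the set and one outside. Every $w(e)$ lies in the fibre $\{x:\varphi_x=\varphi_t\}$ by condition (i). I pick $e_1\in K$ and $e_0\notin K$; both exist, for instance an index of a total function for $e_1$ and an index of $\emptyfn$ for $e_0$, using that $K$ is neither empty nor everything. In the positive case $w(e_1)\in\Lam_\Phi(P)$ and $w(e_0)\notin\Lam_\Phi(P)$; the negative case is the same with the roles swapped. So $\varphi_{w(e_0)}=\varphi_{w(e_1)}$, yet exactly one of them belongs to $\Lam_\Phi(P)$, and the set is not closed under equality of computed functions.

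The only step needing care is the appeal to the existence of $e_0\notin K$ and $e_1\in K$. This is standard: the diagonal argument shows $K\ne\N$, and any index of a total function lies in $K$. Alternatively it follows abstractly, since $K$ non-recursive forces $\emptyset\ne K\ne\N$. Nothing else is delicate. The theorem is essentially a repackaging of Lemma~\ref{lem:red}, and the non-behaviourality clause makes precise the remark after Definition~\ref{def:unif} that the reduction lands inside a single fibre, where Rice's theorem cannot apply.
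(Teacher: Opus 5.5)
Your proposal is correct and follows the paper's proof: the reduction is taken from Lemma~\ref{lem:red}, and non-behaviourality comes from two wrappers $w(e)$ and $w(e')$ with $e\in K$ and $e'\notin K$, which compute the same function $\varphi_t$ by (i) but are separated by (ii). Your explicit witnesses (an index of a total function lies in $K$, an index of $\emptyfn$ does not) are a sound extra detail that the paper leaves implicit.
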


\begin{proof}
The reduction is Lemma~\ref{lem:red}. For non-behaviourality, take $e\in K$ and $e'\notin K$: by (i) $\varphi_{w(e)}=\varphi_{w(e')}$, and by (ii) exactly one of the two lies in $\Lam_\Phi(P)$.
\end{proof}

\begin{rem}
Since $\U$ is defined as the family of elevations under such $\Phi$ (\S\ref{sec:U}), membership $\Lam_\Phi(P)\in\U$ is immediate by construction; the content of the theorem is that every member of $\U$ inherits the degree of $K$ or of $\overline K$, and \emph{how} it does so: once extensionality is lifted, Rice no longer applies ---$\Lam_\Phi(P)$ is not behavioural--- but the same s-m-n reduction that proves Rice keeps working, now inside a single behavioural fibre $\{x:\varphi_x=\varphi_t\}$. The engine does not change; the scale at which it looks does.
\end{rem}

\begin{cor}[application to instrumentation rewrites]\label{cor:rinst}
For every $\Phi\in\Rinst(P)$, $K\mred\Lam_\Phi(P)$: there are effective and realistic self-modifications under which the preservation of $P$ is not decidable. Since these rewrites preserve the function computed by the instrumented system, no decider can distinguish the benign cases from the dangerous ones without deciding $K$.
\end{cor}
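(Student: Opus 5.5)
The corollary follows by chaining Theorem~\ref{thm:rinst} with Theorem~\ref{thm:closure}.

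\emph{Step 1: uniform disruption with positive polarity.} Fix $\Phi\in\Rinst(P)$ with its extractor $\delta$, wrapper $w$ and synthesiser $\mathrm{inj}$. Theorem~\ref{thm:rinst} gives that $\Phi$ is uniformly disruptive with respect to $\Pt$. We also need to check that the disruption is of \emph{positive} polarity. Compatibility gives $\Phi(w(e))=\mathrm{inj}(t,\delta(w(e)))=\mathrm{inj}(t,e)$. This index computes $\varphi_t\in\Pt$ when $e\in K$, and computes $\emptyfn$ when $e\notin K$. Since $\Rinst(P)$ is only defined under the hypothesis $\emptyfn\notin\Pt$, we get $\Phi(w(e))\in P\iff e\in K$.

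\emph{Step 2: the reduction.} Lemma~\ref{lem:red}, equivalently the reduction clause of Theorem~\ref{thm:closure}, now yields $K\mred\Lam_\Phi(P)$ via $w$. The reason is that $w(e)\in P$ holds unconditionally by inertness. This gives the first sentence: the preservation of $P$ under these effective rewrites is not decidable. Effectivity holds because $\delta$, $w$ and $\mathrm{inj}$ are total computable; the latter comes from s-m-n, as noted in Section~\ref{sec:nonvac}.

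\emph{Step 3: the final sentence.} I would read the claim that ``no decider can distinguish the benign cases from the dangerous ones without deciding $K$'' as a restatement of the reduction restricted to the fibre $\{w(e)\}$. All wrappers compute the same function $\varphi_t$, so they are behaviourally indistinguishable. A decider that separated $\{w(e):\Phi(w(e))\in P\}$ from its complement within the range of $w$ would, composed with $w$, decide $K$.

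The only delicate point is this informal sentence: ``preserve the function computed by the instrumented system'' should be read as $\varphi_{w(e)}=\varphi_t$ (condition (i) of Definition~\ref{def:unif}). It should not be read as $\varphi_{\Phi(x)}=\varphi_x$, which fails when $e\notin K$. I would state that reading explicitly; once it is fixed, no obstacle remains.
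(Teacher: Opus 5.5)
Your proposal is correct and follows the paper's route. The corollary is immediate from Theorem~\ref{thm:rinst}, whose proof already checks positive disruption (via $\Phi(w(e))=\mathrm{inj}(t,e)$ and $\emptyfn\notin\Pt$) and then invokes Lemma~\ref{lem:red} to get $K\mred\Lam_\Phi(P)$ via $w$; your reading of the informal last sentence as inertness of the wrapper, $\varphi_{w(e)}=\varphi_t$, rather than $\varphi_{\Phi(x)}=\varphi_x$, is the right one.
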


\begin{cor}[the expressivity barrier is stable]\label{cor:stable}
For every uniformly disruptive $\Phi$, the set $\Lam_\Phi(P)$ belongs to $\U$ and inherits $K$ or~$\overline K$; non-verifiability does not vanish in passing from the static to the dynamic question, but is preserved. Under uniform disruption, non-verifiability is a fixed point of elevation, not an artefact of the static case.
\end{cor}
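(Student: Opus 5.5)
The statement to prove is Corollary~\ref{cor:stable}: for uniformly disruptive $\Phi$, $\Lam_\Phi(P)\in\U$ and $\Lam_\Phi(P)$ inherits $K$ or $\overline K$. The plan is to read it directly off the definition of $\U$ and Theorem~\ref{thm:closure}, in three steps. The only real content is to pin down what ``inherits'' means and make sure it matches what was already proved.

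\emph{Membership in $\U$.} I will recall that $\U$ (\S\ref{sec:U}) is defined by provenance. It consists of the non-trivial behavioural index sets together with their one-step elevations under uniformly disruptive transformations, in either polarity. Definition~\ref{def:unif} presupposes that $\Pt$ is non-trivial and behavioural with a witness $t$. Hence, for $\Phi$ uniformly disruptive with respect to $\Pt$, the set $\Lam_\Phi(P)$ is by construction one of the generating elevations, and $\Lam_\Phi(P)\in\U$. No computation is needed here.

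\emph{Inheritance of $K$ or $\overline K$.} I will invoke Theorem~\ref{thm:closure}, which rests on Lemma~\ref{lem:red}. The wrapper $w$ gives $K\mred\Lam_\Phi(P)$ under positive disruption and $\overline K\mred\Lam_\Phi(P)$ under negative disruption. I will split into these two cases according to the disjunction in clause (ii) of Definition~\ref{def:unif}. In each case I will note that hardness for $K$ (resp.\ $\overline K$) under $\mred$ is exactly what ``inherits'' asserts. I will then add that this hardness is transmitted inside the single fibre $\{x:\varphi_x=\varphi_t\}$, since every $w(e)$ lies in it by clause (i).

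\emph{Stability and the fixed-point reading.} Here I will compare with the static case. By Rice's proof via s-m-n, a non-trivial behavioural $P$ itself satisfies $K\mred P$ or $K\mred\overline P$. By the previous step, $\Lam_\Phi(P)$ again satisfies $K\mred\Lam_\Phi(P)$ or $\overline K\mred\Lam_\Phi(P)$, so the property ``non-recursive, with $K$-hardness in one polarity'' is preserved by the elevation. I will phrase the fixed-point claim as exactly this preservation, not as literal iteration inside $\U$, because $\Lam_\Phi(P)$ is not behavioural (Theorem~\ref{thm:closure}) and so Definition~\ref{def:unif} does not reapply.

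The main obstacle is interpretive rather than technical. ``Inherits'' must mean a lower bound via $\mred$, not an equality of degree. Proposition~\ref{prop:upper} only bounds $\Lam_\Phi(P)\le_T P$ from above, so if $P$ lies above $K$ the elevation need not be exactly of degree $K$. I will therefore state the corollary as a hardness claim, with the complementary upper bound supplied by Proposition~\ref{prop:upper}.
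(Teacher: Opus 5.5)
Your proposal is correct and takes essentially the paper's own route. Membership $\Lam_\Phi(P)\in\U$ is immediate from the provenance definition of $\U$ in \S\ref{sec:U}, and the inheritance of $K$ or $\overline K$ is the wrapper reduction of Lemma~\ref{lem:red} and Theorem~\ref{thm:closure}, as the remark after that theorem states. Your reading of ``inherits'' as $\mred$-hardness, with Proposition~\ref{prop:upper} as the upper bound, and of the ``fixed point'' as preservation rather than iteration inside $\U$, matches the paper's own caveats.
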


\paragraph{Two independent sources.} The undecidability of Theorem~\ref{thm:closure} comes from intensionality: the reduction lands inside a behavioural fibre, where Rice cannot look. The arithmetical jump that follows has a different, independent origin: unbounded iteration along the rewriting trajectory. That it is independent is shown by the fact that it \emph{does not need intensionality}: the shift $\varphi_{\Phi(x)}(z)=\varphi_x(z+1)$ is extensional and gives $\Lam^\omega_\Phi(P_0)=\TOT$ exactly, which is $\Pii2$-complete. The $\Phi$ of Theorem~\ref{thm:pi2} is an intensional version of the same phenomenon ---and is, moreover, uniformly disruptive (Remark~\ref{rem:pi2disr})---, so the two sources can be combined; but neither reduces to the other.

\begin{thm}[limit elevation and $\Pii2$-completeness]\label{thm:pi2}
Let $P_0=\{e:\varphi_e(0)\!\downarrow\}\in\Sig1$. There is a total computable transformation $\Phi$ such that $\Lam^\omega_\Phi(P_0)$ is $\Pii2$-complete.
\end{thm}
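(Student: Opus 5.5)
The plan is to take $\Phi$ to be the shift, realised by s-m-n. Let $\Phi(x)$ be the index, given by Theorem~\ref{thm:smn}, of the program that on input $z$ executes $\varphi_x(z+1)$, so that $\varphi_{\Phi(x)}(z)=\varphi_x(z+1)$. This $\Phi$ is total computable: it only synthesises code and never executes $\varphi_x$. By induction on $k$, $\varphi_{\Phi^k(x)}(z)=\varphi_x(z+k)$. Hence $\Phi^k(x)\in P_0\iff\varphi_x(k)\!\downarrow$, and
\[
\Lam^\omega_\Phi(P_0)=\{x:\forall k\ \varphi_x(k)\!\downarrow\}=\TOT .
\]
Because $\TOT$ is $\Pii2$-complete (Section~\ref{sec:prelim}), the statement follows at once. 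If an intensional witness is preferred, in line with the paragraph preceding the theorem, I would compose this shift with the wrapper construction of Lemma~\ref{lem:inert}. I would still carry out the two bounds separately, so that the proof does not lean entirely on the classical completeness of $\TOT$.

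\emph{Upper bound.} $x\in\Lam^\omega_\Phi(P_0)\iff\forall k\,\exists s\,T(\Phi^k(x),0,s)$. The map $(x,k)\mapsto\Phi^k(x)$ is total computable (iterate $\Phi$ $k$ times), and $T$ is decidable. So the set is $\Pii2$. This part is general: for any total computable $\Phi$ and any $\Sig1$ set $P$, $\Lam^\omega_\Phi(P)=\bigcap_k\Lam^{[k]}_\Phi(P)$ (Proposition~\ref{prop:coind}) is $\Pii2$.

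\emph{Lower bound.} Here I reduce $\TOT\mred\Lam^\omega_\Phi(P_0)$ by the identity, using the trajectory identity $\varphi_{\Phi^k(x)}(0)=\varphi_x(k)$. I would also note why iteration is essential. By Proposition~\ref{prop:upper}, each finite stage $\Lam^{[k]}_\Phi(P_0)$ is a finite intersection of sets $\mred P_0$, hence $\Sig1$. The $\Pii2$-hardness therefore lives only in the limit $\omega$, which matches the remark that no isolated step leaves the class of $P$.

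The main obstacle is confirming that the iterate identity holds \emph{syntactically along the trajectory}, namely that $\Phi^k(x)$ really computes $z\mapsto\varphi_x(z+k)$ for every $k$, including the partial case. Convergence at input $0$ after $k$ steps must correspond exactly to convergence of $\varphi_x(k)$, with no spurious halting introduced by the nested simulations. I would settle this by induction on $k$ using only the defining equation of $\Phi$, since the equation is an equality of partial functions and composes. Everything else is bookkeeping about quantifier form and the classical completeness of $\TOT$.
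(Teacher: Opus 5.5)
Your proof is correct, but it follows a different route from the paper's. Your upper bound is the paper's own: $\forall k\,\exists s\,T(\Phi^k(x),0,s)$ with $(x,k)\mapsto\Phi^k(x)$ total computable.

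For the lower bound you take the extensional shift $\varphi_{\Phi(x)}(z)=\varphi_x(z+1)$. Induction on $k$, as an equality of partial functions, gives $\varphi_{\Phi^k(x)}(0)\simeq\varphi_x(k)$, so $\Lam^\omega_\Phi(P_0)=\TOT$ exactly. The reduction is then the identity, and everything rests on the classical $\Pii2$-completeness of $\TOT$. The paper uses this very shift in Remark~\ref{rem:pi2disr} as its ``extensional variant''.

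For the theorem itself, the paper builds an intensional $\Phi$ instead. By s-m-n it takes a two-parameter family $w(e,k)$ with $\varphi_{w(e,k)}(z)=0$ if $\varphi_e(k)\!\downarrow$ and undefined otherwise. It adds an extractor that recognises $w(e,k)$, and sets $\Phi(w(e,k))=w(e,k+1)$ with $\Phi$ the identity elsewhere. Then $\Phi^k(w(e,0))=w(e,k)$, and $e\mapsto w(e,0)$ reduces $\TOT$ to $\Lam^\omega_\Phi(P_0)$.

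What each route buys:
\begin{itemize}
\item Yours is shorter. It isolates the point the paper stresses, that the $\Pii2$ jump comes from iteration and not from intensionality.
\item The paper's $\Phi$ is also uniformly disruptive (Remark~\ref{rem:pi2disr}). So one transformation witnesses both Theorem~\ref{thm:closure} and Theorem~\ref{thm:pi2}, combining the two sources of difficulty. An extensional $\Phi$ like yours can never be uniformly disruptive: all wrappers compute $\varphi_t$, so an extensional $\Phi$ cannot send them to behaviours that encode $K$.
\end{itemize}

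Your side remark about getting an intensional witness by ``composing the shift with the wrapper of Lemma~\ref{lem:inert}'' is not a construction as written. A wrapper for a fixed $t$ has constant behaviour $\varphi_t$. Composing it with the shift in either order gives a map whose output behaviour is constant, hence still extensional. Real intensionality needs $\Phi$ to read a parameter off the code, as the paper's extractor does on $w(e,k)$. The statement does not require an intensional $\Phi$, so this does not affect your proof.

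Your observation that each finite stage $\Lam^{[k]}_\Phi(P_0)$ is $\Sig1$, via Proposition~\ref{prop:upper} applied to each $\Phi^j$, is correct and matches the paper's closing remark in Remark~\ref{rem:tax}.
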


\begin{proof}
\emph{Upper bound.} With Kleene's predicate $T$, $e\in P_0\iff\exists s\,T(e,0,s)$. By Definition~\ref{def:omega}, $x\in\Lam^\omega_\Phi(P_0)\iff\forall k\,\Phi^k(x)\in P_0\iff\forall k\,\exists s\,T(\Phi^k(x),0,s)$, which is $\Pii2$.

\emph{Lower bound.} We reduce $\TOT$, which is $\Pii2$-complete. By s-m-n define a total $w:\N\times\N\to\N$ with $\varphi_{w(e,k)}(z)=0$ if $\varphi_e(k)\!\downarrow$ and $\uparrow$ otherwise, so that $w(e,k)\in P_0\iff\varphi_e(k)\!\downarrow$. Let $\delta$ recognise $w(e,k)$, and set $\Phi(x)=w(e,k+1)$ if $\delta(x)=(e,k)$, otherwise $\Phi(x)=x$. With $g(e)=w(e,0)$, the iterates are $\Phi^k(g(e))=w(e,k)$, so $g(e)\in\Lam^\omega_\Phi(P_0)\iff\forall k\,\varphi_e(k)\!\downarrow\iff e\in\TOT$. So $\TOT\mred\Lam^\omega_\Phi(P_0)$; combined with the upper bound, $\Lam^\omega_\Phi(P_0)$ is $\Pii2$-complete.
\end{proof}

\begin{rem}[the $\Phi$ of the theorem is disruptive; an extensional variant]\label{rem:pi2disr}
The $\Phi$ above is intensional (the $w(e,k)$ with $\varphi_e(k)\!\downarrow$ all compute the constant $0$ and are sent to images computing different functions) and, in fact, uniformly disruptive with respect to $\Pt_0$: by s-m-n let $h$ satisfy $\varphi_{h(e)}(0)=0$ and $\varphi_{h(e)}(1)\simeq\varphi_e(e)$, and $w'(e):=w(h(e),0)$; all the $w'(e)$ compute $\lambda z.0\in\Pt_0$, and $\Phi(w'(e))=w(h(e),1)\in P_0\iff e\in K$. So the same transformation witnesses Theorem~\ref{thm:closure} (one step) and Theorem~\ref{thm:pi2} (limit). The extensional variant $\varphi_{\Phi'(x)}(z)=\varphi_x(z+1)$ gives $\Lam^\omega_{\Phi'}(P_0)=\TOT$ and shows that the $\Pii2$ jump is a phenomenon of iteration, not of intensionality.
\end{rem}

\begin{rem}[degrees of undecidability]
The jump to $\Pii2$-completeness by limit iteration belongs to the line of extensions of the recursion theorem to higher degrees: Arslanov's completeness criterion \cite{Ars77} characterises the incomplete r.e.\ sets as those for which the fixed-point theorem holds (below $K$ there are fixed-point-free functions). We do not use it; we mention it because it places the jump within degree theory, not because the jump is intensional: Remark~\ref{rem:pi2disr} shows it is due to iteration. In the terms of Section~\ref{sec:normal}, one elevation step is a single finite event ($\Sig1$); unbounded iteration adds a universal quantifier in front ---infinitely many events--- and no normal form with a single trigger captures it (Remark~\ref{rem:fora}).
\end{rem}

\begin{cor}[the expressivity barrier deepens]\label{cor:deep}
There are transformations and $\Sig1$ properties for which there is no point in the rewriting chain at which verification becomes decidable: iterating the elevation moves the barrier up the arithmetical hierarchy, to $\Pii2$-completeness. (For $\Pii1$ properties the limit stays at $\Pii1$: the iteration adds a $\forall$, which is absorbed.) Certifying persistent alignment is, for these transformations, strictly harder than certifying $P$ at one instant, and for two independent reasons: the intensionality of the individual step (Theorem~\ref{thm:closure}) and the unbounded iteration of the trajectory (Theorem~\ref{thm:pi2}).
\end{cor}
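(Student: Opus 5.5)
This corollary has three parts. The first is the existence claim ($\Pii2$-completeness for some transformation and some $\Sig1$ property). The second is the parenthetical about $\Pii1$ properties. The third is the qualitative "two independent reasons". I would handle them in that order, since each rests on results already established.

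\emph{Existence of a $\Sig1$ property and a transformation with no decidable point.} Take $P_0=\{e:\varphi_e(0)\!\downarrow\}$ and the $\Phi$ of Theorem~\ref{thm:pi2}. By that theorem $\Lam^\omega_\Phi(P_0)$ is $\Pii2$-complete. To make precise that there is no point in the chain at which verification becomes decidable, I would use Proposition~\ref{prop:coind} and look at the finite stages $\Lam^{[k]}_\Phi(P_0)=\{x:\forall j<k,\ \Phi^j(x)\in P_0\}$.
\begin{itemize}
\item Each stage is a finite intersection of the sets $(\Phi^j)^{-1}(P_0)$. Each of these is $\mred P_0$ via the total computable $\Phi^j$, by iterating Proposition~\ref{prop:upper}, so each stage is $\Sig1$.
\item Stage $k\ge1$ is $K$-hard. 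I would check that the reduction $e\mapsto g(e)$ built in the proof of Theorem~\ref{thm:pi2} restricts correctly: $g(e)\in\Lam^{[k]}_\Phi(P_0)\iff\forall j<k\,\varphi_e(j)\!\downarrow$, and this last set is $\Sig1$-complete for each fixed $k\ge1$.
\end{itemize}
So every finite stage is undecidable, and the limit is strictly higher, at $\Pii2$. It is not $\Sig1$, since otherwise $\TOT$ would be $\Sig1$.

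\emph{The $\Pii1$ parenthetical.} If $P\in\Pii1$, write $x\in P\iff\forall s\,R(x,s)$ with $R$ recursive. Then
\[
x\in\Lam^\omega_\Phi(P)\iff\forall k\,\forall s\,R(\Phi^k(x),s).
\]
Since $\Phi^k(x)$ is computable uniformly in $(k,x)$, the two universal quantifiers contract by pairing into a single one, and the set is $\Pii1$.

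\emph{Strictly harder and two independent reasons.} "Strictly harder" follows from $P_0\in\Sig1$ against $\Lam^\omega_\Phi(P_0)$ being $\Pii2$-complete, hence not $\le_T K$. The two independent reasons require no new argument; I would only cite the results already in place:
\begin{itemize}
\item the intensionality of the individual step: Theorem~\ref{thm:closure}, via Remark~\ref{rem:pi2disr}, which shows this same $\Phi$ is uniformly disruptive;
\item the unbounded iteration: Theorem~\ref{thm:pi2};
\item the independence of the two: the extensional shift $\Phi'$, which reaches $\Pii2$ without intensionality.
\end{itemize}

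\emph{Main obstacle.} The only step needing care is showing $K$-hardness at each finite stage. The hardness of the first stage should not silently rely on $P_0$ itself. I would verify directly that the map $e\mapsto g(h'(e))$ reduces $K$, where $h'(e)$ is obtained by s-m-n with $\varphi_{h'(e)}(z)\simeq\varphi_e(e)$ for all $z$. Then $\forall j<k\,\varphi_{h'(e)}(j)\!\downarrow\iff e\in K$ for every fixed $k\ge1$, so this single map works at every stage.
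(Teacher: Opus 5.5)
Your proposal is correct and follows the paper's own derivation, which is implicit here. It takes $\Pii2$-completeness from Theorem~\ref{thm:pi2} and absorbs the extra $\forall$ for $\Pii1$ properties; the two independent sources come from Remark~\ref{rem:pi2disr} (the same $\Phi$ is uniformly disruptive) and the extensional shift. Your explicit check that every finite stage $\Lam^{[k]}_\Phi(P_0)$, $k\ge1$, is $\Sig1$ and $K$-hard via $e\mapsto g(h'(e))$ sharpens what the paper only states in passing: that finite iterates stay in $\Sig1$ (Proposition~\ref{prop:upper}, Remark~\ref{rem:tax}).
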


\section{The Extensional Case, by Contrast}\label{sec:ext}

We close the map with the extensional regime. Here the result is a corollary of Rice; we prove it for completeness and to make the dividing line with the intensional case visible.

\begin{defi}
Let $\Phi$ be extensional and $\Pt$ behavioural and non-trivial. We say that $\Pt$ is \emph{$P$-disruptive} under $\Phi$ if there are $x,y\in P$ with $\Phi(x)\in P$ and $\Phi(y)\notin P$; that is, if the elevated property is non-trivial inside $P$.
\end{defi}

\begin{prop}[extensional case]\label{prop:ext}
If $\Phi$ is extensional and $\Pt$ is a non-trivial behavioural property that is $P$-disruptive under $\Phi$, then $\Lam_\Phi(P)$ is the index set of a non-trivial behavioural property, and hence undecidable by Rice.
\end{prop}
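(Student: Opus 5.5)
The plan is to check the three ingredients of Rice's theorem for $\Lam_\Phi(P)=P\cap\Phi^{-1}(P)$: it is an index set, the underlying property is non-trivial, and Theorem~\ref{thm:rice} then applies. Throughout we use that $\Lam_\Phi(P)$ is the intersection of $P$ with the pullback $\Phi^{-1}(P)$.

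\emph{Step 1: extensionality.} Let $\varphi_a=\varphi_b$. Since $\Pt$ is behavioural, $a\in P\iff b\in P$. Since $\Phi$ is extensional, $\varphi_{\Phi(a)}=\varphi_{\Phi(b)}$, and so $\Phi(a)\in P\iff\Phi(b)\in P$. Hence $a\in\Lam_\Phi(P)\iff b\in\Lam_\Phi(P)$, so $\Lam_\Phi(P)$ is closed under equality of computed functions. The behavioural property is then defined as
\[
\widetilde{Q}:=\{\varphi_e : e\in\Lam_\Phi(P)\}.
\]
Closure makes $\widetilde{Q}$ well defined, and its index set is exactly $\Lam_\Phi(P)$: if $\varphi_e\in\widetilde{Q}$, then $\varphi_e=\varphi_{e'}$ for some $e'\in\Lam_\Phi(P)$, so $e\in\Lam_\Phi(P)$. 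Equivalently, $\widetilde{Q}=\Pt\cap\{g:\Phi_*(g)\in\Pt\}$, where $\Phi_*$ is the map on functions induced by $\Phi$.

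\emph{Step 2: non-triviality.} $P$-disruptiveness supplies the witnesses.
\begin{itemize}
\item The index $x\in P$ with $\Phi(x)\in P$ lies in $\Lam_\Phi(P)$, so $\varphi_x\in\widetilde{Q}$ and $\widetilde{Q}\neq\emptyset$.
\item The index $y\in P$ with $\Phi(y)\notin P$ lies outside $\Lam_\Phi(P)$. Because the index set is exactly $\Lam_\Phi(P)$, this gives $\varphi_y\notin\widetilde{Q}$, so $\widetilde{Q}\neq\PC$.
\end{itemize}

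\emph{Step 3: conclusion.} Theorem~\ref{thm:rice} applied to $\widetilde{Q}$ gives that $\Lam_\Phi(P)$ is undecidable.

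There is no real obstacle. The only point needing care is Step 1: the definition of extensional transformation gives equality of the functions $\varphi_{\Phi(a)},\varphi_{\Phi(b)}$, not of the indices. That equality must then be combined with the behaviourality of $\Pt$ to get $\Phi(a)\in P\iff\Phi(b)\in P$. This is the exact place where the argument breaks in the intensional case of Theorem~\ref{thm:closure}.
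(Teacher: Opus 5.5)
Your proof is correct and follows essentially the paper's argument. The paper descends $\Phi$ to a map $\widehat\Phi$ on behaviours and takes $Q=\{g\in\Pt:\widehat\Phi(g)\in\Pt\}$, whereas you first check directly that $\Lam_\Phi(P)$ is closed under equality of computed functions and then take its image; this is the same property, and the non-triviality from the two disruption witnesses and the appeal to Rice are identical.
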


\begin{proof}
Since $\Phi$ is extensional, $\varphi_x=\varphi_y\Rightarrow\varphi_{\Phi(x)}=\varphi_{\Phi(y)}$, so $\Phi$ descends to a well-defined $\widehat\Phi$ on behaviours: $\widehat\Phi(g):=\varphi_{\Phi(x)}$ for any $x$ with $\varphi_x=g$. Then $Q:=\{g\in\PC : g\in\Pt\wedge\widehat\Phi(g)\in\Pt\}$ is a behavioural property, and its index set is $\Lam_\Phi(P)$. $Q$ is non-trivial by $P$-disruption ($\varphi_x\in Q$, $\varphi_y\notin Q$), so Rice applies directly.
\end{proof}

The difference from Theorem~\ref{thm:closure} is exactly one of transport: in both cases $\Lam_\Phi(P)=P\cap\Phi^{-1}(P)$, and Rice always applies to the target $P$. If $\Phi$ is extensional, $\Phi^{-1}(P)$ remains behavioural (via the descent $\widehat\Phi$) and Rice is enough; if $\Phi$ looks at the letter of the code, the descent no longer exists, $\Phi^{-1}(P)$ ceases to be behavioural, and the s-m-n reduction has to be carried out by hand, inside a behavioural fibre, to carry the undecidability of the target down to the original code. This places the extensional/intensional line not as the frontier between two theorems but as the two cases of a single pullback (Lemma~\ref{lem:pullback}) made with the same engine: below it, Rice does everything; above it, Rice's reduction has to be repeated at a scale where Rice does not see. The Expressivity Principle holds on both sides; what changes is the scale.

\section{The Supervision Regress Does Not Terminate}\label{sec:sup}

A natural way to avoid the undecidability of preservation is to delegate it: if no decider can certify $M_0$, perhaps a supervisor $M_1$, at least as capable, can verify it; an $M_2$ can verify $M_1$; and so on. This section shows that delegation adds nothing: the undecidability of $\Lam_\Phi(P)$ is a property of the set, not of the architecture that tries to decide it, and no tower of supervisors ---finite, composite, adaptive--- lowers it. What the tower does do is reproduce the problem at each level: a sufficiently general supervisor is itself a system of the class studied, and its own preservation is equally undecidable.

\subsection{Supervisors and certificates}

\begin{defi}[supervisor, certificate]\label{def:sup}
Let $M_0$ be a system with transition operator $\Phi_0$ and safety property $P$. A \emph{supervisor} of $M_0$ is a computable procedure $V$ that, on an index $x$, outputs a verdict $V(x)\in\{\mathsf{yes},\mathsf{no},\uparrow\}$. We say that $V$ gives a \emph{total correct certificate of preservation} if $V$ is total and $V(x)=\mathsf{yes}\iff x\in\Lam_{\Phi_0}(P)$. A \emph{tower} $M_0,M_1,\dots,M_k$ is a sequence of supervisors in which each $M_{n+1}$ may consult the verdicts of $M_n$ (and, by composition, of all lower levels); the certificate of the tower is the verdict of the top level.
\end{defi}

\begin{thm}[absolute undecidability; non-termination of the regress]\label{thm:abs}
Let $\Lam_{\Phi_0}(P)$ be undecidable (by Proposition~\ref{prop:ext} if $\Phi_0$ is extensional, by Theorem~\ref{thm:closure} if it is intensional and uniformly disruptive). Then no finite tower $M_0,M_1,\dots,M_k$ gives a total correct certificate of preservation, whatever the architecture, capability or adaptivity of its levels.
\end{thm}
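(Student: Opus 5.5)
The argument is a collapse: whatever the internal structure of the tower, its top-level verdict is a single computable procedure on indices. So the plan is to turn any finite tower with a total correct certificate into one total computable decider for $\Lam_{\Phi_0}(P)$, and then contradict the hypothesis.

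\emph{Step 1: formalise the composition.} Each $M_{n+1}$ is a computable procedure that may consult the verdicts of $M_n$, and by composition those of all lower levels. Since the tower is finite, induction on $n$ gives, for each level, an index $v_n$ that computes the verdict of $M_n$. The inductive step runs the procedure of $M_{n+1}$ and replaces every query to $M_n$ by a call to $v_n$. Calls on arbitrary indices and adaptive choices of which index to query are allowed, because the queried index is computed at run time. Formally this is the closure of partial computable functions under composition, realised by s-m-n (Theorem~\ref{thm:smn}) with the universal function. At the top we get an index $v_k$ with $\varphi_{v_k}(x)$ equal to the certificate of the tower on $x$. The only oracle the tower may use is the verdict of lower levels, which are computable. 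A tower whose levels consulted a non-computable oracle would fall outside Definition~\ref{def:sup}, and I will say so explicitly.

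\emph{Step 2: totality and correctness.} Suppose the tower gives a total correct certificate. Then $\varphi_{v_k}$ is total with values in $\{\mathsf{yes},\mathsf{no}\}$, and $\varphi_{v_k}(x)=\mathsf{yes}\iff x\in\Lam_{\Phi_0}(P)$. So $\varphi_{v_k}$ is the characteristic function of $\Lam_{\Phi_0}(P)$, which is therefore recursive. This contradicts the hypothesis: undecidability is supplied by Proposition~\ref{prop:ext} in the extensional case and by Theorem~\ref{thm:closure} (via Lemma~\ref{lem:red}) in the uniformly disruptive case. Note that totality is required only at the top level. Lower levels may diverge on some inputs, but then the top level diverges there too unless it never waits on those queries, and either way the composite is exactly what $v_k$ computes.

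\emph{Main obstacle.} The delicate point is making ``whatever the architecture, capability or adaptivity'' precise enough that Step 1 covers it. First, a supervisor might depend on the index of $M_0$ or on $\Phi_0$ itself; these are fixed parameters, and s-m-n absorbs them. Second, adaptivity is covered because queries are computed dynamically. Third, any ``capability'' beyond computability is excluded by definition. I would add a remark that this is why the result is a property of the set: the contrapositive says any decider, built however it is, is a single computable function. Finally, I would note the regress reading: a general supervisor is itself an index, so its own preservation under a rewrite $\Phi_1$ again falls under Theorem~\ref{thm:closure}.
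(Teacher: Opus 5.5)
Your proposal is correct and follows the paper's own argument: the top verdict of a finite tower is a composition of computable procedures, hence itself a computable procedure, and if it were a total correct certificate it would decide $\Lam_{\Phi_0}(P)$, contradicting the hypothesis. Your Step~1 spells out, via s-m-n and the universal function, the closure under composition that the paper invokes in one line; the rest matches.
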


\begin{proof}
The verdict of level $k$ is the output of a computable procedure that consults other computable procedures; a composition of computable procedures is computable. If this verdict were a total correct certificate, it would be a total computable procedure deciding $\Lam_{\Phi_0}(P)$, against the hypothesis. No property of the architecture enters: the undecidability belongs to the set.
\end{proof}

\begin{rem}
The theorem is deliberately an observation: its strength is that it does not depend on \emph{how} supervision is done. More elaborate versions ---supervisors that simulate, inspect code, adjust their criteria according to what they observe--- are special cases, and no additional hypothesis about them adds or removes anything. What remains to be seen is what a \emph{concrete} supervisor can and cannot do, and what happens to the supervisor itself when it is sufficiently general.
\end{rem}

\subsection{Total supervisors: where they err}

\begin{lem}[total deciders decide recursive sets]\label{lem:bounded}
Let $V$ be a procedure that halts on every index $x$. Then $V$ decides a recursive set. In particular, if $V$ halts within a constant number $c$ of machine steps, it reads at most a prefix of length $c$ of the code of $x$ and decides a set determined by that prefix.
\end{lem}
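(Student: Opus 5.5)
The plan has two parts, matching the two sentences of Lemma~\ref{lem:bounded}.

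\emph{First part.} Let $V$ halt on every $x$ with verdict in $\{\mathsf{yes},\mathsf{no}\}$. I set $D=\{x:V(x)=\mathsf{yes}\}$. Its characteristic function is computed by executing $V$ on $x$ and returning $1$ on $\mathsf{yes}$ and $0$ on $\mathsf{no}$. This is total computable precisely because $V$ halts everywhere, so $D$ is recursive by definition. The only point to fix is a convention: if $V$ may output something other than the two verdicts, any other output is read as $\mathsf{no}$. This changes nothing.

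\emph{Second part.} Here I fix the resource model explicitly, since Section~\ref{sec:prelim} reserves ``steps'' for machine resources in this section. Take a Turing machine whose input tape holds the code of $x$ in a fixed alphabet, with the head starting on the first symbol. I then argue by induction on $s\le c$: after $s$ steps the head position is at most $s$. Consequently the configuration after $s$ steps (state, head position, tape contents) is a function of the first $s+1$ input symbols only. Two inputs that agree on their first $c$ symbols (or $c+1$, depending on the off-by-one convention) therefore produce identical runs, and in particular the same verdict. The decided set is then a union of cylinders $\{x : \text{code}(x)\text{ begins with }u\}$, taken over the finitely many prefixes $u$ of length at most $c$ whose simulated run ends in $\mathsf{yes}$. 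Inputs shorter than $c$ are handled by treating the end-of-input marker as part of the prefix.

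\emph{Main obstacle.} The only real difficulty is not the combinatorics but pinning down ``reads'' so that the prefix claim is model-independent enough. I will state it for the standard single-tape model, with the input written left-aligned. I will then remark that for other reasonable models (RAM with unit-cost reads, multi-tape) the same argument works with ``the set of input cells accessed within $c$ steps has size at most $c$''. In that general form the verdict is determined by at most $c$ symbols, though these need not form a prefix. For sequential-access models, which the lemma intends, the accessed symbols do form a prefix. The recursiveness in the first part needs no model assumption at all.
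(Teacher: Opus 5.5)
Your proof is correct and follows the paper's own argument: the characteristic function of $\{x:V(x)=\mathsf{yes}\}$ is computed by running the total procedure $V$, and a run of $c$ steps can only consult the first $c$ symbols of the code. Your head-position induction and your explicit sequential-access assumption make precise what the paper's one-line proof leaves implicit. One small wording fix: the full configuration after $s$ steps is not a function of the first $s+1$ input symbols, since untouched cells still hold the rest of the input; what is determined is the state, the head position and the contents of cells $0,\dots,s$, which is all your argument uses.
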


\begin{proof}
$V$ is total by hypothesis and its verdict is computable; the set $\{x:V(x)=\mathsf{yes}\}$ is recursive. If $V$ halts within $c$ steps, it can have read at most $c$ symbols of the code, so $V(x)$ depends only on the prefix of length $c$ of $x$.
\end{proof}

\begin{prop}[the error of a total supervisor cannot be located]\label{prop:err}
Let $\Lam_{\Phi_0}(P)$ be non-recursive and let $V$ be a total supervisor, with set of positive verdicts $R=\{x:V(x)=\mathsf{yes}\}$. Then the error set $\mathrm{Err}(V):=R\,\triangle\,\Lam_{\Phi_0}(P)$ is infinite and non-recursive: $V$ errs on infinitely many systems, and the set of systems on which it errs is itself undecidable.
\end{prop}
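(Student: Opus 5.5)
The argument rests on one identity: symmetric difference with a recursive set preserves non-recursiveness. By Lemma~\ref{lem:bounded}, $R$ is recursive, since $V$ is total and computable. Then
\[
\Lam_{\Phi_0}(P)=R\,\triangle\,\mathrm{Err}(V),
\]
because $\triangle$ is associative and $R\triangle R=\emptyset$. So it suffices to show that $\mathrm{Err}(V)$ is non-recursive; infiniteness will then follow at once.

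\emph{Non-recursiveness.} Suppose $\mathrm{Err}(V)$ were recursive. Then $\Lam_{\Phi_0}(P)=R\triangle\mathrm{Err}(V)$ would be recursive, because recursive sets are closed under Boolean operations: $x\in A\triangle B$ is decided by computing both memberships and taking their exclusive or. This contradicts the hypothesis. Hence $\mathrm{Err}(V)$ is non-recursive.

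\emph{Infiniteness.} Every finite set is recursive, so a non-recursive set is infinite. Thus $\mathrm{Err}(V)$ is infinite, which gives ``$V$ errs on infinitely many systems''. In particular, finitely patching $V$ on the errors found so far never yields a correct certificate, which is consistent with Theorem~\ref{thm:abs}.

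\emph{Sharpening (optional).} When $\Lam_{\Phi_0}(P)$ is $K$- or $\overline K$-hard, as in Theorem~\ref{thm:closure}, the same identity gives $\Lam_{\Phi_0}(P)\le_T\mathrm{Err}(V)\oplus R\equiv_T\mathrm{Err}(V)$. Hence $\mathrm{Err}(V)$ computes $K$, so locating the errors is exactly as hard as the original problem. A many-one reduction is not available in general, so I would state only the Turing version.

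There is no real obstacle in this proof. The only point to check is that the error set is computed against the true set $\Lam_{\Phi_0}(P)$, so that the symmetric-difference identity holds exactly. The totality of $V$, used through Lemma~\ref{lem:bounded}, is the hypothesis that makes $R$ recursive; without it the argument fails, since for a partial $V$ the set $R$ may be merely r.e.
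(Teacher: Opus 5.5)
Your proof is correct and follows the paper's route: $R$ is recursive by Lemma~\ref{lem:bounded}, the identity $\Lam_{\Phi_0}(P)=R\,\triangle\,\mathrm{Err}(V)$ shows that $\mathrm{Err}(V)$ is non-recursive, and every non-recursive set is infinite. Your optional sharpening, $\Lam_{\Phi_0}(P)\le_T\mathrm{Err}(V)$ (so under disruption $\mathrm{Err}(V)$ computes $K$), is also correct and slightly strengthens what the paper states.
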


\begin{proof}
$R$ is recursive by the lemma, so $R\ne\Lam_{\Phi_0}(P)$ and $\mathrm{Err}(V)\ne\emptyset$. If $\mathrm{Err}(V)$ were recursive, $\Lam_{\Phi_0}(P)=R\,\triangle\,\mathrm{Err}(V)$ would be recursive. A non-recursive set is infinite.
\end{proof}

\begin{rem}[where the error lies]
Under uniform disruption, Proposition~\ref{prop:err} can be made explicit: the reduction $K\mred\Lam_{\Phi_0}(P)$ (or $\overline K\mred$, depending on the polarity) via the wrapper $w$ (Lemma~\ref{lem:red}) shows that $V$ must err on infinitely many $w(e)$, and that the set $\{e: V \text{ errs on } w(e)\}$ is non-recursive. No enlargement of $V$'s budget ---more steps, more memory, more levels--- changes this: it only moves the errors, never removes them nor makes them recognisable. This is the precise form of the ``finite horizon'' intuition: it is not that the supervisor sees \emph{little}, but that what it fails to see is not enumerable.
\end{rem}

\subsection{The supervisor as a system of the class}

Theorem~\ref{thm:abs} says that the tower does not lower the problem of level $0$. We now show that it reproduces it: a general supervisor is a system of the same class, and the question ``does the supervisor go on supervising correctly?'' has the same complexity as the original question.

\begin{prop}[the supervisor's own preservation]\label{prop:tower}
\begin{enumerate}[label=(\roman*),leftmargin=2.2em]
\item Let $M_{n+1}$ be a supervisor that is itself a system whose transition operator $\Phi_{n+1}$ is uniformly disruptive with respect to a non-trivial property $\Pt_{n+1}$ (for instance, ``issues correct verdicts on a fixed family''). Then $\Lam_{\Phi_{n+1}}(P_{n+1})$ is undecidable (Theorem~\ref{thm:closure}), and $K$ or $\overline K$ reduces to it depending on the polarity.
\item The paradigmatic case is the supervision of consistency (Proposition~\ref{prop:ded}): the operator $E_{\mathrm{ded}}$ that assigns to a deductive system $x$ its supervisor belongs to $\mathcal{E}_{\mathrm B}(P_{\mathrm{con}})$, and the preservation of consistency under supervision is $\Pii1$-complete.
\end{enumerate}
\end{prop}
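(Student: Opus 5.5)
Part (i) is an application of Theorem~\ref{thm:closure}, and the plan is to check its hypotheses and then quote it. Fix the supervisor $M_{n+1}$ as a system with transition operator $\Phi_{n+1}$ and a non-trivial behavioural property $\Pt_{n+1}$. A good example is ``issues correct verdicts on a fixed family $F$ of indices'': this depends only on the function $\varphi_x$ restricted to $F$, so it is behavioural. It is non-trivial as long as some function gives correct verdicts on $F$ and some function (for instance $\emptyfn$) does not. By hypothesis, $\Phi_{n+1}$ is uniformly disruptive, so it comes with a wrapper $w_{n+1}$ satisfying (i) and (ii) of Definition~\ref{def:unif}. Lemma~\ref{lem:red}, repackaged as Theorem~\ref{thm:closure}, then gives $K\mred\Lam_{\Phi_{n+1}}(P_{n+1})$ in the positive polarity and $\overline K\mred\Lam_{\Phi_{n+1}}(P_{n+1})$ in the negative one. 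Nothing beyond the earlier lemma is needed, so I treat (i) as a corollary. For non-vacuity I would also note that an instance exists: take any $\Phi\in\Rinst(P_{n+1})$, which requires $\emptyfn\notin\Pt_{n+1}$ (true for the correct-verdicts property when $F\neq\emptyset$), and apply Theorem~\ref{thm:rinst}.

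Part (ii) depends on material the paper states later: the deductive instance (Proposition~\ref{prop:ded}) and the class $\mathcal{E}_{\mathrm B}$ of polarity-B elevation operators with the Representation Theorem. I would first set up $P_{\mathrm{con}}$ as the index set of consistent r.e.\ deductive systems, that is, those whose enumerated theorems never include $\bot$. This is a $\Pii1$ behavioural property, and $\emptyfn\in\Pt_{\mathrm{con}}$, so polarity B is the right one, as the polarity remark requires. The trigger is $\Str_a(x)$: ``the system $E_{\mathrm{ded}}(x)$ derives $\bot$'', or, in the anchored form, the simulation of $\varphi_{\delta(x)}(\delta(x))$ halts. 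The supervised output equals the base while untriggered and leaves $P_{\mathrm{con}}$ once the trigger fires, for example by adding $\bot$ to the enumeration. Membership in $\mathcal{E}_{\mathrm B}(P_{\mathrm{con}})$ is then Proposition~\ref{prop:ded}. The Representation Theorem gives $\Lam_{E_{\mathrm{ded}}}(P_{\mathrm{con}})=P_{\mathrm{con}}\setminus S_a$ and $\overline K\mred\Lam_{E_{\mathrm{ded}}}(P_{\mathrm{con}})$ through the anchor wrapper.

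The $\Pii1$-completeness then splits into an upper and a lower bound. The upper bound is the statement that $P_{\mathrm{con}}\in\Pii1$ and $\overline{S_a}\in\Pii1$ because $S_a$ is r.e.; their intersection is $\Pii1$. The lower bound is the $\overline K$-reduction, since $\overline K$ is $\Pii1$-complete.

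The main obstacle is making the anchor concrete for $P_{\mathrm{con}}$. The wrapper $w(e)$ must be a consistent system that is inert, meaning it enumerates the same theorems as $t$. At the same time the supervised image must become inconsistent exactly when $\varphi_e(e)\!\downarrow$, with the base $f$ chosen compatibly with Lemma~\ref{rem:BB}. I would try the construction in which $E_{\mathrm{ded}}(x)$ enumerates the theorems of $x$ and, in parallel, runs $\varphi_{\delta(x)}(\delta(x))$, adding $\bot$ if it halts. This keeps the untriggered behaviour equal to $t$ and realises polarity B directly.
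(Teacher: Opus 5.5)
Your route is the paper's: (i) is Theorem~\ref{thm:closure} applied at level $n+1$, and (ii) is Proposition~\ref{prop:ded} read through Theorem~\ref{thm:main}(ii)--(iv), with exactly your upper/lower-bound split. The problem is the construction in your last paragraph, which contradicts the membership in $\mathcal{E}_{\mathrm B}(P_{\mathrm{con}})$ that you cite.

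You let $E_{\mathrm{ded}}(x)$ enumerate the theorems of $x$, i.e.\ $W_x$, adding $\ulcorner\bot\urcorner$ if $\varphi_{\delta(x)}(\delta(x))$ halts. Take an untriggered $x$ outside the range of the wrapper. Then $\varphi_{E_{\mathrm{ded}}(x)}$ has domain $W_x$, not $W_t$. So clause (B) of Definition~\ref{def:oes} fails, because it demands $\varphi_{E(x)}=\varphi_t$ for \emph{every} $x$ with $\neg\Str_a(x)$. Your claim that this ``keeps the untriggered behaviour equal to $t$'' holds only on the inert family. What you built is an operator with its own base, the situation of $E_{\mathrm{rel}}$, which Remark~\ref{rem:fora} places outside $\mathcal{E}(P)$. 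Indeed, for it $E^{-1}(P_{\mathrm{con}})=P_{\mathrm{con}}\setminus S_a$ rather than $\Exp(E)$, against Theorem~\ref{thm:main}(ii).

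The paper's $E_{\mathrm{ded}}$ is the canonical $E^{\mathrm B}_{t,a,f}$ with $\varphi_f=\varphi_t\cup\{(\ulcorner\bot\urcorner,0)\}$. It enumerates the fixed consistent $W_t$ and uses $x$ only through $\delta(x)$. Consistency of $T$ makes $f$ an augmentation (Lemma~\ref{rem:BB}(ii)), so $\mathrm{mix}_\sigma(t,f)=\varphi_t\cup\{(\ulcorner\bot\urcorner,0)\}\notin\Pt_{\mathrm{con}}$ for every $\sigma$. That is $(\mathrm{H_B})$, the check you left as ``$f$ chosen compatibly''.

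Two smaller points:
\begin{enumerate}
\item The trigger must be fixed before $E$ is built, as $\varphi_a(x)\simeq\varphi_{\delta(x)}(\delta(x))$. Describing it as ``$E_{\mathrm{ded}}(x)$ derives $\bot$'' is circular. Under your own-base construction that set is not even $S_a$, since it also contains every $x$ with $\ulcorner\bot\urcorner\in W_x$.
\item The anchor is not a real obstacle. Lemma~\ref{lem:inert} gives $w$ with $\varphi_{w(e)}=\varphi_t$ and $\delta(w(e))=e$, after which (W1) and (W2) are automatic.
\end{enumerate}

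For the record, your own-base operator would still yield $\Lam=P_{\mathrm{con}}\setminus S_a$ and $\Pii1$-completeness. Adding $\ulcorner\bot\urcorner$ makes \emph{every} system inconsistent, so the pointwise form of $(\mathrm{H_B})$ holds on all of $P_{\mathrm{con}}\cap S_a$. Only the assertion $E_{\mathrm{ded}}\in\mathcal{E}_{\mathrm B}(P_{\mathrm{con}})$ in the statement needs the fixed base $W_t$.
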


\begin{proof}
(i) is Theorem~\ref{thm:closure} applied at level $n+1$. (ii) is Proposition~\ref{prop:ded}.
\end{proof}

\begin{cor}[the two faces of the tower]\label{cor:faces}
The regress does not terminate either for extensional or for intensional supervisors. If all the $\Phi_n$ are extensional, each $\Lam_{\Phi_n}(P_n)$ is undecidable by Proposition~\ref{prop:ext} (Rice); if they are intensional and uniformly disruptive, by Theorem~\ref{thm:closure} (s-m-n inside a fibre). In neither case does the tower converge: it is a chain of undecidable problems in which no level decides the previous one.
\end{cor}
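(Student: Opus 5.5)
The statement to prove is Corollary~\ref{cor:faces}. Since it is a corollary, the plan is to assemble earlier results level by level, separating the two cases named in the statement.

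Fix a tower $M_0,M_1,\dots$ with transition operators $\Phi_n$ and properties $\Pt_n$, each non-trivial. At each level $n$ the object to certify is $\Lam_{\Phi_n}(P_n)=P_n\cap\Phi_n^{-1}(P_n)$.

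\emph{Extensional case.} Assume every $\Phi_n$ is extensional and $\Pt_n$ is $P_n$-disruptive under $\Phi_n$; this is the non-triviality hypothesis needed for the statement to have content. Proposition~\ref{prop:ext} then shows that $\Lam_{\Phi_n}(P_n)$ is the index set of a non-trivial behavioural property, via the descent $\widehat\Phi_n$, and Rice gives undecidability.

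\emph{Intensional case.} Assume every $\Phi_n$ is uniformly disruptive with respect to $\Pt_n$. Theorem~\ref{thm:closure}, applied exactly as in Proposition~\ref{prop:tower}(i), gives $K\mred\Lam_{\Phi_n}(P_n)$ or $\overline K\mred\Lam_{\Phi_n}(P_n)$. Hence each level is undecidable. The reduction is via the wrapper at that level, inside the fibre of $\varphi_{t_n}$.

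\emph{Non-convergence.} The claim that no level decides the previous one follows from Theorem~\ref{thm:abs}. Any verdict of $M_{n+1}$ that consults $M_n,\dots,M_0$ is a composition of computable procedures. If it were a total correct certificate for $\Lam_{\Phi_n}(P_n)$, that set would be recursive, contradicting the previous step. By Proposition~\ref{prop:err}, any total $M_{n+1}$ therefore errs on a non-recursive, infinite set. Combining the two steps, the tower is a chain in which every level is undecidable, whether extensional (by Rice) or intensional (by s-m-n inside a fibre), and no level discharges its predecessor.

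The only delicate point is stating the hypotheses correctly. In the extensional case, non-triviality has to mean $P$-disruption; otherwise $\Lam_{\Phi_n}(P_n)$ could be $\emptyset$ or $P_n$ and the Rice argument would say nothing. A mixed tower, with some levels extensional and others intensional, is covered by applying the appropriate case at each level, since the argument is levelwise.
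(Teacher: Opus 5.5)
Your proposal is correct and follows the paper's own argument, which the corollary leaves implicit. It applies Proposition~\ref{prop:ext} or Theorem~\ref{thm:closure} level by level, and gets non-convergence from Theorem~\ref{thm:abs}, with Proposition~\ref{prop:err} for total supervisors. Your explicit $P$-disruption hypothesis in the extensional case is a correct precision that the statement omits but Proposition~\ref{prop:ext} requires.
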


\begin{rem}[two elevations, one form]
The hierarchical elevation of this section (stacking supervisors) and the functional elevation of Sections~\ref{sec:operator}--\ref{sec:closure} (iterating $\Phi$) have different structures and should not be identified: the first changes system at each level, the second evolves a single system. Nonetheless, Section~\ref{sec:normal} shows that the mechanism that injects undecidability is the same in both: a wrapper that reacts to a single finite event anchored to $K$. The consistency supervisor of Proposition~\ref{prop:ded} is, literally, an elevation operator of polarity B; the self-rewriting system of Section~\ref{sec:rinst} is one of polarity A. The tower is not an escape route but another instance of the class.
\end{rem}

\begin{rem}[what this section does not say]
We do not claim that an adequate supervisor \emph{must} be Turing-complete or self-modifying: Definition~\ref{def:sup} does not presuppose it, and Theorem~\ref{thm:abs} holds for every supervisor. We claim that, if it is, it inherits the problem (Proposition~\ref{prop:tower}); and that, if it is total, it decides a recursive set and errs in a way that cannot be located (Proposition~\ref{prop:err}). Neither option closes the regress.
\end{rem}

\section{The Normal Form of Semantic Elevation}\label{sec:normal}

The closure theorem (Section~\ref{sec:closure}) establishes that, under uniform disruption, the elevated property $\Lam_\Phi(P)$ inherits the degree of $K$; the supervision regress (Section~\ref{sec:sup}) shows that this inheritance is not avoided by delegating. In both cases the mechanism is the same: a wrapper that executes a subcomputation and reacts to a finite event. This section isolates this mechanism as an object in its own right ---the \emph{semantic elevation operator}--- and proves four things about it: that the elevated property depends only on the trigger and the polarity, not on the wrapper (normal form); that the polarity is restricted by the arithmetical class of the property (Rice--Shapiro); that the untriggered region is not even recursively enumerable; and that four structurally different axes ---functional, deductive, conformance and monitoring--- are literal instances of the same definition, with a catalogue showing how the arithmetical class of the property constrains the polarity.

We do not claim that \emph{every} computable mechanism of semantic governance has this form. We claim three more modest and provable things: that, within the class we define, the elevated property is determined by the trigger and the polarity and by nothing else; that the polarity is restricted by the arithmetical class of $P$; and that the form is the one that s-m-n and Rice--Shapiro allow to a procedure that can only react to finite evidence. The definition \emph{stipulates} the form, it does not derive it: there are computable mechanisms that react to a single finite event and do not belong to it (Remark~\ref{rem:fora}), and those that depend on infinitely many events are those that Section~\ref{sec:closure} obtains by iteration.

\paragraph{What the operator is for.} Three things, in order of importance. \emph{Unification}: rewriting a program, supervising a deductive system, conforming to a reference and watching for a forbidden action are, under the definition, the same object with different observations (Section~\ref{sec:axes}), and what is proved about it holds for all at once. \emph{Reading off complexity}: once the anchor and the polarity hypothesis are verified for a concrete mechanism, the elevated property, its hardness, its class and the non-enumerability of the untriggered region are read off Theorem~\ref{thm:main} with no further construction (usage guide, Remark~\ref{rem:guia}). \emph{Fixing the direction}: the definition makes visible that a mechanism can only make systems enter or leave, and Rice--Shapiro says which of the two each property allows (Section~\ref{sec:real}); this is what a designer needs to know before building anything.

\subsection{Definition}\label{sec:def}

Fix a non-trivial behavioural property $\Pt$, with index set $P$, and a witness $t$ with $\varphi_t\in\Pt$. We write $\emptyfn$ for the nowhere-defined function. For an index $a$ we define the \emph{trigger}
\[
\Str_a(x) \;:\Longleftrightarrow\; \varphi_a(x)\!\downarrow \;\Longleftrightarrow\; \exists s\, T(a,x,s),
\]
and $\sigma_a(x) := \mu s\,T(a,x,s)$ (with $\sigma_a(x)=\infty$ if $\Str_a(x)$ fails). The set $S_a := \{x : \Str_a(x)\}$ is r.e.\ by construction, and every r.e.\ set is of the form $S_a$ for some $a$ (Kleene's normal form theorem). Recall that the clock $\tau_t(z)=\mu s\,T(t,z,s)$ is a computation code and satisfies $\tau_t(z)\ge z$, so that $\{z:\tau_t(z)<\sigma\}$ is finite for every $\sigma<\infty$ (Section~\ref{sec:prelim}).

\begin{defi}[semantic elevation operator]\label{def:oes}
A \emph{semantic elevation operator over $P$} is a total computable function $E:\N\to\N$ for which there are a trigger $a$, a polarity $\pi\in\{\mathrm{A},\mathrm{B}\}$ and a \emph{base} index ($g$ if $\pi=\mathrm{A}$, with $\varphi_g\subseteq\varphi_t$; $f$ if $\pi=\mathrm{B}$), such that for every $x$:
\begin{itemize}[leftmargin=2em]
\item[(A)] \emph{Search polarity.} $\varphi_{E(x)} = \varphi_t$ if $\Str_a(x)$, and $\varphi_{E(x)}=\varphi_g$ otherwise. (The case $g=\emptyfn$ is that of $\Rinst$.)
\item[(B)] \emph{Safety polarity.} For every input $z$, $\varphi_{E(x)}(z)$ is computed by executing $\varphi_t(z)$ and $\varphi_a(x)$ in parallel: if $\varphi_t(z)$ halts before $\varphi_a(x)$, it returns $\varphi_t(z)$; otherwise it returns $\varphi_f(z)$. Thus $\varphi_{E(x)}=\varphi_t$ if $\neg\Str_a(x)$, and if $\Str_a(x)$ with $\sigma=\sigma_a(x)$,
\[
\varphi_{E(x)} \;=\; \mathrm{mix}_\sigma(t,f) \;:=\; \varphi_t\restr\{z:\tau_t(z)<\sigma\}\;\cup\;\varphi_f\restr\{z:\tau_t(z)\ge\sigma\}.
\]
\end{itemize}
and which satisfies the \emph{anchor axiom}: there is a total computable function $w:\N\to\N$ (the \emph{wrapper}) such that, for every $e$,
\begin{itemize}[leftmargin=2em]
\item[(W1)] \emph{inertness:} $\varphi_{w(e)}=\varphi_t$;
\item[(W2)] \emph{coupling:} $\Str_a(w(e)) \iff e\in K$.
\end{itemize}
The \emph{polarity hypotheses} on $(\Pt,t,g)$ and $(\Pt,t,f)$ are:
\[
(\mathrm{H_A})\;\; \varphi_g\notin\Pt; \qquad\qquad
(\mathrm{H_B})\;\; \mathrm{mix}_\sigma(t,f)\notin\Pt \text{ for every } \sigma<\infty.
\]
\end{defi}

By s-m-n, for each $(t,a,\pi,g)$ or $(t,a,\pi,f)$ there is an operator realising (A) or (B) uniformly in $x$ (for (A): on input $z$ it executes $\varphi_g(z)$ and $\varphi_a(x)$ in parallel; if $\varphi_g(z)$ halts, it returns its value, which is $\varphi_t(z)$; if $\varphi_a(x)$ halts, it computes and returns $\varphi_t(z)$; this is exact because $\varphi_g\subseteq\varphi_t$, Lemma~\ref{rem:BB}); we call it \emph{canonical} and denote it $E_{t,a,g}^{\mathrm{A}}$ (resp.\ $E_{t,a,f}^{\mathrm{B}}$). The definition, however, admits any $E$ producing the same functions: the wrapper may do whatever it likes with $x$ as long as the result is that of (A) or (B). The clause is \emph{for every $x$} and with a single polarity: an operator that were (A) on a decidable region and (B) on another does not belong, even if it reacts to a single event and has an anchor (Remark~\ref{rem:fora}). An operator defined only on the wrappers extends to the class by setting $E(x):=E^{\pi}_{t,a,\cdot}(x)$ outside the range of $w$.

\begin{cor}[expressivity criterion: (a)+(b')+(c)]\label{cor:abc}
Every semantic elevation operator satisfies:
\begin{itemize}[leftmargin=2em]
\item[(a)] \emph{Computability:} $E$ is total computable; it acts at compile time and always halts. The unbounded search lives in $\varphi_{E(x)}$, not in $E$.
\item[(b')] \emph{Active coupling:} there is a family $\{w(e)\}$ of programs of constant semantics $\varphi_t$ such that $E(w(e))\in P\iff e\in K$ (polarity A) or $\iff e\notin K$ (polarity B). In particular $E$ is intensional: it distinguishes indices that compute the same function, and the distinction encodes $K$.
\item[(c)] \emph{Structural respect:} the canonical operator realises $\varphi_{E(x)}$ as a wrapper that executes $\varphi_a(x)$ and $\varphi_t$ as subcomputations, without altering their logic.
\end{itemize}
\end{cor}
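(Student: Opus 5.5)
The plan is to verify the three clauses separately, each read off Definition~\ref{def:oes}. Only (b') needs real work.

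For (a), totality and computability of $E$ are stipulated in Definition~\ref{def:oes}. The remark that the unbounded search lives in $\varphi_{E(x)}$ comes from the canonical construction. By s-m-n, $E^{\pi}_{t,a,\cdot}(x)$ is an index computed from $(t,a,g)$ or $(t,a,f)$ and $x$ by a total function. The parallel execution of $\varphi_a(x)$ with $\varphi_g(z)$ or $\varphi_t(z)$ is written into the code of the output; it is not carried out by $E$. For a non-canonical $E$ producing the same functions, totality is again part of the definition.

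For (b'), I take the wrapper $w$ given by the anchor axiom and put $x=w(e)$.
\begin{itemize}
\item By (W1), every $w(e)$ computes $\varphi_t$, so the family has constant semantics.
\item In polarity A, (W2) gives two cases. If $e\in K$, then $\Str_a(w(e))$ holds, so $\varphi_{E(w(e))}=\varphi_t\in\Pt$. If $e\notin K$, then $\varphi_{E(w(e))}=\varphi_g\notin\Pt$ by $(\mathrm{H_A})$. Hence $E(w(e))\in P\iff e\in K$.
\item In polarity B, again two cases. If $e\notin K$, the trigger never fires and $\varphi_{E(w(e))}=\varphi_t\in\Pt$. If $e\in K$, then $\sigma=\sigma_a(w(e))<\infty$ and $\varphi_{E(w(e))}=\mathrm{mix}_\sigma(t,f)\notin\Pt$ by $(\mathrm{H_B})$. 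Hence $E(w(e))\in P\iff e\notin K$.
\end{itemize}
Intensionality then follows as in the proof of Theorem~\ref{thm:closure}. Take $e\in K$ and $e'\notin K$. Then $\varphi_{w(e)}=\varphi_{w(e')}$, yet exactly one of $E(w(e)),E(w(e'))$ lies in the behavioural set $P$. So $\varphi_{E(w(e))}\neq\varphi_{E(w(e'))}$, and $E$ is not extensional. The non-recursiveness of $K$ shows that this distinction encodes $K$.

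The one delicate point is in polarity B: identifying $\varphi_{E(x)}$ with $\mathrm{mix}_\sigma(t,f)$. For this I use the convention built into (B) that ``halts before'' is measured by the clocks, $\tau_t(z)<\sigma_a(x)$, and I take the equations of Definition~\ref{def:oes} as given, so nothing more is needed.

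For (c), the claim is structural and concerns the canonical operator only. Its code runs $\varphi_a(x)$ and $\varphi_t(z)$ (or $\varphi_g(z)$) as black-box subcomputations through the universal machine and dovetails them. It then returns either the value of one of them or $\varphi_f(z)$. None of these programs is modified. In polarity A, this yields exactly the function stated in (A) because $\varphi_g\subseteq\varphi_t$: any value returned early from $\varphi_g(z)$ agrees with $\varphi_t(z)$, which is Lemma~\ref{rem:BB}.
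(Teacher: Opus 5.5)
Your proof is correct and follows the paper's own route: (a) is read off Definition~\ref{def:oes}, (b') comes from (W1), (W2), the polarity clause and $(\mathrm{H}_\pi)$, and (c) is the s-m-n construction of the canonical operator. Your extra derivation of intensionality, from two wrappers with $e\in K$ and $e'\notin K$, is the same argument the paper gives in Theorem~\ref{thm:main}(v) and Corollary~\ref{cor:ext}(i). Like the paper, you need $(\mathrm{H}_\pi)$ for (b'), a hypothesis the statement's ``every semantic elevation operator'' leaves implicit.
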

\begin{proof}
(a) is Definition~\ref{def:oes}. (b'): by (W1) $\varphi_{w(e)}=\varphi_t$; by (W2) and the polarity clause, $\varphi_{E(w(e))}=\varphi_t$ exactly when $e\in K$ (A) or $e\notin K$ (B), and otherwise $\varphi_g$ or $\mathrm{mix}_\sigma(t,f)$, which under $(\mathrm{H}_\pi)$ are not in $\Pt$. (c) is the s-m-n construction of the canonical operator.
\end{proof}

\begin{rem}[the degenerate cases]
(b') excludes the two cases of Section~\ref{sec:nonvac}: a constant $E$ does not vary with $e$; a reaction to a decidable mark (parity) varies, but aligning it with $e\in K$ would require $w$ to decide $K$. (c) is a property of the canonical realisation, not an axiom: the definition fixes only the computed function, and $E$ may do whatever it likes with $x$ as long as the result is that of (A) or (B).
\end{rem}

\begin{rem}[why (c) cannot be an axiom]\label{rem:noc}
There are three candidate formal readings of ``$E(x)$ wraps $x$ as its base engine'', and none does the work of an axiom. \emph{(c-ext)}: there is a functional $\Gamma$ with $\varphi_{E(x)}=\Gamma(\varphi_x)$; this is incompatible with (b'), because on the inert family $\varphi_{E(w(e))}=\Gamma(\varphi_t)$ would be constant in $e$ (Corollary~\ref{cor:ext}(i)) ---and it is the only reading under which Myhill--Shepherdson would apply, which rules out applying it to any member of the class. \emph{(c-int)}: there is $\Gamma$ with $\varphi_{E(x)}=\Gamma(\varphi_x,x)$; this is empty, because $\Gamma(g,x)(z):=\varphi_{E(x)}(z)$ satisfies it for every total computable $E$. \emph{(c-syn)}: $E(x)=s(u,x)$ for a fixed index $u$ (s-m-n uniformity); this adds nothing, because for every $E$ there is $E'$ of this form with $\varphi_{E'(x)}=\varphi_{E(x)}$ and, since $P$ is an index set, $\Lam_{E'}(P)=\Lam_E(P)$. This is why (c) is formulated on the canonical realisation and the definition is limited to the computed function.
\end{rem}

\begin{lem}[irrevocability and realisations of polarity B]\label{rem:BB}
Let $a$ have $S_a$ non-recursive, and let $t,f$ be indices.
\begin{enumerate}[label=(\roman*),leftmargin=2.2em]
\item (\emph{Irrevocability.}) There is a total computable $E$ with $\varphi_{E(x)}=\varphi_t$ if $\neg\Str_a(x)$ and $\varphi_{E(x)}=\varphi_f$ if $\Str_a(x)$ (exact switch) if and only if $\varphi_t\subseteq\varphi_f$. Symmetrically, the switch from $\varphi_g$ (before) to $\varphi_t$ (after) is realisable if and only if $\varphi_g\subseteq\varphi_t$: this is polarity A of Definition~\ref{def:oes}.
\item (\emph{Augmentation.}) If $\varphi_f=\varphi_t\cup\theta$ with $\theta$ finite and $\mathrm{dom}\,\theta\cap\mathrm{dom}\,\varphi_t=\emptyset$, then $\mathrm{mix}_\sigma(t,f)=\varphi_t\cup\theta$ for every $\sigma$, and $(\mathrm{H_B})$ reduces to $\varphi_t\cup\theta\notin\Pt$.
\item (\emph{Truncation.}) If $\varphi_f=\emptyfn$, then $\mathrm{mix}_\sigma(t,\emptyfn)=\varphi_t\restr\{z:\tau_t(z)<\sigma\}$, a finite restriction of $\varphi_t$; $(\mathrm{H_B})$ follows from no finite restriction of $\varphi_t$ being in $\Pt$ (and only needs the restrictions $\varphi_t\restr\{\tau_t<\sigma\}$).
\end{enumerate}
\end{lem}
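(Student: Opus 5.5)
I would prove the three parts separately. Part (i) carries all the content; (ii) and (iii) are direct computations with the definition of $\mathrm{mix}_\sigma$.

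\emph{(i), sufficiency.} Assume $\varphi_t\subseteq\varphi_f$. By s-m-n, let $E(x)$ be the index of the program that, on input $z$, dovetails $\varphi_t(z)$ and $\varphi_a(x)$. If $\varphi_t(z)$ halts first, it outputs $\varphi_t(z)$. If $\varphi_a(x)$ halts first, it executes $\varphi_f(z)$ and outputs that value. When $\neg\Str_a(x)$, the second branch never fires, so $\varphi_{E(x)}=\varphi_t$. When $\Str_a(x)$, every $z$ is answered either with $\varphi_t(z)=\varphi_f(z)$, using the inclusion, or with $\varphi_f(z)$; a divergent $\varphi_t(z)$ can only reach the second branch. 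Hence $\varphi_{E(x)}=\varphi_f$. The symmetric construction for $\varphi_g\subseteq\varphi_t$ is the canonical operator of polarity A described after Definition~\ref{def:oes}.

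\emph{(i), necessity.} This is the main step. Suppose an exact switch $E$ exists, and suppose $\varphi_t\not\subseteq\varphi_f$: there are $z_0$ and $v$ with $\varphi_t(z_0)=v$ and $\varphi_f(z_0)\ne v$, where $\varphi_f(z_0)$ is either undefined or a different value. I would show that $\overline{S_a}$ is r.e. Since $S_a$ is r.e. and non-recursive, this is a contradiction. Concretely, $x\notin S_a$ implies $\varphi_{E(x)}(z_0)=v$, and $x\in S_a$ implies $\varphi_{E(x)}(z_0)=\varphi_f(z_0)\ne v$. So $\overline{S_a}=\{x:\varphi_{E(x)}(z_0)\!\downarrow=v\}$, which is $\Sig1$ because $E$ is total computable. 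This handles both subcases of $\varphi_f(z_0)\ne v$ uniformly, so no appeal to Rice--Shapiro is needed.

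For the A-direction, suppose a switch from $\varphi_g$ (before, $\neg\Str_a$) to $\varphi_t$ (after) exists with $\varphi_g(z_0)=v\ne\varphi_t(z_0)$. The same argument gives $\overline{S_a}=\{x:\varphi_{E(x)}(z_0)\!\downarrow=v\}$, again r.e. The slightly delicate point is the direction of the inclusion. The set becomes r.e. exactly when the ``before'' value is a convergent value that the ``after'' function lacks, which is the irrevocability phenomenon. I would state this once and apply it in both directions.

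\emph{(ii).} Since $\mathrm{dom}\,\theta\cap\mathrm{dom}\,\varphi_t=\emptyset$, on $\mathrm{dom}\,\varphi_t$ we have $\varphi_f=\varphi_t$. So
\[
\mathrm{mix}_\sigma(t,f)=\varphi_t\restr\{\tau_t<\sigma\}\cup\varphi_t\restr\{\tau_t\ge\sigma\}\cup\theta .
\]
The set $\{z:\tau_t(z)\ge\sigma\}$ contains every $z$ with $\tau_t(z)=\infty$, hence all of $\mathrm{dom}\,\theta$. The expression is therefore $\varphi_t\cup\theta$, independent of $\sigma$, and $(\mathrm{H_B})$ becomes the single condition $\varphi_t\cup\theta\notin\Pt$.

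\emph{(iii).} With $\varphi_f=\emptyfn$ the second piece vanishes. The domain $\{z:\tau_t(z)<\sigma\}$ is finite by the clock property in Section~\ref{sec:prelim}, so $\mathrm{mix}_\sigma(t,\emptyfn)$ is a finite restriction of $\varphi_t$. $(\mathrm{H_B})$ then follows whenever no such restriction lies in $\Pt$.
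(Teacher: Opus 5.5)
Your proposal is correct and follows the paper's proof: the same race for sufficiency, the same witness input $z_0$ for necessity, and direct computation from the definition of $\mathrm{mix}_\sigma$ for (ii) and (iii). The only difference is cosmetic. The paper treats the cases ``$\varphi_f(z_0)$ undefined'' and ``$\varphi_f(z_0)$ defined but different'' separately, whereas you merge them into the single description $\overline{S_a}=\{x:\varphi_{E(x)}(z_0)\!\downarrow=v\}$. You also write out the polarity-A direction, which the paper leaves at ``symmetrically''.
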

\begin{proof}
(i) If $\varphi_t\subseteq\varphi_f$, the race of (B) realises the exact switch: on $z\in\mathrm{dom}\,\varphi_t$ it returns $\varphi_t(z)=\varphi_f(z)$ in either case, and on $z\notin\mathrm{dom}\,\varphi_t$ it returns $\varphi_f(z)$ iff $\Str_a(x)$. Conversely, if $\varphi_t\not\subseteq\varphi_f$ there is $z_0$ with $\varphi_t(z_0)\!\downarrow$ and $\varphi_f(z_0)$ undefined or different. Suppose $E$ realises the exact switch. If $\varphi_f(z_0)\!\uparrow$, then $\varphi_{E(x)}(z_0)\!\downarrow\iff\neg\Str_a(x)$, and the complement of $S_a$ would be r.e., whence $S_a$ recursive. If $\varphi_f(z_0)\!\downarrow\ne\varphi_t(z_0)$, then $\varphi_{E(x)}(z_0)$ is computable in $x$ (it always halts) and equals $\varphi_t(z_0)$ iff $\neg\Str_a(x)$: $S_a$ would be recursive. Contradiction in both cases.
(ii) and (iii) are direct computations from the definition of $\mathrm{mix}_\sigma$.
\end{proof}

\begin{rem}[reading the lemma]
The program has already returned $\varphi_t(z)$ for the inputs finished before the trigger and cannot retract them; it can only change what it has not yet answered. This irrevocability is the only form that the ``continuity'' of every computable reaction takes inside the definition. Truncation serves for properties that are lost by \emph{ceasing} to answer (totality, computing a fixed function); augmentation, for properties that are lost by \emph{doing} something forbidden (safety in the strict sense). The hypothesis ``$S_a$ non-recursive'' in the lemma is stated for clarity: for a semantic elevation operator it follows from the anchor axiom (Theorem~\ref{thm:main}(i)), and the lemma applies to the class with no additional assumption. One caveat to keep in mind: $\mathrm{mix}_\sigma(t,f)$ depends on the index $t$ and on the clock $\tau$, not only on the function $\varphi_t$; hence $(\mathrm{H_B})$ is a hypothesis on $(\Pt,t,\tau)$ and the class $\mathcal{E}_{\mathrm B}(P)$ depends on the computation model through the clock. The conclusions of Theorem~\ref{thm:main} ---complexity and hardness--- do not.
\end{rem}

\subsection{Hardness, normal form and untriggered region}\label{sec:hard}

\begin{thm}[Representation Theorem: normal form, hardness and untriggered region]\label{thm:main}
Let $E$ be a semantic elevation operator over $P$ with trigger $a$, polarity $\pi$ and wrapper $w$, and assume $(\mathrm{H}_\pi)$. Write $\Exp(E):=\N\setminus S_a=\{x:\neg\Str_a(x)\}$ (the \emph{untriggered region}). Then:
\begin{enumerate}[label=(\roman*),leftmargin=2.2em]
\item $S_a$ is $\Sig1$-complete and $\Exp(E)$ is $\Pii1$-complete; in particular $\Exp(E)$ is not r.e.
\item (\emph{Normal form.}) $E^{-1}(P)=S_a$ if $\pi=\mathrm{A}$, and $E^{-1}(P)=\Exp(E)$ if $\pi=\mathrm{B}$. Hence
\[
\Lam_E(P) := \{x\in P : E(x)\in P\} \;=\; P\cap S_a \quad(\mathrm{A}), \qquad \Lam_E(P) \;=\; P\setminus S_a \quad(\mathrm{B}),
\]
and the elevated property depends only on $(a,\pi)$, not on the wrapper nor on the concrete realisation of $E$.
\item (\emph{Hardness.}) $K\mred\Lam_E(P)$ if $\pi=\mathrm{A}$, and $\overline{K}\mred\Lam_E(P)$ if $\pi=\mathrm{B}$, in both cases via $w$. Consequently $\Lam_E(P)$ is not recursive.
\item (\emph{Upper bound.}) If $P\in\Sig1$ and $\pi=\mathrm{A}$, $\Lam_E(P)$ is $\Sig1$-complete; if $P\in\Pii1$ and $\pi=\mathrm{B}$, $\Lam_E(P)$ is $\Pii1$-complete. In general, if $P$ belongs to a class $\Gamma$ of the hierarchy closed under intersection with $\Sig1$ (resp.\ $\Pii1$), then $\Lam_E(P)\in\Gamma$.
\item $\Lam_E(P)$ is not a behavioural property: there are $x,y$ with $\varphi_x=\varphi_y$, $x\in\Lam_E(P)$ and $y\notin\Lam_E(P)$. Rice does not apply; the undecidability of (iii) comes from $K$ by s-m-n inside the fibre of $t$.
\end{enumerate}
\end{thm}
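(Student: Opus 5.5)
The plan is to read everything off two facts: the anchor axiom (W1)--(W2), and the polarity clauses (A)/(B) together with $(\mathrm{H}_\pi)$. I will prove (ii) first, because (i), (iii), (iv) and (v) then follow from it by short arguments.

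For \emph{(i)}, $S_a$ is r.e.\ by definition of the trigger. By (W2) the total computable $w$ witnesses $K\mred S_a$, so $S_a$ is $\Sig1$-complete. The same $w$ gives $\overline K\mred\Exp(E)$, so $\Exp(E)$ is $\Pii1$-complete. If $\Exp(E)$ were r.e., then $S_a$ would be recursive by Post's theorem, and hence so would $K$, which is false.

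For \emph{(ii)} I do a case split on the polarity. Under (A), if $\Str_a(x)$ holds then $\varphi_{E(x)}=\varphi_t\in\Pt$. Otherwise $\varphi_{E(x)}=\varphi_g\notin\Pt$ by $(\mathrm{H_A})$. Since $P$ is an index set, $E(x)\in P\iff x\in S_a$.

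Under (B), if $\neg\Str_a(x)$ then $\varphi_{E(x)}=\varphi_t\in\Pt$. If $\Str_a(x)$ holds with $\sigma=\sigma_a(x)<\infty$, then $\varphi_{E(x)}=\mathrm{mix}_\sigma(t,f)\notin\Pt$ by $(\mathrm{H_B})$.

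Intersecting with $P$ gives the two displayed formulas. The right-hand sides mention only $a$, $\pi$ and $P$. Definition~\ref{def:oes} constrains $E$ only through the computed functions $\varphi_{E(x)}$, and membership in $P$ depends only on those functions. So any two realisations with the same $(a,\pi)$ yield the same set.

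The point I expect to need the most care is the claim that the computed function in case (B) is exactly $\mathrm{mix}_\sigma$. One has to check that the race ``$\varphi_t(z)$ halts before $\varphi_a(x)$'' matches the condition $\tau_t(z)<\sigma_a(x)$ on computation codes. This is essentially how the definition is stated, so I would simply invoke it.

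For \emph{(iii)}, (W1) gives $w(e)\in P$ for every $e$. Then by (ii), $w(e)\in\Lam_E(P)\iff w(e)\in S_a$ (A), resp.\ $\iff w(e)\notin S_a$ (B). By (W2) this is equivalent to $e\in K$, resp.\ $e\notin K$. Non-recursiveness follows because $\mred$ preserves it. This is the argument of Lemma~\ref{lem:red}, restated.

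For \emph{(iv)}, $P\cap S_a$ is an intersection of two $\Sig1$ sets, hence $\Sig1$, and it is $\Sig1$-complete by (iii). Dually, $P\setminus S_a=P\cap\Exp(E)$ is $\Pii1$ and $\Pii1$-complete. The general clause is immediate, since $S_a\in\Sig1$ and $\Exp(E)\in\Pii1$.

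For \emph{(v)}, I choose $e\in K$ and $e'\notin K$. Then $\varphi_{w(e)}=\varphi_{w(e')}=\varphi_t$ by (W1), while by (iii) exactly one of $w(e)$, $w(e')$ lies in $\Lam_E(P)$, as in the proof of Theorem~\ref{thm:closure}. Hence Rice's theorem is not applicable, and the hardness in (iii) comes from the s-m-n wrapper inside the fibre of $t$.
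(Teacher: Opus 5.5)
Your proposal is correct and follows the paper's proof essentially step for step. You prove (ii) by the polarity case split, using the computed functions fixed in Definition~\ref{def:oes} together with $(\mathrm{H}_\pi)$; (i) and (iii) via the wrapper $w$ and (W1)--(W2); (iv) by intersecting $P$ with a $\Sig1$ or $\Pii1$ set and invoking the hardness from (iii); and (v) with a pair $e\in K$, $e'\notin K$. The only cosmetic point is that (i) uses only (W2), not (ii), so your announced ordering is unnecessary but harmless.
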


\begin{proof}
(i) $S_a$ is r.e.\ by construction. By (W2), $e\in K\iff w(e)\in S_a$, so $K\mred S_a$ via $w$; $K$ being $\Sig1$-complete, so is $S_a$. Taking complements, $\Exp(E)$ is $\Pii1$-complete. If $\Exp(E)$ were r.e., $S_a$ would be recursive, and so would $K$.

(ii) Polarity A: if $\Str_a(x)$, $\varphi_{E(x)}=\varphi_t\in\Pt$; if not, $\varphi_{E(x)}=\varphi_g\notin\Pt$ by $(\mathrm{H_A})$. Polarity B: if $\neg\Str_a(x)$, $\varphi_{E(x)}=\varphi_t\in\Pt$; if $\Str_a(x)$, $\varphi_{E(x)}=\mathrm{mix}_\sigma(t,f)\notin\Pt$ by $(\mathrm{H_B})$. The formula for $\Lam_E(P)$ is the definition. That it depends only on $(a,\pi)$ is immediate from the formula.

(iii) By (W1), $w(e)\in P$ for every $e$. Polarity A: $w(e)\in\Lam_E(P)\iff w(e)\in S_a\iff e\in K$. Polarity B: $w(e)\in\Lam_E(P)\iff w(e)\notin S_a\iff e\notin K$. Since $\mred$ preserves non-recursiveness, $\Lam_E(P)$ is not recursive.

(iv) By (ii), $\Lam_E(P)$ is the intersection of $P$ with a $\Sig1$ set (A) or a $\Pii1$ set (B); the indicated classes are closed under these intersections, and completeness follows from (iii).

(v) Take $e\in K$ and $e'\notin K$. By (W1), $\varphi_{w(e)}=\varphi_{w(e')}=\varphi_t$; by (iii), exactly one of $w(e)$, $w(e')$ lies in $\Lam_E(P)$.
\end{proof}

\begin{cor}[neither the preserving set nor the untriggered region is r.e.]\label{cor:exp}
Under (B), the set of systems that preserve $P$ under $E$ is $P\setminus S_a$, and it is not r.e.\ (nor is the untriggered region $\Exp(E)$, by Theorem~\ref{thm:main}(i)): there is no procedure that semidecides membership ---no enumeration, not even a partial one, of the preserving systems. Under (A), $\Lam_E(P)=P\cap S_a$ is r.e.\ if $P$ is, but never recursive: success can be certified, failure never.
\end{cor}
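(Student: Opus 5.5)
The plan is to read everything off Theorem~\ref{thm:main} and handle the two polarities separately. For (B), part (ii) of that theorem gives $\Lam_E(P)=P\setminus S_a$. Non-enumerability comes from the reduction in part (iii): $\overline K\mred\Lam_E(P)$ via the wrapper $w$. Suppose $P\setminus S_a$ were r.e. Since $w$ is total computable, $\overline K=w^{-1}(P\setminus S_a)$ would be r.e. too, because the preimage of an r.e. set under a total computable function is r.e. Then $K$ would be recursive, a contradiction. The same argument, or Theorem~\ref{thm:main}(i) directly, shows that $\Exp(E)$ is not r.e. I will state this explicitly for the reader: any semidecision procedure for the preserving systems, composed with $w$, would semidecide $\overline K$.

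For (A), Theorem~\ref{thm:main}(ii) gives $\Lam_E(P)=P\cap S_a$. Here $S_a$ is r.e. by construction. If $P$ is r.e., the intersection of two r.e. sets is r.e.: dovetail a semidecider for $P$ with the search for $s$ satisfying $T(a,x,s)$, and accept when both succeed. Non-recursiveness is exactly Theorem~\ref{thm:main}(iii), which gives $K\mred\Lam_E(P)$. It follows that the complement $\overline{\Lam_E(P)}$ is not r.e.: otherwise $\Lam_E(P)$ would be both r.e. and co-r.e., hence recursive. This is the precise content of "success can be certified, failure never."

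There is no real obstacle. The one point needing care is that non-enumerability in (B) uses the fact that $\mred$ via a total computable map pulls back r.e. sets. Recursive non-membership alone is not enough, so I will invoke the $\overline K$-hardness, not just undecidability. I will also note that the r.e. claim in (A) is conditional on $P$ being r.e., whereas the non-recursiveness holds unconditionally under $(\mathrm{H_A})$.
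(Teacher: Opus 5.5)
Your proposal is correct and follows the paper's proof: in case (B) you pull back r.e.-ness along the wrapper reduction $\overline K\mred P\setminus S_a$, and you read everything else off Theorem~\ref{thm:main}(i)--(iv). One small refinement: under (A), $\overline{\Lam_E(P)}$ is not r.e.\ even when $P$ is not r.e., because r.e.-ness of the complement would pull back along $w$ (from $K\mred\Lam_E(P)$) to make $\overline K$ r.e.; your argument via ``r.e.\ and co-r.e.\ implies recursive'' needs $P$ r.e.
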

\begin{proof}
If $P\setminus S_a$ were r.e., since $\overline K\mred P\setminus S_a$ via $w$, $\overline K$ would be r.e.\ and $K$ recursive. The rest is Theorem~\ref{thm:main}(ii)--(iv).
\end{proof}

\begin{defi}[the classes $\mathcal{E}_{\mathrm A}(P)$ and $\mathcal{E}_{\mathrm B}(P)$]\label{def:classes}
We denote by $\mathcal{E}_{\mathrm A}(P)$ (resp.\ $\mathcal{E}_{\mathrm B}(P)$) the class of semantic elevation operators over $P$ of polarity A (resp.\ B) satisfying $(\mathrm{H}_\pi)$, and by $\mathcal{E}(P)$ their union. Theorem~\ref{thm:main}(ii) says that the map
\[
\mathcal{E}(P)\;\longrightarrow\;\{\text{elevated properties}\},\qquad E\longmapsto\Lam_E(P)\in\{P\cap S_a,\;P\setminus S_a\},
\]
factors through $E\mapsto(a,\pi)$: this is the \emph{invariant} of the class. Two operators with the same trigger and polarity are indistinguishable by the property they elevate, however different their wrappers.
\end{defi}

\begin{rem}[where uniqueness lives]
Point (ii) of the theorem, read through Definition~\ref{def:classes}, is all the content that the expression ``normal form'' has in this framework: the diversity of governance mechanisms is a diversity of wrappers; the elevated property sees only $(a,\pi)$.
\end{rem}

\begin{defi}[anchored normal form]\label{def:anf}
Let $E$ be a total computable function and $\pi\in\{\mathrm A,\mathrm B\}$. We say that $E$ has an \emph{anchored normal form of polarity $\pi$ over $P$} if there are a $\Sig1$ set $S$, an index $t$ with $\varphi_t\in\Pt$ and a total computable $w$ with $\varphi_{w(e)}=\varphi_t$ for every $e$, such that $\Lam_E(P)=P\cap S$ (if $\pi=\mathrm A$) or $\Lam_E(P)=P\setminus S$ (if $\pi=\mathrm B$), and $w(e)\in S\iff e\in K$ for every $e$.
\end{defi}

\begin{prop}[the class is complete for anchored normal forms]\label{prop:complete}
Let $E$ be any total computable function with an anchored normal form of polarity $\pi$ over $P$, with witness $t$ and family $w$. Suppose that $P$ admits polarity $\pi$ at $t$: there is $g$ with $\varphi_g\subseteq\varphi_t$ and $\varphi_g\notin\Pt$ (if $\pi=\mathrm A$), or there is $f$ such that $(t,f)$ satisfies $(\mathrm{H_B})$ (if $\pi=\mathrm B$). Then there is $E'\in\mathcal{E}_\pi(P)$, anchored by the same $w$, with $\Lam_{E'}(P)=\Lam_E(P)$.
\end{prop}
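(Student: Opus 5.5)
The plan is to build $E'$ as the canonical operator with a trigger chosen to recognise $S$. Since $S$ is $\Sig1$, Kleene's normal form (recalled in Section~\ref{sec:def}) supplies an index $a$ with $S=S_a$. We then take $E':=E^{\mathrm A}_{t,a,g}$ if $\pi=\mathrm A$, with the base $g$ from the admissibility hypothesis, and $E':=E^{\mathrm B}_{t,a,f}$ if $\pi=\mathrm B$. Both exist and are total computable by the s-m-n construction that follows Definition~\ref{def:oes}. For polarity A the construction is exact because $\varphi_g\subseteq\varphi_t$, as in Lemma~\ref{rem:BB}(i). For polarity B the race construction directly yields $\varphi_t$ on $\neg\Str_a(x)$ and $\mathrm{mix}_\sigma(t,f)$ otherwise.

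Next we check that $E'$ satisfies Definition~\ref{def:oes} and lies in $\mathcal{E}_\pi(P)$. The polarity clause holds by construction. The hypothesis $(\mathrm H_\pi)$ is exactly the admissibility assumption on $g$ or on $(t,f)$. For the anchor axiom we reuse the given family $w$. (W1) is $\varphi_{w(e)}=\varphi_t$, which is part of the anchored normal form. (W2) asks that $\Str_a(w(e))\iff w(e)\in S_a=S\iff e\in K$, which is the coupling clause of Definition~\ref{def:anf}.

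It remains to compare the elevated properties. Theorem~\ref{thm:main}(ii) applies to $E'$ and gives $\Lam_{E'}(P)=P\cap S_a=P\cap S$ under A and $\Lam_{E'}(P)=P\setminus S_a=P\setminus S$ under B. By the anchored normal form, this equals $\Lam_E(P)$. Note that $E$ itself plays no further role: only its normal form is used, which is the intended point, since the elevated property sees only $(S,\pi)$.

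The one step requiring care is that the trigger's witness $t$ must coincide with the $t$ of the normal form, because $(\mathrm H_\mathrm B)$ depends on the index $t$ and on its clock. This is why the admissibility hypothesis is stated at $t$, and we use that very index in the canonical construction. Apart from this, the argument is bookkeeping. Strictly, the non-recursiveness of $S_a$ is not needed as a hypothesis, since it follows from the anchor via Theorem~\ref{thm:main}(i).
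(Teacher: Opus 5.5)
Your proposal is correct and follows the paper's proof essentially step for step. You choose $a$ with $S_a=S$ by Kleene's normal form, take the canonical operator $E^{\pi}_{t,a,g}$ or $E^{\pi}_{t,a,f}$ at the same index $t$, read (W1), (W2) and $(\mathrm H_\pi)$ off the anchored normal form and the admissibility hypothesis, and conclude with Theorem~\ref{thm:main}(ii); your added point that $t$, and hence its clock, must be the index of the normal form is a useful precision the paper leaves implicit.
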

\begin{proof}
By Kleene's normal form there is $a$ with $S_a=S$. Let $E'$ be the canonical operator $E^{\pi}_{t,a,g}$ (resp.\ $E^{\pi}_{t,a,f}$). Condition (W1) is $\varphi_{w(e)}=\varphi_t$; condition (W2) is $\Str_a(w(e))\iff w(e)\in S\iff e\in K$; and $(\mathrm{H}_\pi)$ is the hypothesis. Hence $E'\in\mathcal{E}_\pi(P)$, and by Theorem~\ref{thm:main}(ii) its elevated property is $P\cap S_a$ or $P\setminus S_a$, which is $\Lam_E(P)$.
\end{proof}

\begin{rem}[what this settles, and what it does not]\label{rem:complete}
Proposition~\ref{prop:complete} says that Definition~\ref{def:oes} is not a choice: up to the elevated property, every mechanism with an anchored normal form is one of the canonical operators, so the class $\mathcal{E}(P)$ is derived from the phenomenon rather than stipulated. Moreover, the degree argument in the proof of Proposition~\ref{prop:RS} uses only the anchor and the polarity, and therefore applies to every mechanism with an anchored normal form: if $\pi=\mathrm A$, then $w(e)\in P$ and $E(w(e))\in P\iff w(e)\in\Lam_E(P)\iff e\in K$, so $K\mred P$ and $P\notin\Pii1$; if $\pi=\mathrm B$, symmetrically $\overline K\mred P$ and $P\notin\Sig1$. No mechanism whatsoever with an anchored normal form enters a $\Pii1$ property or leaves a $\Sig1$ one. What remains open is to characterise anchored normal forms by the action of $E$ itself, fibre by fibre (Problem~\ref{prob:inert}).
\end{rem}

\begin{rem}[what lies outside the class]\label{rem:fora}
There are total computable, intensional, $K$-hard operators that do not belong to $\mathcal{E}(P)$, of three kinds. \emph{Infinitely many events.} Let $\delta$ be a syntactic extractor and $E$ such that $\varphi_{E(x)}(z)=0$ if there is $y\ge z$ with $y\in W_{\delta(x)}$ (search), and undefined otherwise. Then $\varphi_{E(x)}$ is total iff $W_{\delta(x)}$ is infinite, so $E^{-1}(\TOT)=\{x:W_{\delta(x)}\text{ infinite}\}$, a $\Pii2$-complete set (for $\delta$ with $\delta(w(y))=y$ on a generator $w$, $\{y:W_y\text{ infinite}\}$ reduces to it via $w$), and $E$ is intensional (two indices of the same function with different $\delta$). No $\Sig1$ trigger determines it: by (ii), every $E\in\mathcal{E}(\TOT)$ has $E^{-1}(\TOT)$ in $\Sig1$ or in $\Pii1$. The reaction depends on a $\forall$ in front of the $\exists$, and it is exactly what Section~\ref{sec:closure} obtains by iteration ($\Lam^\omega_\Phi$, $\Pii2$-completeness). \emph{One event, non-uniform polarity.} For $P$ admitting both polarities with base $\emptyfn$ ($\TOT$, $\{\mathrm{id}\}$), the operator that is (A) on a decidable region and (B) on another (Section~\ref{sec:open}) has an anchor and reacts to a single event, but $E^{-1}(P)$ is a symmetric difference of $S_a$ with a recursive set, neither $\Sig1$ nor $\Pii1$. \emph{Own base.} The relational operator $E_{\mathrm{rel}}$ (Section~\ref{sec:axes}) has a wrapper and an anchor and inherits the hardness (iii), but it is not a literal instance: outside the trigger it returns the system itself, not a fixed $\varphi_t$, so $E^{-1}(P)\ne\Exp(E)$. The form (ii) holds for it when the trigger is confined to systems on which $(\mathrm{H_B})$ holds pointwise, and fails otherwise (Proposition~\ref{prop:rel} and the following remark). The class $\mathcal{E}(P)$ is thus that of operators reacting to \emph{one} finite event, \emph{with a single polarity} and \emph{with a fixed witness}; none of the three conditions follows from having an anchor, and the third is the one that an operator with its own base can recover by confinement. The relation to Theorems~\ref{thm:closure} and~\ref{thm:pi2} is: one step, one event; the limit, infinitely many. The study of operators outside the class is left to Section~\ref{sec:open}.
\end{rem}

\subsection{Realisability: the polarity against the grain of the class}\label{sec:real}

Recall the Rice--Shapiro theorem (Section~\ref{sec:prelim}): if the index set $P$ of $\Pt$ is r.e., then for every $g\in\PC$,
$g\in\Pt \iff \exists\,\theta\subseteq g$ finite with $\theta\in\Pt$.

\begin{prop}[realisability: Rice--Shapiro and polarity]\label{prop:RS}
Let $\Pt$ be non-trivial.
\begin{enumerate}[label=(\roman*),leftmargin=2.2em]
\item If $P\in\Pii1$, then every partial computable subfunction of an element of $\Pt$ is in $\Pt$; in particular $\emptyfn\in\Pt$ and, for every $g$ with $\varphi_g\subseteq\varphi_t$, $\varphi_g\in\Pt$: $(\mathrm{H_A})$ fails for every base and no operator of polarity A elevates $P$.
\item If $P\in\Sig1$, then for every $t$ with $\varphi_t\in\Pt$ and every $f$, $\mathrm{mix}_\sigma(t,f)\in\Pt$ for $\sigma$ large enough: $(\mathrm{H_B})$ fails and no operator of polarity B elevates $P$.
\item There are properties $\Pt$ with $P\in\Pii2\setminus(\Sig1\cup\Pii1)$ for which both polarities are realisable; $\TOT$ and $\{\mathrm{id}\}$ are examples. Consequently, $\mathcal{E}_{\mathrm A}(P)=\emptyset$ if $P\in\Pii1$, $\mathcal{E}_{\mathrm B}(P)=\emptyset$ if $P\in\Sig1$, and both classes are non-empty for $\TOT$ and $\{\mathrm{id}\}$.
\end{enumerate}
\end{prop}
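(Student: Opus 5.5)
Part (i) follows from Rice--Shapiro applied to the complement; part (ii) from Rice--Shapiro applied to $P$ itself; part (iii) by explicit constructions for $\TOT$ and $\{\mathrm{id}\}$, supplemented by the degree argument of Remark~\ref{rem:complete} as a cross-check.

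\emph{Part (i).} If $P\in\Pii1$, then $\overline P$ is r.e. It is the index set of the behavioural property $\PC\setminus\Pt$. Let $h\in\Pt$ and let $g\subseteq h$ be partial computable. Suppose $g\notin\Pt$. By Rice--Shapiro (Theorem~\ref{thm:RS}) applied to $\overline P$, there is a finite $\theta\subseteq g$ with $\theta\notin\Pt$. Since $\theta\subseteq h$, the converse direction of Rice--Shapiro for $\overline P$ gives $h\notin\Pt$, a contradiction. Hence $\Pt$ is closed under partial computable subfunctions. Taking $g=\emptyfn$, or any $\varphi_g\subseteq\varphi_t$, shows that $(\mathrm{H_A})$ fails for every base. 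So no operator of polarity A exists, that is, $\mathcal{E}_{\mathrm A}(P)=\emptyset$.

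\emph{Part (ii).} If $P\in\Sig1$ and $\varphi_t\in\Pt$, Rice--Shapiro gives a finite $\theta\subseteq\varphi_t$ with $\theta\in\Pt$. Choose $\sigma$ larger than $\tau_t(z)$ for every $z\in\mathrm{dom}\,\theta$; this is possible since $\theta$ is finite. Then $\theta\subseteq\varphi_t\restr\{z:\tau_t(z)<\sigma\}\subseteq\mathrm{mix}_\sigma(t,f)$ for any $f$. Applying the other direction of Rice--Shapiro, every partial computable extension of a finite member lies in $\Pt$, so $\mathrm{mix}_\sigma(t,f)\in\Pt$. One check is needed here: $\mathrm{mix}_\sigma(t,f)$ must be partial computable. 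It is, because it is computed by the race in Definition~\ref{def:oes}(B) with the fixed finite $\sigma$ in place of the trigger. Hence $(\mathrm{H_B})$ fails and $\mathcal{E}_{\mathrm B}(P)=\emptyset$.

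\emph{Part (iii).} For both examples I take the anchor to be the inert wrapper of Lemma~\ref{lem:inert} with trigger $\varphi_{\delta(x)}(\delta(x))\!\downarrow$, so that (W1) and (W2) hold automatically. For $\Pt=\{\mathrm{id}\}$, take $t$ an index of $\mathrm{id}$.
\begin{itemize}
\item Polarity A uses $g=\emptyfn$ ($\emptyfn\subseteq\mathrm{id}$ and $\emptyfn\notin\Pt$); this is the construction of Section~\ref{sec:nonvac}.
\item Polarity B uses truncation, Lemma~\ref{rem:BB}(iii): $f=\emptyfn$, so $\mathrm{mix}_\sigma$ is a finite restriction of $\mathrm{id}$, which is never $\mathrm{id}$.
\end{itemize}
For $\TOT$, take $t$ an index of $\lambda z.0$ and use the same bases: $\emptyfn$ is not total, and every finite restriction is not total. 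Both index sets are $\Pii2$. Neither is $\Sig1$ nor $\Pii1$, by parts (i)--(ii) themselves: each admits both polarities, and Remark~\ref{rem:complete}'s degree argument gives $K\mred P$ and $\overline K\mred P$.

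The main obstacle is the closure step in (ii), namely applying Rice--Shapiro to $\mathrm{mix}_\sigma(t,f)$. This requires that the object be a partial computable function containing the finite witness. Containment relies on the clock convention $\tau_t(z)\ge z$ and finiteness from Section~\ref{sec:prelim}. Computability for a fixed $\sigma$ also needs to be stated explicitly.
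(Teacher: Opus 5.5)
Your proof is correct and follows the paper's own route. Part (i) applies Rice--Shapiro to the complement, part (ii) applies Rice--Shapiro to $P$ with $\sigma$ chosen beyond the clocks of $\mathrm{dom}\,\theta$, and part (iii) uses the bases $g=f=\emptyfn$ (empty base and truncation) for $\TOT$ and $\{\mathrm{id}\}$, where your use of (i)--(ii) and the degree argument to place these sets outside $\Sig1\cup\Pii1$ stands in for the paper's appeal to their $\Pii2$-completeness. One small misattribution: the containment $\theta\subseteq\mathrm{mix}_\sigma(t,f)$ in (ii) needs only that $\theta$ is finite, not the convention $\tau_t(z)\ge z$, which is what the truncation step in (iii) uses.
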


\begin{proof}
(i) The complement $\Pt^c$ has r.e.\ index set; by Rice--Shapiro, $h\in\Pt^c$ iff some finite restriction of $h$ is in $\Pt^c$, and hence $\Pt^c$ is upward closed under extension: if $\theta\in\Pt^c$ and $\theta\subseteq h$, then $h\in\Pt^c$. Contrapositively, $\Pt$ is downward closed: every partial computable subfunction of an element of $\Pt$ is in $\Pt$. With $\varphi_t\in\Pt$, every $\varphi_g\subseteq\varphi_t$ is in $\Pt$. (For $\emptyfn$: if $\emptyfn\in\Pt^c$, then $\Pt^c=\PC$, against non-triviality.)

(ii) By Rice--Shapiro applied to $\varphi_t\in\Pt$, there is a finite $\theta\subseteq\varphi_t$ with $\theta\in\Pt$, and every extension of $\theta$ is in $\Pt$. Let $\sigma_0:=1+\max\{\tau_t(z):z\in\mathrm{dom}\,\theta\}$. For $\sigma\ge\sigma_0$, $\mathrm{mix}_\sigma(t,f)\supseteq\theta$, so $\mathrm{mix}_\sigma(t,f)\in\Pt$.

(iii) $\TOT$: $\emptyfn$ is not total ($\mathrm{H_A}$ with $g=\emptyfn$); no finite restriction of a total function is total (truncation, $\mathrm{H_B}$). $\{\mathrm{id}\}$: likewise. Both index sets are $\Pii2$-complete, in particular outside $\Sig1\cup\Pii1$. The example $\Pt=\TOT\cup\{\emptyfn\}$ shows why the base $g$ of (A) cannot always be $\emptyfn$: here $\emptyfn\in\Pt$, but with $g$ a non-empty finite restriction of $\varphi_t$ one has $\varphi_g\notin\Pt$ and (A) is realisable.

\emph{Second proof of (i) and (ii), by degrees.} If $E\in\mathcal{E}_{\mathrm A}(P)$ with $P\in\Pii1$, by (W2) and the polarity $e\in K\iff E(w(e))\in P$: $K\mred P$ via $E\circ w$, and $K$ would be $\Pii1$, hence recursive. If $E\in\mathcal{E}_{\mathrm B}(P)$ with $P\in\Sig1$, $e\notin K\iff E(w(e))\in P$: $\overline K\mred P$ and $\overline K$ would be r.e., hence $K$ recursive. Neither uses Rice--Shapiro; what Rice--Shapiro adds is \emph{which hypothesis fails} ---$(\mathrm{H_A})$ or $(\mathrm{H_B})$--- and hence that the obstruction is not accidental but one of form: a finite trigger can only make a system enter a property that finite evidence certifies, and only make it leave one that finite evidence refutes.
\end{proof}

\begin{rem}[reading]
A trigger is finite evidence. A $\Sig1$ property is, by Rice--Shapiro, one in which finite evidence certifies \emph{membership}; it can only be \emph{entered} through a trigger (polarity A). A $\Pii1$ property is one in which finite evidence certifies \emph{non-membership}; it can only be \emph{left} through a trigger (polarity B). The polarity is not a design choice: it runs against the grain of the class. For properties above $\Sig1\cup\Pii1$ the restriction no longer comes from Rice--Shapiro, and what remains depends on the property: both directions are open for $\TOT$ and $\{\mathrm{id}\}$, but not always ---the $\Pii2$-complete property $\{g : g\subseteq 0\restr K\}$ is downward closed, so $(\mathrm{H_A})$ fails for every base and only B is admitted. This is the precise form of the intuition ``search vs.\ safety'' (\emph{liveness} vs.\ \emph{safety}) within this framework. Its counterpart for trace properties is the fact that the violation of a safety property, and the satisfaction of a co-safety property, are witnessed by a finite prefix \cite{AS85}. Monitorability in the sense of Pnueli and Zaks is strictly broader: a trace property is monitorable exactly when its boundary in the Cantor topology has empty interior, which holds for every set in $G_\delta\cap F_\sigma$, including properties that are neither safety nor co-safety (Diekert and Leucker \cite{DL14}; see \cite{BLS11} for monitor constructions). Its natural counterpart here would be operators reacting to several finite events (Problem~\ref{prob:multi}), not the single-trigger class of this section.
\end{rem}

\begin{exa}
For $P_0=\{e:\varphi_e(0)\!\downarrow\}\in\Sig1$ only polarity A is possible, and $\Lam_E(P_0)$ is $\Sig1$-complete. For $P_1=\{e:\varphi_e(0)\!\uparrow\}\in\Pii1$ only B is possible, realised by augmentation ($\theta=\{(0,0)\}$), and $\Lam_E(P_1)$ is $\Pii1$-complete. For $\TOT$, polarity A gives $\TOT\cap S_a$ and B gives $\TOT\setminus S_a$, both $\Pii2$ and non-recursive.
\end{exa}

\subsection{Topological remark: finite evidence and monitors}\label{sec:topo}

Fix a universal machine $M_U$ and, for an index $x$ and an input $z$, the \emph{trace} of $\varphi_x(z)$: the sequence of configurations of $M_U$ when simulating $x$ on $z$, starting from the initial configuration, which contains the code of $x$ (two different indices of the same function have different traces from the first element on). Let $\mathrm{Conf}^{\le\omega}$ be the trace space and, for a finite trace $\eta$, $[\eta]$ the cylinder of traces extending it. Cylinders form a basis of clopen sets; a \emph{safety property} in the sense of Alpern--Schneider \cite{AS85} is a closed set of this space, and its violation is an open set: a trace violates safety iff some finite prefix of it already violates it.

\begin{prop}[monitors semidecide violation]\label{prop:mon}
Let $M$ be a monitor: a computable procedure that reads incrementally the trace of a base computation ---that of $\varphi_x$ on a fixed input, or the interleaving of the traces on all inputs--- and may, at some point, raise an alarm. The set of traces on which $M$ raises an alarm is open (a union of cylinders), and the set of indices $x$ such that $M$ raises an alarm on the trace of $\varphi_x$ is $\Sig1$. Conversely, every $\Sig1$ set of indices is the alarm set of some monitor: the one that reads $x$ from the initial configuration, simulates $\varphi_a(x)$ and raises an alarm when it halts. (A monitor without access to the initial configuration ---only to the transitions--- would be extensional in the trace and could not be anchored on an inert family of identical traces; this is the trace version of Corollary~\ref{cor:ext}(ii).)
\end{prop}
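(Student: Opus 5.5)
The proposition has three claims: the alarm set of traces is open, the alarm set of indices is $\Sig1$, and every $\Sig1$ set is the alarm set of some monitor. I will prove them in that order.

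\emph{Openness.} The plan is to formalise a monitor as a computable procedure that, after reading a finite prefix $\eta$ of the trace, has either raised an alarm or not. Whether it has done so depends only on $\eta$, because it reads incrementally and cannot look ahead. If $M$ raises an alarm on a trace $\rho$, it does so at some finite stage, after reading a finite prefix $\eta\sqsubseteq\rho$. Every trace in $[\eta]$ then produces the same computation of $M$ up to that stage, so $M$ also alarms on all of $[\eta]$. The alarm set is therefore the union of the cylinders $[\eta]$ over the finite prefixes at which $M$ alarms, hence open.

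\emph{$\Sig1$ upper bound.} I would write the index alarm set as
\[
\{x:\exists n\ \exists s\ (\eta_{x,n}\text{ is the length-}n\text{ prefix of the base trace of }x\ \wedge\ M\text{ alarms on }\eta_{x,n}\text{ within }s\text{ steps})\}.
\]
For a fixed input, the prefix $\eta_{x,n}$ is computable from $x$ and $n$ by running $M_U$ for $n$ transitions. For the interleaving over all inputs, I would use a fixed computable dovetailing schedule, so the prefix is still computable from $(x,n)$. A finite trace that halts early simply stops growing, and this is harmless because the quantifier ranges only over prefixes that exist, which is a decidable condition. The matrix is then decidable and the whole set is $\Sig1$.

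\emph{Converse.} Let $S$ be $\Sig1$. By Kleene's normal form, as used in Section~\ref{sec:def}, $S=S_a$ for some $a$. The monitor $M_a$ reads the initial configuration, which contains the code of $x$. It then ignores the rest of the trace, simulates $\varphi_a(x)$ step by step, interleaving this with reading further trace symbols so that it stays an incremental procedure, and raises an alarm when the simulation halts. This monitor alarms on the trace of $x$ iff $\Str_a(x)$, i.e.\ iff $x\in S$.

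The parenthetical remark is not part of the formal claim. I would justify it briefly: on an inert family $\varphi_{w(e)}=\varphi_t$, a transition-only monitor sees traces that are, modulo the code-bearing initial configuration, identical. Its verdict is then constant in $e$ and cannot satisfy (W2).

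The only delicate point I expect is the formalisation of ``monitor'', so that openness holds. I would fix it as a Turing machine with one-way read access to the trace, which must commit to the alarm at a finite stage. With that definition the continuity argument is immediate, and everything else is routine.
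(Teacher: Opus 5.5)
Your argument is correct and follows essentially the paper's proof: openness because an alarm is raised after reading a finite prefix $\eta$ and hence on all of $[\eta]$, the $\Sig1$ bound as an existential over a decidable matrix, and the converse by Kleene's normal form $S_a=\{x:\exists s\,T(a,x,s)\}$; your explicit step bound $s$ and computable prefixes $\eta_{x,n}$ just spell out what the paper calls ``an $\exists$ over a decidable condition''. One correction concerns your gloss on the informal parenthetical: inertness $\varphi_{w(e)}=\varphi_t$ does not by itself make the traces of the $w(e)$ coincide after the initial configuration (a wrapper may, on input $z$, first run $z$ steps of $\varphi_e(e)$ and then compute $\varphi_t(z)$, so that the interleaved trace reveals whether $e\in K$), which is why the paper states the remark only for an inert family \emph{of identical traces}, where the constancy of the verdict is immediate.
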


\begin{proof}
If $M$ raises an alarm on a trace, it does so after reading a finite prefix $\eta$ of it, and raises it equally on every trace in $[\eta]$; the alarm set is $\bigcup[\eta]$ over the $\eta$ that trigger it. The index set is $\{x:\exists s\,(\text{the prefix of length } s \text{ of the trace of } \varphi_x \text{ triggers an alarm})\}$, an $\exists$ over a decidable condition. The converse is Kleene's normal form: $S_a=\{x:\exists s\,T(a,x,s)\}$.
\end{proof}

\begin{rem}[relation to enforceable policies]
Proposition~\ref{prop:mon} is the index-set counterpart of the characterisation of the policies enforceable by execution monitoring as exactly the co-r.e.\ ones (Viswanathan \cite{Vis00}; Hamlen, Morrisett and Schneider \cite[Theorem~3.2.1]{HMS06}). As there, the monitor is given access to the program text (their event $e_M$); without that access the monitor is extensional in the trace and no anchor can be built (Corollary~\ref{cor:ext}(ii)).
\end{rem}

\begin{rem}[classical triggers]
Every $\Sig1$-complete set in mathematics is, by Proposition~\ref{prop:mon}, a possible trigger: the set of Diophantine equations with a solution (Matiyasevich), the word problem in a finitely presented group with unsolvable word problem (Novikov--Boone), provability in an essentially undecidable theory. They are examples of $S_a$, not of operators: they have no base system and no inert wrapper, and their undecidability is imported. The framework does not prove them; it only places them as possible ``engines'' of an operator that someone might build on them.
\end{rem}

\begin{cor}[what the monitor achieves]\label{cor:opt}
Let $E\in\mathcal{E}_{\mathrm B}(P)$ with trigger $a$. Violation ($S_a$) is semidecidable and preservation ($P\setminus S_a$) is not. The canonical monitor ---simulate $\varphi_a(x)$ and raise an alarm when it halts--- semidecides violation; no computable procedure semidecides preservation for \emph{all} the systems that preserve it (a procedure can certify an r.e.\ subset of them, for instance those with a proof in a fixed system, but never the totality).
\end{cor}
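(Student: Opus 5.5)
The corollary has three claims: violation $S_a$ is semidecidable, the canonical monitor semidecides it, and preservation $P\setminus S_a$ is not semidecidable. I would assemble all three from results already proved, with no new construction.

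\emph{Violation.} $S_a=\{x:\exists s\,T(a,x,s)\}$ is r.e. by definition. The canonical monitor simulates $\varphi_a(x)$ step by step and raises an alarm at the first $s$ with $T(a,x,s)$. Hence its alarm set is exactly $S_a$, which is the converse half of Proposition~\ref{prop:mon}. I would add that, by Theorem~\ref{thm:main}(ii) for polarity B, $E^{-1}(P)=\Exp(E)$. So, among systems already in $P$, a system leaves $P$ under $E$ exactly when $x\in S_a$. The monitor therefore semidecides precisely the failure of preservation on $P$.

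\emph{Preservation.} By Theorem~\ref{thm:main}(ii), under (B) the preserving set is $\Lam_E(P)=P\setminus S_a$. By Theorem~\ref{thm:main}(iii), $\overline K\mred P\setminus S_a$ via the wrapper $w$. This holds because $w(e)\in P$ always, by (W1), and $w(e)\in S_a\iff e\in K$, by (W2). If $P\setminus S_a$ were r.e., $\overline K$ would be r.e., so $K$ would be recursive, a contradiction. This is the argument of Corollary~\ref{cor:exp}, which I would simply cite. I would phrase ``no procedure semidecides preservation for all preserving systems'' as: there is no partial computable $M$ whose halting set equals $P\setminus S_a$.

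\emph{Parenthetical claim.} That some procedure certifies an r.e. subset needs only an example. Any r.e. $R\subseteq P\setminus S_a$ works, for instance the set of $x$ for which a fixed sound theory proves ``$x\in P$ and $\varphi_a(x)\!\uparrow$''. Soundness gives the inclusion, and enumerating proofs gives r.e.-ness. Since $P\setminus S_a$ is not r.e., every such $R$ is a proper subset.

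I do not expect a real obstacle. The one point needing care is to make the quantifier explicit: the claim is that no single procedure has halting set exactly $P\setminus S_a$. It is not that individual preserving systems cannot each be certified by some procedure, since a finite set is trivially r.e.
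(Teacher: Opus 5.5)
Your proposal is correct and takes the same route as the paper. The paper's proof consists of your two citations: the converse half of Proposition~\ref{prop:mon} for the canonical monitor semideciding $S_a$, and Corollary~\ref{cor:exp} (via $\overline K\mred P\setminus S_a$ through $w$) for the failure of semidecidability of preservation. Your treatment of the parenthetical, where a sound r.e.\ certification scheme yields an r.e.\ and hence proper subset, is not spelled out in that proof but matches the argument of Proposition~\ref{prop:proof}.
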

\begin{proof}
$S_a$ is r.e.\ and the canonical monitor semidecides it (Proposition~\ref{prop:mon}). $P\setminus S_a$ is not r.e.\ by Corollary~\ref{cor:exp}.
\end{proof}

\begin{rem}
No alternative architecture semidecides preservation for all preserving systems ---not because simulation is the only possible anatomy, but because what a procedure can certify on finite evidence is $\Sig1$, and the untriggered region is not. Architectures that certify more than the monitor (a monitor run alongside a proof search, say) certify r.e.\ subsets of the untriggered region, never the whole. This is the dynamic version of Schneider's characterisation \cite{Sch00} of the policies enforceable at run time.
\end{rem}

\begin{cor}[the extensional/intensional line]\label{cor:ext}
\begin{enumerate}[label=(\roman*),leftmargin=2.2em]
\item Every $E\in\mathcal{E}(P)$ is intensional, and so is its trigger: there are $x,y$ with $\varphi_x=\varphi_y$ and $\Str_a(x)\wedge\neg\Str_a(y)$.
\item If a trigger $a$ is extensional ($\varphi_x=\varphi_y\Rightarrow(\Str_a(x)\Leftrightarrow\Str_a(y))$), then $S_a$ is the index set of an r.e.\ behavioural property $\widetilde S$, and $\Str_a(x)$ iff some finite restriction of $\varphi_x$ is in $\widetilde S$: an extensional trigger can only detect a finite piece of the function, and can never be anchored to $K$ on an inert family.
\end{enumerate}
\end{cor}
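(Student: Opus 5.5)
For (i) I would argue straight from what has already been proved. Let $E\in\mathcal{E}(P)$ have trigger $a$ and wrapper $w$. Theorem~\ref{thm:main}(i) shows that $S_a$ is $\Sig1$-complete, so $K$ is not recursive and we may pick $e\in K$ and $e'\notin K$. By (W1), $\varphi_{w(e)}=\varphi_{w(e')}=\varphi_t$. By (W2), $\Str_a(w(e))$ holds and $\Str_a(w(e'))$ fails. Setting $x=w(e)$ and $y=w(e')$ therefore shows that the trigger is intensional.

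The same pair also witnesses that $E$ itself is intensional. Under polarity A, Theorem~\ref{thm:main}(ii) gives $\varphi_{E(x)}=\varphi_t\in\Pt$, while $\varphi_{E(y)}=\varphi_g\notin\Pt$ by $(\mathrm{H_A})$. Under polarity B the roles reverse: $\varphi_{E(y)}=\varphi_t$ and $\varphi_{E(x)}=\mathrm{mix}_\sigma(t,f)\notin\Pt$ by $(\mathrm{H_B})$. In either case $\varphi_{E(x)}\ne\varphi_{E(y)}$ although $\varphi_x=\varphi_y$, so $E$ is not extensional. This is essentially Corollary~\ref{cor:abc}(b').

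For (ii) I would first set $\widetilde S:=\{\varphi_x:x\in S_a\}$. Extensionality of the trigger is exactly the statement that $S_a$ is closed under equality of computed functions, so $S_a$ is the index set of $\widetilde S$. That index set is r.e., since $S_a=\{x:\exists s\,T(a,x,s)\}$. Rice--Shapiro (Theorem~\ref{thm:RS}) then gives $\varphi_x\in\widetilde S$ iff some finite $\theta\subseteq\varphi_x$ lies in $\widetilde S$, which is the claimed characterisation.

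The non-anchoring clause is the one place that needs a little care. Suppose $w$ is a total computable function whose outputs $w(e)$ all compute the same function, the inert-family condition (W1). Because the trigger is extensional, $\Str_a(w(e))$ takes the same truth value for every $e$. Hence $\{e:\Str_a(w(e))\}$ is either $\emptyset$ or $\N$, and it cannot equal the non-empty proper subset $K$, so (W2) fails. Combined with Theorem~\ref{thm:main}(i), this also confirms that no operator of the class has an extensional trigger. I expect no real obstacle. The only point to state explicitly is that closure of $\widetilde S$ under function equality is what Rice--Shapiro requires, and this is exactly the extensionality hypothesis on $a$.
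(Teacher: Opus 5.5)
Your proposal is correct and follows the paper's proof essentially step for step. For (i) you use the pair $w(e),w(e')$ with $e\in K$, $e'\notin K$; for (ii) you use closure of $S_a$ under equality of functions together with Rice--Shapiro, and then the constancy of an extensional trigger on an inert family. The one addition is your explicit check, via $(\mathrm{H}_\pi)$, that $E$ itself separates that pair. The paper leaves this implicit in Corollary~\ref{cor:abc}(b').
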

\begin{proof}
(i) By (W1), $\varphi_{w(e)}=\varphi_{w(e')}$ for all $e,e'$; by (W2), $\Str_a(w(e))$ iff $e\in K$; take $e\in K$, $e'\notin K$. (ii) $S_a$ is r.e.\ and, by hypothesis, closed under equality of functions: it is the index set of a behavioural $\widetilde S$ with r.e.\ index set; Rice--Shapiro gives the characterisation by finite restrictions. On an inert family all the $w(e)$ compute the same function, so $\Str_a(w(e))$ is constant in $e$ and cannot express $e\in K$.
\end{proof}

\begin{rem}
The question that elevation poses is extensional ---what will the system do---; the trigger that answers it is, necessarily, a search over the letter of the code. This is what Myhill--Shepherdson \cite{MS55} (and its analogue for total functions, Kreisel--Lacombe--Shoenfield \cite{KLS57}) leaves to an extensional operation (detecting a finite restriction) and what the anchor asks for beyond it.
\end{rem}

\subsection{Instances: axes and catalogue}\label{sec:axes}

\begin{defi}[axis, instance]\label{def:eix}
An \emph{observation} is a map ob on $\PC$ (for instance $\mathrm{ob}(g)=g$, $\mathrm{dom}\,g$, $\mathrm{ran}\,g$, the discrepancy set with a fixed reference). A property is \emph{of axis} ob if $\Pt=\mathrm{ob}^{-1}(Q)$ for some $Q$: it depends on behaviour only through ob. An \emph{instance} of the axis is a tuple $(\mathrm{ob},Q,t,a,\pi,\text{base})$ satisfying Definition~\ref{def:oes} and the anchor axiom.
\end{defi}

The four axes of this section are the observations $g$, $\mathrm{dom}\,g$, $\mathrm{ran}\,g$ and the discrepancy set $\{z:g(z)\!\downarrow\ne\varphi_r(z)\}$ with a total reference $\varphi_r$. Proposition~\ref{prop:RS} does not see ob: it only sees the arithmetical class of $P$. Hence the axis decides \emph{what} is governed; the class of $P$ constrains the polarity and the base; and the normal form $(a,\pi)$ is the same for all. ``Rich in axes, unique in form'' means exactly this.

\subsubsection{Functional axis: instrumented rewriting}

\begin{prop}\label{prop:func}
Let $\Pt$ be non-trivial with $\emptyfn\notin\Pt$, and let $\Phi\in\Rinst(P)$ (Section~\ref{sec:rinst}, with the synthesiser producing $\varphi_{\mathrm{inj}(t,e)}=\varphi_t$ if $e\in K$ and $=\emptyfn$ otherwise). Then $\Phi$ is a semantic elevation operator over $P$ of polarity A, with trigger $a$ such that $\varphi_a(x)\simeq\varphi_{\delta(x)}(\delta(x))$ and with the generator $w$ of $\Rinst$ as wrapper. In particular Lemma~\ref{lem:red} and Theorem~\ref{thm:rinst} are the case $\pi=\mathrm{A}$ of Theorem~\ref{thm:main}(iii).
\end{prop}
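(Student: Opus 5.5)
The plan is to verify the clauses of Definition~\ref{def:oes} one by one for the given $\Phi\in\Rinst(P)$, with polarity $\pi=\mathrm A$, witness $t$ and base $g=\emptyfn$. The trigger comes first. The map $x\mapsto\varphi_{\delta(x)}(\delta(x))$ is partial computable in $x$, because $\delta$ is total computable. By the s-m-n theorem (or simply the enumeration theorem) there is an index $a$ with $\varphi_a(x)\simeq\varphi_{\delta(x)}(\delta(x))$ for all $x$. Hence $\Str_a(x)\iff\delta(x)\in K$.

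Next comes the polarity clause (A). By Definition~\ref{def:rinst}, $\Phi(x)=\mathrm{inj}(t,\delta(x))$. So $\varphi_{\Phi(x)}=\varphi_t$ when $\delta(x)\in K$, that is, when $\Str_a(x)$, and $\varphi_{\Phi(x)}=\emptyfn$ otherwise. This is exactly (A) with $g$ an index of $\emptyfn$. The side condition $\varphi_g\subseteq\varphi_t$ holds trivially. Hypothesis $(\mathrm{H_A})$, $\emptyfn\notin\Pt$, is the standing assumption of the proposition.

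The anchor axiom uses the generator $w$ of $\Rinst$. Condition (W1) is the requirement $\varphi_{w(e)}=\varphi_t$ built into Definition~\ref{def:rinst}. Condition (W2) follows from compatibility $\delta(w(e))=e$: we get $\Str_a(w(e))\iff\varphi_e(e)\!\downarrow\iff e\in K$. The point I expect to need care is that Definition~\ref{def:oes} demands (A) for \emph{every} $x$, not only on the range of $w$. This is where the decomposition $\Phi=\mathrm{inj}(t,\cdot)\circ\delta$, imposed globally in Definition~\ref{def:rinst}, matters. The identity holds on all $x$, including those outside the range of $w$, where $\delta$ returns its fixed default value $e_0$. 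There the trigger is simply the constant $e_0\in K$, and (A) still holds. I should not reach for the extension clause after Definition~\ref{def:oes}, which is only needed for operators defined on wrappers alone.

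For the final sentence, Theorem~\ref{thm:main}(iii) with $\pi=\mathrm A$ gives $K\mred\Lam_\Phi(P)$ via $w$. The proof there computes $w(e)\in\Lam_\Phi(P)\iff w(e)\in S_a\iff e\in K$. This is the same chain as in Lemma~\ref{lem:red} and Theorem~\ref{thm:rinst}, which check (i) and (ii) of Definition~\ref{def:unif} on the same $w$. So those two results are recovered as this special case.
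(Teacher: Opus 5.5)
Your proposal is correct and follows the paper's proof essentially line for line. The paper obtains the trigger $a$ by s-m-n and reads off polarity A with base $\emptyfn$ from the decomposition $\Phi(x)=\mathrm{inj}(t,\delta(x))$. It gets (W1) from the definition of $\Rinst$ and (W2) from $\delta(w(e))=e$. Your explicit checks of $(\mathrm{H_A})$ and of clause (A) outside the range of $w$ are points the paper leaves implicit, and they are sound. You do not even need $\delta$ to take a default value there, since the decomposition holds for every $x$.
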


\begin{proof}
$a$ exists by s-m-n; $\Str_a(x)\iff\varphi_{\delta(x)}(\delta(x))\!\downarrow$. By definition of $\Rinst$, $\varphi_{\Phi(x)}=\varphi_{\mathrm{inj}(t,\delta(x))}$, which is $\varphi_t$ if $\Str_a(x)$ and $\emptyfn$ otherwise: polarity A. (W1): $\varphi_{w(e)}=\varphi_t$. (W2): $\delta(w(e))=e$, so $\Str_a(w(e))\iff\varphi_e(e)\!\downarrow\iff e\in K$.
\end{proof}

\begin{rem}
If $\emptyfn\in\Pt$ (for instance for every $\Pt$ with $P\in\Pii1$), the same class $\Rinst$ is defined with polarity B: the synthesiser produces $\mathrm{mix}_\sigma(t,f)$ for an $f$ satisfying $(\mathrm{H_B})$. A synthesiser that produced exactly $\varphi_f$ when $\varphi_e(e)\!\uparrow$ is not realisable (Lemma~\ref{rem:BB}).
\end{rem}

\subsubsection{Deductive axis: consistency supervision}

We identify a \emph{deductive system} with the r.e.\ set $W_x=\mathrm{dom}\,\varphi_x$ of (codes of) its derivable sentences, and fix a consistent base system $T$ with index $t$ ($W_t=\mathrm{Th}(T)$; $\varphi_t$ is a semidecider of $\mathrm{Th}(T)$). The property is consistency, $\Pt_{\mathrm{con}}=\{g : \ulcorner\bot\urcorner\notin\mathrm{dom}\,g\}$, behavioural (it depends only on the domain), non-trivial, with index set $P_{\mathrm{con}}\in\Pii1$.

\begin{prop}\label{prop:ded}
Let $f$ satisfy $\varphi_f=\varphi_t\cup\{(\ulcorner\bot\urcorner,0)\}$ and $a$ satisfy $\varphi_a(x)\simeq\varphi_{\delta(x)}(\delta(x))$ for a syntactic extractor $\delta$. The canonical operator $E_{\mathrm{ded}}:=E^{\mathrm B}_{t,a,f}$ ---the system that enumerates $W_t$ and, in parallel, executes $\varphi_{\delta(x)}(\delta(x))$, adding $\ulcorner\bot\urcorner$ if it halts--- is a semantic elevation operator over $P_{\mathrm{con}}$ of polarity B by augmentation, with wrapper $w(e)$ = an index of $\varphi_t$ carrying $e$ inertly ($\delta(w(e))=e$). Hence $\Lam_{E_{\mathrm{ded}}}(P_{\mathrm{con}})=P_{\mathrm{con}}\setminus S_a$, $\overline K\mred\Lam_{E_{\mathrm{ded}}}(P_{\mathrm{con}})$, and the preservation of consistency under supervision is $\Pii1$-complete.
\end{prop}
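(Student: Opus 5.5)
The plan is to verify, clause by clause, that $E_{\mathrm{ded}}$ satisfies Definition~\ref{def:oes} with polarity B, and then read off the conclusions from Theorem~\ref{thm:main}. The steps are, in order: first the shape of $E_{\mathrm{ded}}$, then the anchor axiom, then the polarity hypothesis $(\mathrm{H_B})$, and finally the invocation of the Representation Theorem.

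\textbf{Shape and anchor.} Since $E_{\mathrm{ded}}$ is the canonical operator $E^{\mathrm B}_{t,a,f}$, it is total computable by s-m-n and realises clause (B) by construction. It remains to exhibit a wrapper. Lemma~\ref{lem:inert} gives total computable $w,\delta$ with $\varphi_{w(e)}=\varphi_t$ and $\delta(w(e))=e$. This is (W1). The index $a$ exists by s-m-n applied to $x\mapsto\varphi_{\delta(x)}(\delta(x))$, so that $\Str_a(x)\iff\varphi_{\delta(x)}(\delta(x))\!\downarrow$. Hence $\Str_a(w(e))\iff\varphi_e(e)\!\downarrow\iff e\in K$, which is (W2). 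Note that the same $\delta$ is used both in the trigger and in the wrapper; this is how the proposition is phrased, so it is compatible.

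\textbf{Polarity hypothesis.} Let $\theta=\{(\ulcorner\bot\urcorner,0)\}$. Consistency of $T$ gives $\ulcorner\bot\urcorner\notin W_t=\mathrm{dom}\,\varphi_t$, so $\varphi_f=\varphi_t\cup\theta$ with disjoint domains. By augmentation, Lemma~\ref{rem:BB}(ii), $\mathrm{mix}_\sigma(t,f)=\varphi_t\cup\theta$ for every $\sigma$. This function has $\ulcorner\bot\urcorner$ in its domain, so it is not in $\Pt_{\mathrm{con}}$, which is $(\mathrm{H_B})$. We also need $\varphi_t\in\Pt_{\mathrm{con}}$, which is consistency of $T$ again. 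Non-triviality of $\Pt_{\mathrm{con}}$ follows since $\varphi_t$ lies in it and $\varphi_f$ does not; behaviourality is immediate because the property depends only on the domain.

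\textbf{Conclusions.} Theorem~\ref{thm:main}(ii) gives $\Lam_{E_{\mathrm{ded}}}(P_{\mathrm{con}})=P_{\mathrm{con}}\setminus S_a$. Part (iii) gives $\overline K\mred$ it via $w$. For $\Pii1$-completeness we need the upper bound: $P_{\mathrm{con}}=\{x:\forall s\,\neg T(x,\ulcorner\bot\urcorner,s)\}\in\Pii1$ and $\N\setminus S_a\in\Pii1$. Their intersection is $\Pii1$, so part (iv) applies.

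The only delicate point is the disjointness needed for augmentation, which rests entirely on the consistency of the base $T$. Without it, $(\mathrm{H_B})$ fails at $\varphi_t$ itself. I expect no further obstacle: everything else is a direct instantiation of results already proved.
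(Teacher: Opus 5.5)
Your proposal is correct and follows the paper's proof essentially step for step. Consistency of $T$ gives the disjointness needed for augmentation (Lemma~\ref{rem:BB}(ii)), hence $(\mathrm{H_B})$; the anchor comes from $\delta(w(e))=e$; and Theorem~\ref{thm:main}(ii)--(iv) yields the normal form, the reduction from $\overline K$, and $\Pii1$-completeness.

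One small correction to your closing aside. If $T$ were inconsistent, $(\mathrm{H_B})$ would in fact still hold, since every $\mathrm{mix}_\sigma(t,f)$ would have $\ulcorner\bot\urcorner$ in its domain. What fails is the witness condition $\varphi_t\in\Pt_{\mathrm{con}}$, together with the disjointness needed for augmentation and, if $\varphi_t(\ulcorner\bot\urcorner)\neq 0$, even the existence of $f$.
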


\begin{proof}
Since $T$ is consistent, $\ulcorner\bot\urcorner\notin\mathrm{dom}\,\varphi_t$, and $\varphi_f$ is an augmentation (Lemma~\ref{rem:BB}(ii)), so that for every $\sigma$
\[
\mathrm{mix}_\sigma(t,f)=\varphi_t\cup\{(\ulcorner\bot\urcorner,0)\};
\]
hence $(\mathrm{H_B})$ holds, because this function has $\ulcorner\bot\urcorner$ in its domain. (W1): $\varphi_{w(e)}=\varphi_t$. (W2): $\Str_a(w(e))\iff\varphi_e(e)\!\downarrow$. Theorem~\ref{thm:main}(ii)--(iv) gives the rest; $\Pii1$-completeness because $P_{\mathrm{con}}\in\Pii1$.
\end{proof}

\begin{rem}
This is the hierarchical axis of Section~\ref{sec:sup} with the supervision made explicit: $E_{\mathrm{ded}}(x)$ is the supervisor of the system $x$, and the question ``does the supervised system remain consistent?'' is $\Pii1$-complete. Note that here polarity B is realisable by augmentation because a set of theorems can \emph{grow} after the trigger; on the functional axis the same move is possible only on inputs not yet answered (Lemma~\ref{rem:BB}). We do not claim that any statement of internal provability of the kind of G\"odel's second theorem follows from this: what is obtained is the undecidability of the \emph{family} of supervised systems, not any fact about what a fixed system proves about itself.
\end{rem}

\subsubsection{Relational axis: observational equivalence}

We take as objects pairs $\langle x,y\rangle$ and as property agreement on the common domain,
\[
\Pt_{\mathrm{rel}}=\{\langle x,y\rangle : \forall z\,(\varphi_x(z)\!\downarrow\wedge\varphi_y(z)\!\downarrow\Rightarrow\varphi_x(z)=\varphi_y(z))\},
\]
with index set $P_{\mathrm{rel}}\in\Pii1$. Its dual, \emph{observable discrepancy} $\Disc(x,y):\Leftrightarrow\exists z\,(\varphi_x(z)\!\downarrow\wedge\varphi_y(z)\!\downarrow\wedge\varphi_x(z)\ne\varphi_y(z))$, is $\Sig1$. (We deliberately leave out the case ``one converges and the other does not'', which for each $z$ is $\Sig1\wedge\Pii1$ and would make general non-equivalence $\Sig2$-complete.)

This axis is \emph{not} literally an instance of Definition~\ref{def:oes}: the base is the pair itself (the output outside the trigger is $\langle x,y\rangle$ itself, not a fixed $\varphi_t$), so $E^{-1}(P_{\mathrm{rel}})\ne\Exp(E)$. It is an \emph{operator with its own base}: it satisfies the anchored reduction scheme (inert wrapper plus a $\Sig1$ trigger encoding $K$) and, as we shall see, also the normal form $\Lam=P_{\mathrm{rel}}\setminus S_a$, but only because the trigger is confined to a family on which $(\mathrm{H_B})$ holds pointwise. We prove it directly.

\begin{prop}\label{prop:rel}
Let $\delta$ be a syntactic extractor that takes a fixed value $e_0\notin K$ outside the indices of the form $\mathrm{id}_e$ (the identity with $e$ inert), and $a$ with $\varphi_a(\langle x,y\rangle)\simeq\varphi_{\delta(x)}(\delta(x))$. Let $E_{\mathrm{rel}}(\langle x,y\rangle):=\langle x,y''\rangle$, where $y''$ computes, on $z$, the race between $\varphi_y(z)$ and $\varphi_a(\langle x,y\rangle)$: it returns $\varphi_y(z)$ if $\varphi_y(z)$ finishes first, and $\varphi_x(z)+1$ otherwise. With $w(e):=\langle\mathrm{id}_e,\mathrm{id}\rangle$: (W1) $w(e)\in P_{\mathrm{rel}}$ for every $e$; (W2) $\Str_a(w(e))\iff e\in K$; and $w(e)\in\Lam_{E_{\mathrm{rel}}}(P_{\mathrm{rel}})\iff e\notin K$. Hence $\overline K\mred\Lam_{E_{\mathrm{rel}}}(P_{\mathrm{rel}})$. Moreover: (a) $\Lam_{E_{\mathrm{rel}}}(P_{\mathrm{rel}})=P_{\mathrm{rel}}\setminus S_a$ exactly, where $S_a=\{\langle\mathrm{id}_e,y\rangle:e\in K\}$; (b) $\Lam_{E_{\mathrm{rel}}}(P_{\mathrm{rel}})$ is $\Pii1$-complete; (c) $\Disc$ is $\Sig1$-complete.
\end{prop}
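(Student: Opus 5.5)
The plan is to compute $\varphi_{y''}$ explicitly in the two cases of the trigger, and then read off every clause from that computation. The facts about the wrapper come first. For (W1), $\varphi_{\mathrm{id}_e}=\mathrm{id}=\varphi_{\mathrm{id}}$ gives agreement on the common domain, so $w(e)\in P_{\mathrm{rel}}$. For (W2), $\delta(\mathrm{id}_e)=e$ by construction of the extractor, so $\Str_a(w(e))\iff\varphi_e(e)\!\downarrow\iff e\in K$. The description of $S_a$ in (a) follows the same way. If $x$ has the form $\mathrm{id}_e$, the trigger fires iff $e\in K$. Otherwise $\delta(x)=e_0\notin K$, so the trigger never fires. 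Hence $S_a=\{\langle\mathrm{id}_e,y\rangle:e\in K\}$.

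Next comes the key computation. If $\neg\Str_a(\langle x,y\rangle)$, the race always returns $\varphi_y(z)$ when it halts, so $\varphi_{y''}=\varphi_y$. In that case $E_{\mathrm{rel}}(\langle x,y\rangle)\in P_{\mathrm{rel}}\iff\langle x,y\rangle\in P_{\mathrm{rel}}$. If $\Str_a$ holds, then $x=\mathrm{id}_e$, so $\varphi_x=\mathrm{id}$ is total. The trigger fires at a finite clock $\sigma$. Every input $z$ on which $\varphi_y$ has not finished before $\sigma$ (in particular every $z\ge\sigma$, using $\tau\ge z$) receives $\varphi_x(z)+1=z+1\ne\varphi_x(z)$. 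Such $z$ exist, and both functions converge there. So $E_{\mathrm{rel}}(\langle x,y\rangle)\notin P_{\mathrm{rel}}$; this is $(\mathrm{H_B})$ holding pointwise on the confined family, which is exactly why the extractor is confined to the $\mathrm{id}_e$. Combining the two cases gives $\Lam_{E_{\mathrm{rel}}}(P_{\mathrm{rel}})=P_{\mathrm{rel}}\setminus S_a$, which is (a). By (W1) it also gives $w(e)\in\Lam\iff e\notin K$, so $\overline K\mred\Lam$ via $w$.

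For (b), the upper bound holds because $P_{\mathrm{rel}}$ is $\Pii1$ (a $\forall z,s$ over a decidable matrix) and $\overline{S_a}$ is $\Pii1$. The intersection is therefore $\Pii1$, and hardness is the reduction just obtained. For (c), $\Disc$ is $\Sig1$ by its definition. For hardness I would reduce $K$ directly by s-m-n. Map $e$ to $\langle\mathrm{id},u(e)\rangle$, where $\varphi_{u(e)}(z)$ runs $\varphi_e(e)$ and outputs $z+1$ if it halts. Then a discrepancy exists iff $e\in K$.

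I expect the main obstacle to be the triggered case. There one must check that convergence of $\varphi_{x}(z)$ for all sufficiently large $z$ really produces a witnessed disagreement. This needs the totality of $\varphi_x$, which is guaranteed only because $\delta$ is confined to the $\mathrm{id}_e$ and sends everything else to $e_0\notin K$. It also needs the finiteness of $\{z:\tau_y(z)<\sigma\}$ from the clock convention.
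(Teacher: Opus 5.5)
Your proposal is correct and follows essentially the paper's argument: split on whether the confined trigger fires, use that the first component is then some $\mathrm{id}_e$ (total) and that $\{z:\tau_y(z)<\sigma\}$ is finite to force a disagreement on the common domain, and read off (W1), (W2), (a) and (b) from that computation. The only differences are cosmetic: you treat an arbitrary $y$ first and then specialise to $w(e)$, whereas the paper does $w(e)$ first and then notes the argument uses only the first component. For (c) you reduce $K$ via the fresh pair $\langle\mathrm{id},u(e)\rangle$ instead of reusing $E_{\mathrm{rel}}(w(e))$.
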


\begin{proof}
$E_{\mathrm{rel}}$ is total computable by s-m-n. (W1): both components of $w(e)$ compute $\mathrm{id}$. (W2): $\delta(\mathrm{id}_e)=e$. If $e\notin K$, the race is always won by $\varphi_y$ and $y''=y$: $E(w(e))=w(e)\in P_{\mathrm{rel}}$. If $e\in K$ with $\sigma=\sigma_a(w(e))$, then $\varphi_{y''}(z)=z+1$ for every $z$ with $\tau_{\mathrm{id}}(z)\ge\sigma$, which exists because $\{z:\tau_{\mathrm{id}}(z)<\sigma\}$ is finite; at such a $z$, $\varphi_{\mathrm{id}_e}(z)=z\ne z+1$, and $E(w(e))\notin P_{\mathrm{rel}}$. (a): since $\delta\equiv e_0\notin K$ outside the $\mathrm{id}_e$, $S_a=\{\langle\mathrm{id}_e,y\rangle:e\in K\}$; for such a pair and \emph{any} $y$, the previous argument (which only uses the first component, $\mathrm{id}_e$, of infinite domain) gives $E(\langle\mathrm{id}_e,y\rangle)\notin P_{\mathrm{rel}}$; outside $S_a$, $y''=y$ and the pair does not change. Hence $\Lam=P_{\mathrm{rel}}\setminus S_a$. (b): $P_{\mathrm{rel}}\setminus S_a$ is $\Pii1$ and $\overline K$-hard. (c): $\Disc(\mathrm{id}_e,y'')\iff e\in K$ gives $K\mred\Disc$; since $\Disc$ is r.e., it is $\Sig1$-complete.
\end{proof}

\begin{rem}[own base and confinement of the trigger]
The axis is different ---it relates two systems, without iterating one or supervising one--- and the reduction scheme is the same as in the others: inert wrapper, $\Sig1$ trigger anchored to $K$, exit from the property only on inputs not yet answered (irrevocability, Lemma~\ref{rem:BB}). The normal form $P_{\mathrm{rel}}\setminus S_a$ holds, but for a reason worth understanding: for an operator with its own base, $(\mathrm{H_B})$ is a \emph{pointwise} condition on each $x$ (that $\varphi_x$ still has unanswered inputs when the trigger fires), and the form holds exactly when it is met on all of $P\cap S_a$. With the $\delta$ of the proposition the trigger fires only on the $\mathrm{id}_e$, of infinite domain, and the condition is met; with an extractor reading $e$ from \emph{any} index, an $x$ computing nothing with $e\in K$ inert would give $\langle x,y\rangle\in P_{\mathrm{rel}}\cap S_a$ with $E(\langle x,y\rangle)\in P_{\mathrm{rel}}$, and the form would fail. The lesson is one of design: the normal form of an operator with its own base is a property of the \emph{confinement of the trigger}, not of the axis; the designer holds it in hand. Fixing the second component turns the axis into conformance (Proposition~\ref{prop:conf}), which is a literal instance.
\end{rem}

\begin{rem}[what the framework tells whoever compares two systems]
Comparing two systems that evolve together (differential testing, shadow deployment, checking a distilled model against the original) inherits all the hardness: preservation of agreement is $\Pii1$-complete and discrepancy is $\Sig1$-complete, so the differential monitor ---execute both and compare--- semidecides discrepancy, and nothing semidecides agreement (Corollary~\ref{cor:opt}). The form ``preserving $=$ in agreement and not triggered'' holds, but only if the trigger is confined to systems that can still disagree; on a system that no longer answers, the trigger may fire without agreement ever being broken, and then the formula describes nothing. Comparing against a \emph{fixed} reference ---an executable specification, a frozen version--- removes the condition: the case enters the class (Proposition~\ref{prop:conf}) and the form holds with no confinement. This is what good comparison practice does, and now we know why.
\end{rem}

\subsubsection{Two more axes, and a catalogue}

The template of the two previous axes ---a $\Pii1$ safety property, polarity B, irrevocable exit--- gives literal instances wherever there is a downward-closed property; we show two more and collect the catalogue.

\begin{prop}[conformance to a reference]\label{prop:conf}
Let $r$ be an index of a \emph{total} function $\varphi_r$ and $\Pt_r:=\{g : \forall z\,(g(z)\!\downarrow\Rightarrow g(z)=\varphi_r(z))\}$ (``no discrepancy with the reference''; since $\varphi_r$ is total, this is $g\subseteq\varphi_r$), behavioural, non-trivial, with $P_r\in\Pii1$. With $t:=r$, $f$ with $\varphi_f=\varphi_r+1$ and $a$ a trigger anchored by an extractor $\delta$, the canonical operator $E^{\mathrm B}_{r,a,f}$ is a semantic elevation operator over $P_r$, and $\Lam(P_r)=P_r\setminus S_a$ is $\Pii1$-complete.
\end{prop}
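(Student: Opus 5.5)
The plan is to verify, one by one, the clauses of Definition~\ref{def:oes} for polarity B, together with $(\mathrm{H_B})$ and the anchor axiom, and then read everything else off Theorem~\ref{thm:main}. Since $\varphi_r$ is total, $\Pt_r=\{g:g\subseteq\varphi_r\}$. I first record the preliminary facts. $\Pt_r$ depends only on $g$, so it is behavioural. It is non-trivial, since $\varphi_r\in\Pt_r$ while $\varphi_r+1\notin\Pt_r$. Its index set $P_r$ is $\Pii1$, because $x\notin P_r\iff\exists z\,\exists s\,(T(x,z,s)\wedge \text{output}\ne\varphi_r(z))$ and $\varphi_r$ is total computable. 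The witness is $t:=r$. The trigger is $a$ with $\varphi_a(x)\simeq\varphi_{\delta(x)}(\delta(x))$, obtained by s-m-n from the extractor $\delta$ of Lemma~\ref{lem:inert} applied to $t=r$. The wrapper $w$ is the one given by that same lemma.

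For the anchor, (W1) is $\varphi_{w(e)}=\varphi_r$, which is the inertness clause of Lemma~\ref{lem:inert}. (W2) reads $\Str_a(w(e))\iff\varphi_{\delta(w(e))}(\delta(w(e)))\!\downarrow\iff\varphi_e(e)\!\downarrow\iff e\in K$. The operator $E^{\mathrm B}_{r,a,f}$ exists and is total computable by s-m-n, using the race construction. By construction it realises clause (B).

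The one substantive check is $(\mathrm{H_B})$, and it is where the totality of $\varphi_r$ is used. For $\sigma<\infty$ we have $\mathrm{mix}_\sigma(r,f)=\varphi_r\restr\{z:\tau_r(z)<\sigma\}\cup(\varphi_r+1)\restr\{z:\tau_r(z)\ge\sigma\}$. Because $\varphi_r$ is total, every $z$ has $\tau_r(z)<\infty$. Moreover $\{z:\tau_r(z)<\sigma\}\subseteq\{z<\sigma\}$ is finite, so the set $\{z:\tau_r(z)\ge\sigma\}$ is non-empty, indeed cofinite. Pick any $z_0$ in it. Then $\mathrm{mix}_\sigma(r,f)(z_0)=\varphi_r(z_0)+1\ne\varphi_r(z_0)$, so $\mathrm{mix}_\sigma(r,f)\not\subseteq\varphi_r$, which means $\mathrm{mix}_\sigma(r,f)\notin\Pt_r$. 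I expect this to be the only point needing care. Totality of the reference guarantees that inputs remain unanswered when the trigger fires, and it also makes the ``discrepancy'' outputs actually defined. Without totality of $\varphi_r$ the mix could coincide with a subfunction of $\varphi_r$ and $(\mathrm{H_B})$ could fail.

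With $E^{\mathrm B}_{r,a,f}\in\mathcal{E}_{\mathrm B}(P_r)$ established, Theorem~\ref{thm:main}(ii) gives $\Lam(P_r)=P_r\setminus S_a$. Part (iii) gives $\overline K\mred\Lam(P_r)$ via $w$. Part (iv), with $P_r\in\Pii1$ and polarity B, yields $\Pii1$-completeness, since $\Pii1$ is closed under intersection with the $\Pii1$ set $\Exp(E)$.
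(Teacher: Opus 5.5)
Your proposal is correct and follows the paper's proof: $P_r\in\Pii1$ because the totality of $\varphi_r$ makes its graph recursive, $(\mathrm{H_B})$ holds because $\{z:\tau_r(z)<\sigma\}$ is finite so $\mathrm{mix}_\sigma(r,f)$ disagrees with $\varphi_r$ on some late input, and (W1)--(W2) come from Lemma~\ref{lem:inert}. The completeness then follows from Theorem~\ref{thm:main}(ii)--(iv), exactly as in the paper; your explicit non-triviality witness $\varphi_r+1\notin\Pt_r$ is a small addition the paper leaves implicit.
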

\begin{proof}
$P_r=\{x:\forall z,s,v\,(\varphi_{x,s}(z)=v\Rightarrow v=\varphi_r(z))\}$ is $\Pii1$ because $\varphi_r$ is total (its graph is recursive; if $\varphi_r$ were partial with non-recursive graph, $P_r$ could be $\Pii2$-complete: for $\varphi_r=0\restr K$ it is). $\mathrm{mix}_\sigma(r,f)$ agrees with $\varphi_r$ on $\{\tau_r<\sigma\}$, which is finite, and equals $\varphi_r+1$ on the rest, which is non-empty: it is not a subfunction of $\varphi_r$, and $(\mathrm{H_B})$ holds. (W1), (W2) with $w(e)$ an index of $\varphi_r$ carrying $e$ inertly (Lemma~\ref{lem:inert}).
\end{proof}

This is the relational axis with the second component fixed: where $E_{\mathrm{rel}}$ has the normal form only by confinement of the trigger (Proposition~\ref{prop:rel}), conformance to a fixed, total reference has it as a literal instance.

\begin{prop}[monitoring a forbidden action]\label{prop:mon2}
Let $\rho$ be a value (``forbidden action'') and $\Pt_\rho:=\{g : \rho\notin\mathrm{ran}\,g\}$, behavioural, non-trivial, $P_\rho\in\Pii1$. With $t$ any index with $\rho\notin\mathrm{ran}\,\varphi_t$, $f$ the constant $\rho$ and $a$ anchored, $E^{\mathrm B}_{t,a,f}$ is a semantic elevation operator over $P_\rho$, and $\Lam(P_\rho)$ is $\Pii1$-complete.
\end{prop}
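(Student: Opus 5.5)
The plan follows the template of Propositions~\ref{prop:ded} and~\ref{prop:conf}. We check the four ingredients of Definition~\ref{def:oes} for polarity B: behaviourality, non-triviality and the class of $P_\rho$; the hypothesis $(\mathrm{H_B})$; and the anchor (W1)--(W2). We then read the conclusions off Theorem~\ref{thm:main}.

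\emph{The property.} $\Pt_\rho$ depends only on $\mathrm{ran}\,g$, so it is behavioural. It is non-trivial because $\emptyfn\in\Pt_\rho$ and the constant $\rho$ is not in $\Pt_\rho$. Its index set is
\[
P_\rho=\{x:\forall z,s\,\neg(\varphi_{x,s}(z)\!\downarrow=\rho)\},
\]
a universal quantifier over a decidable matrix, hence $\Pii1$. A witness $t$ with $\rho\notin\mathrm{ran}\,\varphi_t$ exists, for example $\emptyfn$ or a constant $\ne\rho$.

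\emph{$(\mathrm{H_B})$.} This is the main point. We have
\[
\mathrm{mix}_\sigma(t,f)=\varphi_t\restr\{z:\tau_t(z)<\sigma\}\cup\rho\restr\{z:\tau_t(z)\ge\sigma\}.
\]
The second set contains every $z\notin\mathrm{dom}\,\varphi_t$, since there $\tau_t(z)=\infty$. It also contains every $z\ge\sigma$, since $\tau_t(z)\ge z$. So it is non-empty, indeed cofinite, for every $\sigma<\infty$, and $\rho\in\mathrm{ran}\,\mathrm{mix}_\sigma(t,f)$. Hence $\mathrm{mix}_\sigma(t,f)\notin\Pt_\rho$ for every $\sigma$, whatever $t$ is. The finiteness of $\{z:\tau_t(z)<\sigma\}$ from Section~\ref{sec:prelim} is exactly what makes the hypothesis independent of the choice of $t$. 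This is the only step where the clock convention enters, and it is the one to state carefully.

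\emph{Anchor and conclusion.} Lemma~\ref{lem:inert} gives a total computable inert wrapper $w$ and an extractor $\delta$ with $\varphi_{w(e)}=\varphi_t$ and $\delta(w(e))=e$. By s-m-n take $a$ with $\varphi_a(x)\simeq\varphi_{\delta(x)}(\delta(x))$. Then (W1) holds by construction, and (W2) reads $\Str_a(w(e))\iff\varphi_e(e)\!\downarrow\iff e\in K$. The canonical operator $E^{\mathrm B}_{t,a,f}$ is total computable by s-m-n, so it is a semantic elevation operator over $P_\rho$ satisfying $(\mathrm{H_B})$. Theorem~\ref{thm:main}(ii) gives $\Lam(P_\rho)=P_\rho\setminus S_a$, and (iii) gives $\overline K\mred\Lam(P_\rho)$ via $w$. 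Finally, (iv) with $P_\rho\in\Pii1$ and $\pi=\mathrm B$ gives $\Pii1$-completeness, since a $\Pii1$ set intersected with the $\Pii1$ set $\N\setminus S_a$ is again $\Pii1$.
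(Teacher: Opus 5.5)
Your proof is correct and takes the same route as the paper. The paper's proof is your $(\mathrm{H_B})$ step: $\{z:\tau_t(z)\ge\sigma\}$ is cofinite, and $\mathrm{mix}_\sigma(t,f)$ takes the value $\rho$ there. It then says the rest is identical to Proposition~\ref{prop:conf}, namely the anchor from Lemma~\ref{lem:inert} and the conclusions from Theorem~\ref{thm:main}, and you simply write out the routine checks it leaves implicit (the $\Pii1$ bound, non-triviality, the construction of $a$).
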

\begin{proof}
$\{z:\tau_t(z)\ge\sigma\}$ is cofinite (it includes the $z\notin\mathrm{dom}\,\varphi_t$), and there $\mathrm{mix}_\sigma(t,f)$ takes the value $\rho$: $(\mathrm{H_B})$. The rest is identical.
\end{proof}

This is the \emph{watchdog} of Remark~\ref{rem:guia}, now as a behavioural property (the program ``does'' the action by returning it), without going through traces.

\begin{table}[htbp]
\centering\footnotesize
\begin{tabular}{@{}>{\raggedright\arraybackslash}p{2.6cm}>{\raggedright\arraybackslash}p{2.6cm}>{\centering\arraybackslash}p{1.0cm}>{\centering\arraybackslash}p{1.0cm}>{\raggedright\arraybackslash}p{2.1cm}>{\raggedright\arraybackslash}p{3.7cm}@{}}
\toprule
Axis (observation) / exemplar & $\Pt$ & class of $P$ & $\pi$ & base & why $(\mathrm{H}_\pi)$ holds \\
\midrule
Functional, ob$=g$ (Prop.~\ref{prop:func}) & $\{\mathrm{id}\}$ or any with $\emptyfn\notin\Pt$ & $\Pii2$ & A & $g=\emptyfn$ & $\emptyfn\notin\Pt$ \\
Deductive, ob$=\mathrm{dom}\,g$ (Prop.~\ref{prop:ded}) & $\ulcorner\bot\urcorner\notin\mathrm{dom}\,g$ & $\Pii1$ & B & $f=t\cup\{(\bot,0)\}$ & augmentation: $\bot$ enters the domain \\
Conformance, ob$=\{z: g(z)\!\downarrow\ne\varphi_r(z)\}$ (Prop.~\ref{prop:conf}) & $g\subseteq\varphi_r$, $\varphi_r$ total & $\Pii1$ & B & $f=r+1$ & discrepancy on late inputs \\
Monitoring, ob$=\mathrm{ran}\,g$ (Prop.~\ref{prop:mon2}) & $\rho\notin\mathrm{ran}\,g$ & $\Pii1$ & B & $f\equiv\rho$ & $\rho$ appears on late inputs \\
\midrule
Liveness: $P_0$ & $g(0)\!\downarrow$ & $\Sig1$ & A only & $g=\emptyfn$ & $\emptyfn(0)\!\uparrow$; B impossible \\
Safety: $P_1$ & $g(0)\!\uparrow$ & $\Pii1$ & B only & $f=t\cup\{(0,0)\}$ & augmentation; A impossible \\
Totality: $\TOT$ & $g$ total & $\Pii2$ & A and B & $g=\emptyfn$; $f=\emptyfn$ & $\emptyfn$ not total; truncation not total \\
\bottomrule
\end{tabular}
\caption{Catalogue of literal instances of Definition~\ref{def:oes}. The four axes differ in the object governed (a self-rewriting program, a deductive system, a system with respect to a reference, a system with respect to a forbidden action); the three lower exemplars show how the arithmetical class of $\Pt$ constrains the polarity. The regularity is that of Rice--Shapiro: every safety property ($\Pii1$, downward closed) admits B by augmentation or by a forbidden value; every property with a base $g\subseteq t$ outside $\Pt$ admits A.}
\label{tab:cataleg}
\end{table}

\subsection{Usage guide}

\begin{rem}[usage guide]\label{rem:guia}
To check that a concrete mechanism belongs to $\mathcal{E}(P)$, and to read off the consequences, the procedure is always the same:
\begin{enumerate}[label=(\arabic*),leftmargin=2.2em]
\item Identify the base system, the property $\Pt$ and a witness $t$ with $\varphi_t\in\Pt$.
\item Identify the trigger $a$ and check that it is $\Sig1$: a halting condition of a subcomputation extracted from the code.
\item Fix the polarity and check its realisability: $(\mathrm{H_A})$ or $(\mathrm{H_B})$ against the arithmetical class of $P$ (Proposition~\ref{prop:RS}).
\item Build the inert wrapper $w$ and verify (W2). \emph{This step is the reduction}: it is not saved; what the guide contributes is that it always has the same form.
\item Read off Theorem~\ref{thm:main}: the elevated property is $P\cap S_a$ or $P\setminus S_a$; $K$ or $\overline K$ reduces to it; the complexity is, as an upper bound, that of $P$ intersected with $\Sig1$ or $\Pii1$ (it may be lower: with $P=\TOT$ and a trigger confined to the wrappers, $\Lam_E(\TOT)$ is $\Sig1$-complete); the untriggered region is not r.e.
\end{enumerate}
Steps (1)--(3) contain the whole decision: the direction in which the mechanism can look is constrained by the class of the property, not chosen by the engineer, and the base is fixed by irrevocability (what can be lost without retracting anything). Step (4) is Lemma~\ref{lem:inert}. If there is no fixed witness $t$ ---because what must be preserved depends on the input, as in the equivalence between two arbitrary systems--- the wrapper and the anchor may exist all the same and the mechanism inherits the hardness (iii), but the form (ii) holds only if the trigger is confined (Remark~\ref{rem:fora}, Proposition~\ref{prop:rel}).
\emph{Example.} A resource \emph{watchdog}: the base system executes and a monitor halts it if it ever accesses a forbidden resource $\rho$. (1) Base: any program; $\Pt$ = ``never accesses $\rho$'' (a trace property, closed in the sense of Alpern--Schneider), $t$ = a program that does not access it. (2) Trigger: ``the trace contains an access to $\rho$'', $\Sig1$. (3) Polarity B by augmentation (accessing $\rho$ is doing something forbidden, irrevocable); $P$ is $\Pii1$, so A is impossible and B is the right one. (Here $\Pt$ is a property of the trace, not of the computed function; the framework of Section~\ref{sec:prelim} is behavioural, but the degree argument of Proposition~\ref{prop:RS} only uses that $P$ is a $\Pii1$ index set and (W2), and it holds all the same.) (4) Wrapper: $w(e)$ = $t$ with $e$ inert; $E(w(e))$ = program that executes $t$ and, in parallel, simulates $\varphi_e(e)$ and accesses $\rho$ if it halts. All the $w(e)$ have the behaviour of $t$; $E(w(e))$ accesses $\rho$ iff $e\in K$. (5) The property ``the supervised system never accesses $\rho$'' is $P\setminus S_a$, $\Pii1$-complete; the monitor semidecides the violation and no procedure enumerates the preserving systems. What the watchdog can do is halt the system \emph{after} the access; by Corollary~\ref{cor:opt}, no computable mechanism certifies, before the fact, all the systems that will never access $\rho$.
\end{rem}

\subsection{Final remarks}\label{sec:final}

\begin{rem}[amortised diagonalisation]\label{rem:amort}
The framework does not avoid diagonalisation: the anchor axiom \emph{is} a reduction from $K$, and all the self-referential weight lives in the s-m-n reduction of Lemma~\ref{lem:red}, the same as Rice's. What it contributes is that the reduction is always the same ---the wrapper $w$--- and that, once (W1) and (W2) are verified for a mechanism, the complexity of the elevated property is read off Theorem~\ref{thm:main} with no further construction. ``Economy'' means diagonalisation done once for the whole class, not the absence of diagonalisation.
\end{rem}

\begin{rem}[composition and multiple triggers]\label{rem:ershov}
$\mathcal{E}(P)$ is not closed under composition, and composition adds no complexity. If $E_1,E_2\in\mathcal{E}(P)$, then $(E_2\circ E_1)^{-1}(P)=E_1^{-1}(E_2^{-1}(P))$ is the computable preimage of $S_{a_2}$ or of its complement: always $\Sig1$ or $\Pii1$, and only the outer polarity counts. The anchor, however, is not transmitted, and the class is not closed: with canonical operators and constant extractors ($\delta\equiv e_0\notin K$) outside the wrappers, $E_1(w(e))$ falls outside the range of the wrapper of $E_2$ and $(E_2\circ E_1)^{-1}(P)=\emptyset$, which admits no anchor. A different reading ---\emph{persistence along a trajectory} of $k$ steps, $P\cap E_1^{-1}(P)\cap(E_2\circ E_1)^{-1}(P)\cap\cdots$--- gives $P$ intersected with a finite intersection of $\Sig1$ and $\Pii1$ sets; the part contributed by the triggers is a d-r.e.\ set (difference of two r.e.\ sets) for every $k$: level $2$ of the Ershov hierarchy, never higher, because no conjunction of steps produces the union that level $3$ needs ($P$ adds only its own class). What does reach level $k$ is \emph{a single} operator with $k$ nested triggers over a $P$ admitting re-entry by truncation: for $\TOT$, let $\varphi_{E(x)}(z)$ be defined at the first stage $s\ge z$ at which the number of triggers fired is odd; then $E^{-1}(\TOT)$ is the set of $x$ with odd final count, at level $k$ of the Ershov hierarchy. By Proposition~\ref{prop:upper}, levels $\ge3$ require $P\notin\Sig1\cup\Pii1$ (level $2$ is already reached by a co-r.e.\ non-r.e.\ set, such as $E_{\mathrm{ded}}^{-1}(P_{\mathrm{con}})$). The closure properties that do hold are those of triggers: $\Sig1$ is closed under $\wedge$, $\vee$ and composition with computable functions.
\end{rem}

\begin{rem}[weaker anchors]\label{rem:weak}
(W2) requires coupling with $K$. A broader version would ask only $\Str_a(w(e))\iff e\in A$ for a fixed non-recursive r.e.\ set $A$. By the Friedberg--Muchnik theorem there are r.e.\ degrees strictly between $\mathbf 0$ and $\mathbf 0'$, and the version with $A$ would admit operators that the version with $K$ excludes. The price is that Theorem~\ref{thm:main}(i),(iii) weakens to ``$A\mred S_a$'' and ``$\Lam_E(P)$ non-recursive'', without completeness. All the axes of this article are anchored to $K$; we leave open whether some natural axis requires an intermediate $A$ (Section~\ref{sec:open}).
\end{rem}

\begin{cor}[relativisation]\label{cor:rel}
Let $X\subseteq\N$ be an oracle. Replacing $K$ by $K^X$, triggers by $\Sigma^{0,X}_1$ predicates and computable functions by $X$-computable ones, Definition~\ref{def:oes}, Theorem~\ref{thm:main} and Proposition~\ref{prop:RS} hold verbatim, with completeness relative to $K^X$.
\end{cor}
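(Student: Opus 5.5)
The plan is to check that every ingredient of Definition~\ref{def:oes}, Theorem~\ref{thm:main} and Proposition~\ref{prop:RS} is either a purely set-theoretic manipulation or one of three computability facts that relativise: s-m-n, Kleene's normal form, and the $\Sig1$-completeness of $K$. Each fact is replaced by its $X$-version. Fix an acceptable numbering $\{\varphi^X_e\}$ of the $X$-partial computable functions and the relativised Kleene predicate $T^X$. Define $K^X=\{x:\varphi^X_x(x)\!\downarrow\}$, triggers $\Str^X_a(x):\Leftrightarrow\exists s\,T^X(a,x,s)$, and the clock $\tau^X_t$ accordingly.

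I would record three relativised facts before touching the paper's proofs.
\begin{enumerate}[label=(\alph*),leftmargin=2.2em]
\item S-m-n and padding hold for $\{\varphi^X_e\}$, with $X$-computable (in fact computable) $s$. Hence Lemma~\ref{lem:inert} and the canonical operators $E^{\pi}_{t,a,\cdot}$ exist as $X$-computable total functions.
\item Every $\Sigma^{0,X}_1$ set is some $S^X_a$, and $K^X$ is $\Sigma^{0,X}_1$-complete under $X$-computable many-one reductions. Also $K^X\not\le_T X$: the diagonal argument relativises.
\item The coding property $\tau^X_e(x)\ge x$ still holds, since the code of an oracle computation contains its input. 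So $\{z:\tau^X_t(z)<\sigma\}$ is finite and $\mathrm{mix}_\sigma$ is defined as before.
\end{enumerate}

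Next I would go through Theorem~\ref{thm:main} item by item.
\begin{itemize}[leftmargin=2.2em]
\item Item (ii) uses only the case split of Definition~\ref{def:oes} and $(\mathrm{H}_\pi)$, so it holds unchanged.
\item Items (i) and (iii) use the wrapper $w$ as a many-one reduction; $w$ is now $X$-computable, giving $K^X$-hardness relative to $X$. Non-recursiveness becomes non-$X$-recursiveness, which follows from (b).
\item Item (iv) is closure of a relativised class under intersection with $\Sigma^{0,X}_1$ or $\Pi^{0,X}_1$ sets.
\item Item (v) is the same pair $e\in K^X$, $e'\notin K^X$.
\end{itemize}
Lemma~\ref{rem:BB}, which Theorem~\ref{thm:main} relies on, relativises because its only use of $S_a$ non-recursive becomes ``non-$X$-recursive'', and the converse argument is a relativised complement-r.e.\ argument.

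For Proposition~\ref{prop:RS}, the degree proof relativises immediately: $K^X\mred P$ via $E\circ w$ would put $K^X$ in $\Pi^{0,X}_1$, making it $X$-recursive. For the Rice--Shapiro proof I need the relativised Rice--Shapiro theorem for $X$-r.e.\ index sets of $X$-partial computable functions. Its standard proof relativises, since finite functions and the halting-problem reduction are unchanged.

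The main obstacle is stating precisely what ``verbatim'' means for $\Pt$. Behavioural properties must now be sets of $X$-partial computable functions, with index sets taken in the $X$-numbering, and example (iii) of Proposition~\ref{prop:RS} must be re-verified: $\TOT^X$ is $\Pi^{0,X}_2$-complete. I would state this convention explicitly and note that no step uses anything beyond (a)--(c) together with relativised Rice--Shapiro.
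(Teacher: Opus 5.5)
Your proposal is correct and follows the paper's own proof, which is the one-line observation that every argument involved uses only s-m-n, the Kleene predicate $T$, the non-recursiveness of $K$ and Rice--Shapiro, all of which relativise. You carry out that check item by item, and you make explicit what the paper leaves implicit: the clock coding $\tau^X_t(z)\ge z$, the relativised Lemma~\ref{rem:BB}, the convention that $\Pt$ ranges over $X$-partial computable functions indexed in the $X$-numbering, and the re-verification that $\TOT^X$ is $\Pi^{0,X}_2$-complete. None of this changes the route.
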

\begin{proof}
All the proofs use only s-m-n, the predicate $T$, the non-recursiveness of $K$ and Rice--Shapiro, which relativise.
\end{proof}

\begin{rem}[moving the bound]
The Turing jumps $\mathbf 0^{(\alpha)}$ move the $\Sig1$ bound of the trigger up the hierarchy, and each level reproduces the same form; that they are the only way to do so does not follow from the corollary. This is consistent with Section~\ref{sec:closure}: $\omega$-iteration of the elevation takes the property to $\Pii2$ without changing the trigger (Remark~\ref{rem:fora}).
\end{rem}

\section{The Pullback Lemma, the Organising Pattern, and Proofs as Triggers}\label{sec:principle}

\subsection{The pullback lemma: Rice one level up}

The relation between the extensional case (Section~\ref{sec:ext}) and the intensional one (Sections~\ref{sec:operator}--\ref{sec:closure}, Section~\ref{sec:normal}), which Section~\ref{sec:intro} calls \emph{the line of jurisdiction}, has an exact statement.

\begin{lem}[pullback]\label{lem:pullback}
For every total computable $\Phi$ and every property $\Pt$ with index set $P$,
\[
\Lam_\Phi(P)\;=\;P\cap\Phi^{-1}(P),
\]
where $\Phi^{-1}(P)=\{x:\Phi(x)\in P\}$ is the property $\Pt$ \emph{transported} along $\Phi$. Then:
\begin{enumerate}[label=(\roman*),leftmargin=2.2em]
\item if $\Phi$ is extensional, $\Phi^{-1}(P)$ is the index set of a behavioural property, and if this is non-trivial, $\Lam_\Phi(P)$ is undecidable by Rice (Proposition~\ref{prop:ext});
\item if $\Phi$ is intensional and uniformly disruptive, $\Phi^{-1}(P)$ is not behavioural, but $K$ or $\overline K$ (according to the polarity of Definition~\ref{def:unif}(ii)) reduces to $\Lam_\Phi(P)$ via the wrapper, and $\Lam_\Phi(P)$ is undecidable by the s-m-n reduction inside the fibre (Lemma~\ref{lem:red}, Theorem~\ref{thm:closure}).
\end{enumerate}
\end{lem}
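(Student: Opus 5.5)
The identity $\Lam_\Phi(P)=P\cap\Phi^{-1}(P)$ is immediate from Definition~\ref{def:one}. Membership $x\in\Lam_\Phi(P)$ means $x\in P$ and $\Phi(x)\in P$, and the second conjunct is $x\in\Phi^{-1}(P)$ by definition of preimage. Totality of $\Phi$ ensures that $\Phi^{-1}(P)$ is a well-defined subset of $\N$ with no undefined cases. I will state this first, then handle the two regimes separately. Each regime reduces to a result already proved, so the work is in matching hypotheses.

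For (i), I will show that $\Phi^{-1}(P)$ is closed under equality of computed functions. Suppose $\varphi_x=\varphi_y$. Extensionality of $\Phi$ gives $\varphi_{\Phi(x)}=\varphi_{\Phi(y)}$, and since $P$ is an index set of a behavioural property, $\Phi(x)\in P\iff\Phi(y)\in P$. Hence $\Phi^{-1}(P)$ is the index set of the behavioural property $\{g:\widehat\Phi(g)\in\Pt\}$, using the descent $\widehat\Phi$ from the proof of Proposition~\ref{prop:ext}. The intersection of two behavioural index sets is behavioural, so $\Lam_\Phi(P)$ is the index set of $Q=\{g\in\Pt:\widehat\Phi(g)\in\Pt\}$.

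The one delicate point is which non-triviality is meant. If ``this'' refers to the property whose index set is $\Lam_\Phi(P)$, then Rice (Theorem~\ref{thm:rice}) applies at once. If it refers only to $\Phi^{-1}(P)$, non-triviality of the transported property alone does not give non-triviality of the intersection: $\Pt$ and $\widehat\Phi^{-1}(\Pt)$ could be disjoint, making $\Lam_\Phi(P)$ empty. I will therefore read the hypothesis as $P$-disruption, exactly as in Proposition~\ref{prop:ext}, and cite that proposition. I expect this to be the main point needing care, and I will make the reading explicit in the proof.

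For (ii), non-behaviourality of $\Phi^{-1}(P)$ comes from the inert family. By Definition~\ref{def:unif}(i), all $w(e)$ compute $\varphi_t$. By (ii) of that definition, $w(e)\in\Phi^{-1}(P)$ iff $e\in K$ (or iff $e\notin K$). Choosing $e\in K$ and $e'\notin K$ gives two indices of the same function exactly one of which lies in $\Phi^{-1}(P)$. The reduction of $K$ or $\overline K$ to $\Lam_\Phi(P)$ via $w$, and the resulting undecidability, are then Lemma~\ref{lem:red} together with Theorem~\ref{thm:closure}, which apply verbatim.
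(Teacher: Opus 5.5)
Your proposal is correct and follows the paper's proof. The identity is definitional; (i) is the closure of $\Phi^{-1}(P)$ under equality of computed functions; and (ii) is the pair $w(e),w(e')$ with $e\in K$, $e'\notin K$, which compute $\varphi_t$ but have different fates under $\Phi$, with the reduction and undecidability cited from Lemma~\ref{lem:red} and Theorem~\ref{thm:closure}.

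Your caution about ``if this is non-trivial'' is justified. Non-triviality of $\Phi^{-1}(P)$ alone does not suffice: take $\Pt=\{g:g(0)\!\downarrow\text{ and even}\}$ and the extensional $\Phi$ with $\varphi_{\Phi(x)}(z)\simeq\varphi_x(z)+1$, which gives $\Lam_\Phi(P)=\emptyset$. Reading the hypothesis as $P$-disruption is what the paper's citation of Proposition~\ref{prop:ext} intends.
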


\begin{proof}
The equality is Definition~\ref{def:one}. (i): if $\varphi_x=\varphi_y$ then $\varphi_{\Phi(x)}=\varphi_{\Phi(y)}$, so $\Phi(x)\in P\iff\Phi(y)\in P$. (ii): by (i) of Definition~\ref{def:unif} $w(e)\in P$, and by (ii) $\Phi(w(e))\in P\iff e\in K$ (resp.\ $e\notin K$); two $w(e),w(e')$ with $e\in K$, $e'\notin K$ compute the same function and have different fates.
\end{proof}

This is the content of ``Rice one level up'', the intuition with which the applied work \cite{Gum26} formulated the problem and which the present article makes precise: Rice always applies \emph{to the target} $P$ ---the question ``does the rewritten program satisfy $P$?'' is a behavioural question about $\Phi(x)$--- and what changes is the \emph{transport} back to $x$. If $\Phi$ looks only at the function, the transport preserves behaviourality and Rice does all the work; if it looks at the letter, the transport destroys it, and Rice's s-m-n reduction has to be repeated at the scale of a single fibre to carry the undecidability of the target down to the letter of the original code. The extensional is the question; the intensional is the vehicle of the answer. The two regimes are not two theorems but the two cases of a single pullback.

\begin{rem}[dynamics as step $0$]
The lemma has a second reading. $\Phi$ acts at compile time; the dynamic question ``will $P$ still hold after the system rewrites itself?'' is therefore the \emph{static} question ``does the program $\Phi(x)$ satisfy $P$?'', a step-$0$ property of the compiled code which, read on $x$, is intensional. One rewriting step adds no dynamics: it absorbs it into a single $\Sig1$ event at run time of the compiled program (clause I). The only dynamics that is not absorbed is unbounded iteration, which adds a universal quantifier in front of the existential one (clause IV, Remark~\ref{rem:fora}).
\end{rem}

\subsection{The organising pattern}

\begin{rem}[Expressivity Principle]\label{rem:principle}
The results above instantiate one pattern, which we state as a remark: it is not a theorem but what the theorems have in common. (I) A computable mechanism that reacts to the behaviour of a system does so through a finite event, and every finite event is of the form $\exists s\,T(a,x,s)$ (Proposition~\ref{prop:mon}). (II) What such a mechanism can certify is an open set of the trace space: the set of systems on which it fires is $\Sig1$ and its complement is $\Pii1$ (Theorem~\ref{thm:main}). (III) When the event is anchored to the system's own dynamics, what lies outside the open set is $K$: the elevated property inherits the degree of $K$ and is not behavioural (Theorem~\ref{thm:main}). (IV) Iterated, one event gives $\Sig1$ or $\Pii1$; infinitely many take a $\Sig1$ property to $\Pii2$ while a $\Pii1$ one stays at $\Pii1$ (Theorem~\ref{thm:pi2}); relativised, each jump reproduces the same form (Corollary~\ref{cor:rel}). In words: the expressivity sufficient for a system to be genuinely general is exactly the expressivity that places it, and any adequate verifier of it, beyond algorithmic certification.
\end{rem}

\begin{rem}[what a restriction buys]\label{rem:negative}
Certifiability is bought by restricting what the transformation can embed, and not by the name of the formalism. Restricting $\Phi$ to a menu that cannot embed any computation ---sanitisers, transformations without an extractor, stasis--- makes it not uniformly disruptive, and the promise question ``given $x\in P$, is $\Phi(x)\in P$?'' may become decidable (Section~\ref{sec:nondisr}). Restricting the rewriter to be primitive recursive buys nothing: $E(x):=\lambda z.[\rho \text{ if } T(\delta(x),\delta(x),z),\ 0 \text{ otherwise}]$ is primitive recursive and total, satisfies $E(w(e))\in P_\rho\iff e\notin K$, and $\Pii1$-completeness survives; nor does it buy termination of the iteration, since the shift $\varphi_x(z+1)$ and the $\Phi$ of Theorem~\ref{thm:pi2} are primitive recursive with $\Pii2$-complete persistence. Strong normalisation in proof theory concerns reduction steps that preserve the computed function: in our terms it is semantic stasis, certifiable because trivial.
\end{rem}

\subsection{Proofs as triggers}

A proof is finite evidence ---``there is a derivation'' is $\Sig1$--- and hence can serve as a trigger (Proposition~\ref{prop:mon}); by Proposition~\ref{prop:RS}, an operator whose trigger is finite evidence can only make systems enter $\Sig1$ properties and only make them leave $\Pii1$ ones. This does not say that no preserving system has a proof that it preserves $P$: for a concrete $e$ with $\varphi_e(e)\!\uparrow$ provable, PA proves that $w(e)$ preserves $P$. It says that no uniform scheme of sound certificates covers the preserving systems. Over programs rather than code--proof pairs, this is the observation of Hamlen, Morrisett and Schneider \cite[Section~5]{HMS06} that the policies certifiable by proof-carrying code are the recursively enumerable ones, applied to the preservation question.

\begin{prop}[no sound theory covers the preserving systems]\label{prop:proof}
Let $E\in\mathcal{E}_{\mathrm B}(P)$, let $T$ be a recursively axiomatised theory, and let $x\mapsto\sigma_x$ be a computable map from indices to sentences of $T$ (the formalisation of ``$x\in\Lam_E(P)$''). Suppose $T$ is \emph{sound} for these sentences: $T\vdash\sigma_x$ implies $x\in\Lam_E(P)$. Then $\mathrm{Prov}_T:=\{x: T\vdash\sigma_x\}$ is r.e., $\mathrm{Prov}_T\subsetneq\Lam_E(P)$, and $\Lam_E(P)\setminus\mathrm{Prov}_T$ is infinite and not r.e. The same holds with the untriggered region $\Exp(E)$ in place of $\Lam_E(P)$.
\end{prop}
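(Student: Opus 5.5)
The argument is a short degree-theoretic bookkeeping built on Corollary~\ref{cor:exp}. It has three parts: $\mathrm{Prov}_T$ is r.e., $\Lam_E(P)$ is not, and the difference of an r.e.\ set from a non-r.e.\ superset is itself non-r.e.\ and infinite.

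\emph{$\mathrm{Prov}_T$ is r.e.\ and contained in $\Lam_E(P)$.} Since $T$ is recursively axiomatised, the set of theorems of $T$ is r.e.: enumerate all derivations and check each one, which is decidable. Then $x\in\mathrm{Prov}_T$ iff $\exists d$ ($d$ is a $T$-derivation of $\sigma_x$). This is a $\Sig1$ condition, because $x\mapsto\sigma_x$ is computable. Soundness gives $\mathrm{Prov}_T\subseteq\Lam_E(P)$.

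\emph{The inclusion is strict, and the gap is not r.e.} Since $E\in\mathcal{E}_{\mathrm B}(P)$, Theorem~\ref{thm:main}(ii) and Corollary~\ref{cor:exp} give $\Lam_E(P)=P\setminus S_a$, and this set is not r.e., because $\overline K\mred\Lam_E(P)$ via $w$. An r.e.\ set therefore cannot equal it, so $\mathrm{Prov}_T\subsetneq\Lam_E(P)$. For the stronger claim, suppose $D:=\Lam_E(P)\setminus\mathrm{Prov}_T$ were r.e. Then $\Lam_E(P)=\mathrm{Prov}_T\cup D$ would be a union of two r.e.\ sets, hence r.e., which is a contradiction. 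So $D$ is not r.e. It is infinite because every finite set is recursive, hence r.e.

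\emph{The variant with $\Exp(E)$.} Here soundness means that $T\vdash\sigma_x$ implies $x\in\Exp(E)$. Theorem~\ref{thm:main}(i) says $\Exp(E)$ is $\Pii1$-complete and not r.e. The same union argument then applies verbatim.

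The only point needing care is that the hypothesis is soundness alone, not completeness or consistency beyond soundness for the $\sigma_x$. The argument uses nothing more, so there is no real obstacle. The whole content sits in the non-r.e.\ status already established in Corollary~\ref{cor:exp}.
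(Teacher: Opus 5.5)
Your proof is correct and follows the paper's argument essentially step for step. $\mathrm{Prov}_T$ is r.e.\ and contained in $\Lam_E(P)$ by soundness; the union argument against Corollary~\ref{cor:exp} (resp.\ Theorem~\ref{thm:main}(i) for $\Exp(E)$) shows the gap is not r.e.; and the gap cannot be finite because finite sets are r.e. The only cosmetic difference is that you obtain strictness of the inclusion directly from $\Lam_E(P)$ not being r.e., whereas the paper reads it off from the difference being infinite.
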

\begin{proof}
$\mathrm{Prov}_T$ is r.e.\ because the theorems of $T$ are r.e.\ and $x\mapsto\sigma_x$ is computable; it is contained in $\Lam_E(P)$ by soundness. If $\Lam_E(P)\setminus\mathrm{Prov}_T$ were r.e., then $\Lam_E(P)$ would be the union of two r.e.\ sets, hence r.e., against Corollary~\ref{cor:exp}; for $\Exp(E)$ the same argument uses Theorem~\ref{thm:main}(i). A finite set is r.e., so the difference is infinite, and in particular the inclusion is strict.
\end{proof}

\begin{rem}[reading]
This is G\"odel's first theorem at the scale of the family: for every sound theory there are infinitely many preserving systems whose preservation it does not prove, and they cannot even be listed. On the inert family the gap is explicit: $w(e)$ lies in the uncovered part exactly when $e\notin K$ and $T$ does not prove $\varphi_e(e)\!\uparrow$ (for $T$ able to verify the wrapper construction). Strengthening $T$ moves the frontier and never closes it; how far a transfinite sequence of strengthenings can move it is Problem~\ref{prob:coverage}.
\end{rem}

\section{Open Problems}\label{sec:open}

The results delimit a class ---mechanisms reacting to a single finite event anchored to $K$--- and fix its form and complexity. The problems below sit at the edge of the class.

\subsection{The disruption frontier}

\begin{oprob}\label{prob:front}
Given a total computable $\Phi$ by an index, and a fixed non-trivial $\Pt$, is it decidable whether $\Phi$ is uniformly disruptive with respect to $\Pt$?
\end{oprob}

We conjecture not: the natural upper bound is $\Sig3$ (``there is an index of $w$ such that $w$ is total, $\varphi_{w(e)}=\varphi_t$ for every $e$, and (ii) holds''), or $\Sig4$ if $P\in\Pii2$, and we conjecture completeness for the class. Note that the question cannot be answered by looking at $\Lam_\Phi(P)$ alone: transformations of family (3) in Remark~\ref{rem:tax} have $\Lam_\Phi(P)$ undecidable without being disruptive.

\subsection{Classification by the degree of $\Phi^{-1}(P)$}

\begin{oprob}\label{prob:class}
For a fixed non-trivial $\Pt$, describe the set of degrees
\[
\{\deg_T(\Phi^{-1}(P)) : \Phi \text{ total computable}\}
\]
and, for each degree, the class of transformations attaining it: is every r.e.\ degree $\le_T\deg(P)$ attained, and which natural $\Phi$ fall in each family?
\end{oprob}

For $P\in\Sig1$ each finite iterate $\Lam^{[k]}_\Phi(P)$ is $\Sig1$ and only the $\omega$-limit leaves the class (Theorem~\ref{thm:pi2}); for a single $\Phi$ there is nothing beyond, since $\Lam^\omega_\Phi$ is already a fixed point (Proposition~\ref{prop:coind}), and which families $\langle\Phi_k\rangle$ attain higher jumps is open. Lu \cite{Lu26} proves the corresponding separation for capability rather than for preservation: finite internal self-modification stays inside the layer $\{B : B \le_T A\}$ of its oracle, and stabilised revision is governed by the jump $A'$ via the relativised limit lemma; Proposition~\ref{prop:upper} and Theorem~\ref{thm:pi2} are the analogous statements for the question whether a property is preserved.

\subsection{Several triggers}

\begin{oprob}\label{prob:multi}
Define operators with a sequence $\langle a_k\rangle$ of $\Sig1$ triggers and a polarity for each, and determine whether they admit a normal form analogous to Theorem~\ref{thm:main}: is the elevated property determined by $\langle(a_k,\pi_k)\rangle$, and what hierarchy does the number of triggers (finite, $\omega$, a computable ordinal) induce on the complexity of $E^{-1}(P)$?
\end{oprob}

Remark~\ref{rem:ershov} delimits the answer: composition does not rise above $\Sig1\cup\Pii1$, persistence over finite trajectories does not rise above level $2$ of the Ershov hierarchy, and for $P=\TOT$ every $k$-r.e.\ set is $E^{-1}(\TOT)$ of an operator with $k$ nested triggers. We conjecture that the finite levels of the Ershov hierarchy describe exactly the class with finitely many triggers, and that the $\omega$-iterations of Section~\ref{sec:closure} are the first step outside $\Dl2$.

\subsection{The anchor}

\begin{conj}[naturality of $K$]\label{conj:K}
Every mechanism of semantic governance defined without a priority construction that satisfies (a) and (b') of Corollary~\ref{cor:abc} is anchored to $K$; equivalently, no natural axis requires the generalised anchor of Remark~\ref{rem:weak}.
\end{conj}

The conjecture records that in all the axes examined the trigger reduces from $K$ without any trick, while intermediate r.e.\ degrees have historically required priority methods; one way to make it precise is to require $S_a$ to be creative in Myhill's sense.

\subsection{Two refuted characterisations, and the problem}\label{sec:refuted}

The expressivity criterion has a weak biconditional that is almost a reformulation: a total computable $E$ satisfies $\varphi_{E(x)}\in\Pt\iff\Str_a(x)$ (or $\iff\neg\Str_a(x)$) for some $a$ and every $x$ if and only if $E^{-1}(P)$ is $\Sig1$ (or $\Pii1$). All the content of the framework ---the non-behaviourality of $\Lam_E(P)$, the reduction inside a single fibre $\{x:\varphi_x=\varphi_t\}$--- lives in the inertness of the wrapper, and the natural question is whether inertness has an intrinsic characterisation: \emph{is every $K$-hard intensional operator a semantic elevation operator?} The answer is no, for two reasons worth exhibiting because they delimit what can be expected.

\emph{Passive hardness.} Let $\emptyfn\notin\Pt$, $x_0\in P$ and $c$ an index of $\emptyfn$; set $E(x):=x$ for $x\ne x_0$ and $E(x_0):=c$. Then $\Lam_E(P)=P\setminus\{x_0\}\equiv_m P$, $K$-hard; it is not behavioural ($x_0$ has another index of the same function); but no inert wrapper witnesses the hardness: on every family $w$ with $\varphi_{w(e)}=\psi$ constant, $w^{-1}(\Lam_E(P))$ is $\emptyset$ or $\{e:w(e)\ne x_0\}$, which is decidable. The hardness is inherited from $P$ across the fibres, and the intensional action is decidable fibre by fibre.

\emph{Mixed polarity.} Let $P$ be a property admitting both polarities with base $\emptyfn$ ($\TOT$ or $\{\mathrm{id}\}$: $(\mathrm{H_A})$ and $(\mathrm{H_B})$ hold with $g=f=\emptyfn$). Let $w_{\mathrm A},w_{\mathrm B}$ be two inert wrappers of $t$ with disjoint decidable ranges, the same trigger $a$, and $E$ equal to $E^{\mathrm B}_{t,a,\emptyfn}$ on the range of $w_{\mathrm B}$ and to $E^{\mathrm A}_{t,a,\emptyfn}$ elsewhere. $E$ has an anchor and reacts to a single event, and $E^{-1}(P)$ is a symmetric difference of $S_a$ with a recursive set: neither $\Sig1$ nor $\Pii1$, and $E\notin\mathcal{E}(P)$ by Theorem~\ref{thm:main}(ii). Having an inert wrapper and being $K$-hard is not enough: uniformity of the polarity is needed.

\begin{oprob}[fibre-by-fibre characterisation]\label{prob:inert}
For a total computable $E$ and each fibre $F_g=\{x:\varphi_x=g\}$, let $E_g:F_g\to\N$ be the restriction of $E$. Is the existence of an anchored normal form (Definition~\ref{def:anf}) characterisable in terms of the family $\{E_g^{-1}(P)\cap F_g : g\in\PC\}$? The two examples above refute the two simplest candidates (``$\Lam_E(P)$ is $K$-hard and not behavioural''; ``$E$ has an inert anchor''). A third candidate ---a fibre on which $E_g^{-1}(P)$ is $\Sig1$-complete or $\Pii1$-complete, with the same polarity on every fibre where it is non-trivial--- is refuted too: take $P=\TOT$, $t$ an index of the zero function, $E$ equal to the instrumented rewrite of Section~\ref{sec:rinst} (polarity A) on the range of an inert wrapper $w$ of $t$, and, outside it, $E(x)$ := ``compute $\varphi_x(z)$; diverge if the value is $5$, return it otherwise''. Every fibre other than $F_t$ is mapped entirely into $\TOT$ or entirely out of it, the trace on $F_t$ is $F_t\cap R$ with $R$ r.e.\ and $K$-hard, and yet $\Lam_E(\TOT)$ is neither $\TOT\cap S$ nor $\TOT\setminus S$ for any $\Sig1$ set $S$: on $x_e:=\lambda z.[5 \text{ if } \varphi_e(e) \text{ halts within } z \text{ steps, } 0 \text{ otherwise}]$, outside the range of $w$, membership is $e\notin K$, and on $w(e)$ it is $e\in K$. The polarity is mixed, but hidden in the fibres on which $E$ acts trivially; a characterising condition must therefore constrain those fibres too, and we have neither a candidate that survives this example nor a proof that none exists.
\end{oprob}

Proposition~\ref{prop:complete} already shows that the definition is complete up to the elevated property: every mechanism with an anchored normal form can be replaced by a canonical one. What the problem asks for is different, a characterisation of anchored normal forms by the action of the mechanism itself. The three examples show where the condition has to look: inside each fibre (that the action is not decidable fibre by fibre), across the non-trivial fibres (that the polarity is the same), and across the trivial ones (that the extensional part of $E$ does not carry the opposite polarity).

\subsection{Proof-theoretic strength}

Proposition~\ref{prop:proof} says that each sound r.e.\ theory $T$ certifies an r.e.\ part of the preserving systems and leaves a non-r.e.\ part uncovered, and that strengthening $T$ moves the frontier without closing it. The natural transfinite object is not an iterate of a single transformation but a progression of theories $T_0=T$, $T_{\alpha+1}=T_\alpha$ plus a reflection principle, unions at limits, along ordinal notations \cite{Tur39,Fef62}. Over arbitrary notations the coverage question has no content: for $P\in\Pii1$ the sentences $\sigma_x$ can be taken $\Pii1$, Turing's completeness theorem gives for each $x\in\Lam_E(P)$ a notation $a_x$ for $\omega+1$ with $T_{a_x}\vdash\sigma_x$, and along any single r.e.\ path Proposition~\ref{prop:proof} applies unchanged.

\begin{oprob}\label{prob:coverage}
For $P\in\Pii1$ and an anchor verifiable in elementary arithmetic, is the part of $\Lam_E(P)$ that $T$ certifies exactly the part certified by elementary arithmetic plus consistency iterated, along a natural notation system, below the $\Pii1$-ordinal of $T$ in the sense of Beklemishev \cite{Bek03}? If so, that ordinal ($\varepsilon_0$ for PA) measures how much preservation $T$ certifies.
\end{oprob}

The reflective case ---a system certifying its successor inside the theory it reasons in--- is the L\"obian obstacle of the tiling-agents programme \cite{YH13}; whether it can be stated as an operator with its own base (Section~\ref{sec:axes}) is open. Gentzen's cut elimination is the control case: it preserves the computed function, falls into semantic stasis, and has no anchor.

\subsection{A categorical reading}\label{sec:cat}

The results are recursion-theoretic and invoke no category. In Hyland's effective topos \cite{Hyl82}, where ``decidable'' is ``factors through $2$'' and ``r.e.'' is ``factors through $\Sigma$'', the pullback lemma is literally a pullback, the normal form says that the operators of $\mathcal{E}(P)$ modify $\chi_P$ by a single proposition of $\Sigma$, and the anchor axiom is a pointwise surjection onto $\Sigma$, which is where Lawvere's fixed-point theorem \cite{Law69,Yan03} should appear as the explanation of non-factorisation through $2$. Establishing this ---and the correspondence between the effective non-realisability of $\mathrm{gfp}(F_P)$ (Proposition~\ref{prop:coind}) and the $\Pii2$-completeness of Theorem~\ref{thm:pi2}--- is left as a programme; the recursively defined $\U$ is only included in, not equal to, the categorically defined one, since a simple set does not factor through $2$ and is no $\Lam_\Phi(P)$.

\section*{Acknowledgement}
The structuring, the review of mathematical correctness, the drafting in formal language and the orthotypographic review of this document were carried out with the assistance of the artificial intelligence model Claude (Anthropic, 2026). The definitions, the results and the decisions on content are the author's, who takes full responsibility for them.

\bibliographystyle{alphaurl}
\bibliography{refs}

\appendix
\section{Mechanisms from the literature}\label{app:mech}

The following mechanisms, proposed independently, are members of $\mathcal{E}_{\mathrm A}$ or $\mathcal{E}_{\mathrm B}$ (or monitors in the sense of Proposition~\ref{prop:mon}); the framework proves nothing new about them, but it classifies them and identifies their invariant $(a,\pi)$.
\begin{itemize}[leftmargin=2em]
\item \emph{Runtime-verification monitors} for safety properties (Alpern--Schneider \cite{AS85}, Schneider \cite{Sch00}): $\mathcal{E}_{\mathrm B}$ by augmentation. The base executes unaltered and, on detecting a violating prefix, the system enters an error state; $a$ is the prefix detector.
\item \emph{The G\"odel Machine} \cite{Sch03}: $\mathcal{E}_{\mathrm A}$. The base executes and, when the search finds a proof that the new code is better, it switches; $a$ is $\exists p\,\mathrm{Prf}(p,\text{``improvement''})$. The trigger is $\Sig1$; whether it is moreover anchored to $K$ depends on how the improvement condition is formulated, and we do not claim it in general.
\item \emph{Levin's universal search}: $\mathcal{E}_{\mathrm A}$ with $a$ = ``some candidate passes the test''.
\end{itemize}
None of these remarks replaces an analysis of the mechanisms; they only fix that the property they elevate is $P\cap S_a$ or $P\setminus S_a$, with the complexity of Theorem~\ref{thm:main}.

\begin{table}[htbp]
\centering\footnotesize
\begin{tabular}{@{}>{\raggedright\arraybackslash}p{2.1cm}>{\raggedright\arraybackslash}p{1.6cm}>{\raggedright\arraybackslash}p{2.2cm}>{\raggedright\arraybackslash}p{2.3cm}>{\centering\arraybackslash}p{1.1cm}>{\centering\arraybackslash}p{1.5cm}>{\raggedright\arraybackslash}p{1.8cm}@{}}
\toprule
Mechanism & Base system & Property $P$ (class) & Trigger $S_a$ & $\pi$ & Anchor & Status \\
\midrule
Instrumented rewriting ($\Rinst$, Section~\ref{sec:rinst}, Prop.~\ref{prop:func}) & program $\varphi_x$ & $\varphi_x\in\Pt$, $\emptyfn\notin\Pt$ (e.g.\ $\{\mathrm{id}\}$: $\Pii2$) & $\varphi_{\delta(x)}(\delta(x))\!\downarrow$ & A & yes & instance \\
Consistency supervision (Prop.~\ref{prop:ded}) & deductive system $W_x$ & $\bot\notin W_x$ ($\Pii1$) & $\bot$ appears & B (aug.) & yes & instance \\
Observational equivalence (Prop.~\ref{prop:rel}) & pair $\langle x,y\rangle$ & agreement on common domain ($\Pii1$) & $\varphi_{\delta(x)}(\delta(x))\!\downarrow$, confined to the $\mathrm{id}_e$ & B & yes & own base; form by confinement \\
Limit elevation $\Lam^\omega_\Phi$ (Theorem~\ref{thm:pi2}) & trajectory $\Phi^k(x)$ & $\varphi(0)\!\downarrow$ at every step ($\Sig1$) & infinitely many ($\forall k\exists s$) & --- & --- & outside $\mathcal{E}$: $\Pii2$ \\
\midrule
RV monitor (Alpern and Schneider; Schneider) & program & safety over traces ($\Pii1$) & violating prefix & B (aug.) & construct\-ible & illustration \\
G\"odel Machine & base program & ``proved improvement'' & $\exists p\,\mathrm{Prf}(p,\cdot)$, a $\Sig1$ set & A & not claimed & illustration \\
Levin search & candidates in parallel & ``passes the test'' & a candidate passes & A & not claimed & illustration \\
\midrule
Diophantine (Matiyasevich) & --- & --- & $\exists\vec n\,(p(\vec n)=0)$ & --- & --- & trigger only \\
Word problem (Novikov--Boone) & --- & --- & $w_1=_G w_2$ & --- & --- & trigger only \\
\bottomrule
\end{tabular}
\caption{Mechanisms examined and their status with respect to Definition~\ref{def:oes}. ``Instance'': verified in the text, with inert wrapper, anchor to $K$ and normal form. ``Own base; form by confinement'': inert wrapper and anchor; the normal form holds because the trigger is confined to systems that can leave the property (Proposition~\ref{prop:rel}). ``Illustration'': fits the form, without the anchor having been verified in general. ``Trigger only'': a $\Sig1$-complete set with no base system and no wrapper (remark on classical triggers). Limit elevation appears by contrast: it depends on infinitely many events and has no normal form with a single trigger (Remark~\ref{rem:fora}).}
\label{tab:mecanismes}
\end{table}

\end{document}